\documentclass[a4paper, 11pt]{article}
\usepackage[top=1.0 in,bottom=1.0 in,left=1.0 in,right=1.0 in]{geometry}
\usepackage{amsmath,amsthm}
\usepackage{amssymb}
\usepackage[colorlinks,linkcolor=blue,citecolor=blue]{hyperref}
\usepackage{float}
\usepackage{cite}
\usepackage{epic}
\usepackage{tikz}
\usepackage{graphicx}
\usepackage{accents}
\usepackage{slashed}

\newtheorem{lemma}{Lemma}[section]
\newtheorem{theorem}{Theorem}[section]

\numberwithin{equation}{section}
\newtheorem{remark}{Remark}[section]

\makeatletter

\newcommand{\Rmnum}[1]{\expandafter\@slowromancap\romannumeral #1@} \makeatother

\title{The Lam\'e functions and elliptic soliton solutions: Bilinear approach}

\author{Xing Li$^{1}$, ~~ Da-jun Zhang$^{2}$\\
${}^1$
  School of Mathematics and Statistics, Jiangsu Normal University,\\
   Xuzhou, Jiangsu 221116, P.R. China\\
 ${}^2$
  Department of Mathematics,
		Shanghai University,\\
 Shanghai 200444,  P.R. China\\
 E-mail: xing$\underline{~}$li@jsnu.edu.cn \quad
djzhang@staff.shu.edu.cn\\
}
\date{}

\begin{document}

\maketitle

\begin{abstract}
 The Lam\'e function can be used to construct plane wave factors and  solutions to the Korteweg-de Vries (KdV) and Kadomtsev-Petviashvili (KP) hierarchy. The solutions are usually called elliptic solitons.
 In this chapter, first, we review recent development in the Hirota bilinear method on elliptic solitons
 of the KdV equation and KP equation, including bilinear calculations involved with  the Lam\'e type plane wave factors,
 expressions of $\tau$ functions and the generating vertex operators.
 Then, for the discrete potential KdV and KP equations,
 we give their bilinear forms, derive $\tau$ functions of elliptic solitons,
 and show that they share the same vertex operators with the
 KdV hierarchy and the KP hierarchy, respectively.

\vskip 6pt

\noindent
\textbf{Key Words:}  Lam\'e function, Weierstrass function,  elliptic soliton solution,  $\tau$ function, vertex operator.
\end{abstract}

\tableofcontents

\vskip 20pt
\section{Introduction}

The Sato theory of integrable systems, established by Sato and his collaborators from Kyoto group,
provides deep insight into integrable systems (see \cite{MJD-book-1999} and the references therein).
The Kadomtsev-Petviashvili (KP) hierarchy  serves as a fundamental role in Sato's theory,
of which the  solutions form an infinite-dimensional Grassmann manifold.
The Pl\"ucker relation of the manifold corresponds to the bilinear  equations
satisfied  by the so-called $\tau$ function, which can also be defined using vertex operators.

Elliptic curves may play roles in integrable systems, either as elliptic type solutions
or as elliptic deformations of the equations themselves.
Both ways bring richer insight to integrable systems.
In order to solve the problem of periodic boundary condition
of the Korteweg-de Vries (KdV) equation
 \begin{equation}\label{eq:kdv}
u_t=\frac{3}{2}uu_x+\frac{1}{4}u_{xxx},
\end{equation}
Novikov, Matveev, et al
established the  finite-gap integration theory (see \cite{Matveev-08} and the references therein).
Considering the Lax pair of the KdV equation  \eqref{eq:kdv}:
 \begin{subequations}\label{eq:lax-kdv}
\begin{align}\label{eq:lame}
 \varphi_{xx}&=(\lambda-u)\varphi ,\\
 \varphi_{t}&=\mathbf{\varphi}_{xxx}+\frac{3}{2}u\varphi_{x}+\frac{3}{4}u_x\varphi,
\end{align}
\end{subequations}
where $u(x,0)$ is assumed to be a periodic function of $x$,
the spectrum $\lambda$ generally corresponds to the values of
a finite sequence of segments on the real line,
and the recovered $u(x,t)$ is  called finite-gap solution.
In particular, $u=-2\wp(x)$ is called a one-gap solution, where $\wp$ is the Weierstrass $\wp$  function,
which is elliptic.
With an elliptic potential $u=-2\wp(x)$ and $\lambda=\wp(k)$,
\eqref{eq:lame} is also known as  a Lam\'e equation \cite{Ince-1940},
whose two linear independent solutions,  known as the Lam\'e functions, are $\Psi_x(k)$ and $\Psi_x(-k)$,
where
\begin{equation}\label{Phi}
  \Psi_x(k)=\Phi_x(k)e^{-\zeta(k)x},\qquad \Phi_x(k)=\frac{\sigma(x+k)}{\sigma(x)\sigma(k)}.
\end{equation}
Here $\sigma$ and $\zeta$ are the Weierstrass  functions $\sigma$ and $\zeta$.
Inserting time evolution, the Lam\'e function  can be written as
 \begin{subequations}\label{PFW}
 \begin{equation}
  \varphi(x,t;k)=a^+\varphi^+(x,t;k) +a^-\varphi^-(x,t;k),
 \end{equation}
where
\begin{equation}\label{phi-j-gamma}
\varphi^{\pm}(x,t;k)=\Phi_{x}(\pm k) e^{\mp \gamma(k)},~~ \gamma(k)=\zeta(k)x-\frac{1}{2}\wp'(k)t+\gamma^{(0)}(k), ~~\gamma^{(0)}(k)\in\mathbb{C}.
\end{equation}
  \end{subequations}
In this chapter we call \eqref{PFW} a Lam\'e type plane wave factor (PWF), and it can be regarded as the wave function of the KdV equation \eqref{eq:kdv}.
In 1974, for the KdV equation, Kuznetsov and Mikhailov, using the Marchenko integral equation,
first derived the  multisoliton solutions composed by the above PWF \cite{74KM}.
In the discrete case, Nijhoff and Atkinson \cite{IMRN2010} developed a Cauchy matrix approach to
the Adler-Bobenko-Suris \cite{ABS-2003} lattice equations (except Q4 equation)
which are the discrete KdV type.
In their approach they use discrete (analogue of) Lam\'e type PFWs
and the obtained solutions are  termed as  ``elliptic $N$-soliton solutions''.
Later, their approach was applied to the lattice (potential) KP equation \cite{YN-JMP-2013}.
We will follow this name of \textit{elliptic solitons} in our paper.
More recently, a remarkable progress of such type solutions
is the establishment of an elliptic scheme of the direct linearisation approach \cite{22NSZ},
in which the concept of elliptic $N$-th roots of unity was introduced to define PFWs
of the discrete Boussinesq type equations as well as implementing dimension reductions.
Another progress is about the bilinear framework related to the Lam\'e type PFWs \cite{22LxZhang}.
We explored some new properties of bilinear derivation calculation when  the usual exponential PWFs
are replaced with the Lam\'e type ones. Based on these properties,
$\tau$ functions, vertex operators and bilinear identities of the KdV and KP hierarchies were constructed,
 dimension reduction and period degeneration were also investigated \cite{22LxZhang}.

In this chapter, we will extend our investigation to the discrete case,
and derive the $\tau$ functions and vertex operators for the elliptic solitons of
the discrete potential KdV equation and KP equation.
This chapter is organized as follows.
After providing preliminaries in Sec.\ref{sec-2},
we will, in Sec.\ref{sec-3},  partially review the results about the $\tau$ functions and vertex operators for the elliptic solitons of the KdV equation and KP equation.
Then in Sec.\ref{lkdv}, we will verify the elliptic soliton solution in Casoratian form for the lattice potential KdV equation,  and further we will discuss  the $\tau$ function and vertex operator through the so-called
Miwa coordinates correspondence. The relevant results of the lattice potential KP equation will be sketched in Sec.\ref{lkp}.

\section{Preliminary}\label{sec-2}

Let us recall the Weierstrass functions, which are $\zeta(z)$, $\wp(z)$ and $\sigma(z)$.
Among them only $\wp(z)$ is  elliptic,
i.e. meromorphic and doubly periodic.
$\wp(z)$ is even, while the other two are odd and quasi-doubly periodic.
The periodic behaviors are the following:
\begin{subequations}\label{periodicity}
\begin{align}
&\wp(z+2w_i)=\wp(z), \quad i=1,2, \\
&\zeta(z+2 w_i)=\zeta(z)+2 \zeta(w_i), \\
&\sigma(z+2 w_i)=-\sigma(z)e^{2\zeta(w_i)(z+w_i)}.
\end{align}
\end{subequations}
$w_1$ and $w_2$ are half periods of $\wp(z)$ and satisify $\mathrm{Im}\frac{w_1}{w_2}\neq 0$.
The two periods $2w_1$ and $2w_2$ form the periodic lattice $\mathbb{D}$, and $\wp(z)$ is well-defined on the torus $\mathbb{C/D}$.
Let $e_i=\wp(w_i)$ for $i=1,2,3$, where $w_3=w_1+w_2$.
$(\wp(z), \wp'(z))$ is a point on the Weierstrass elliptic curve
\begin{subequations}
\begin{equation}\label{ell-cur}
y^2=4x^3-g_2x-g_3=4(x-e_1)(x-e_2)(x-e_3),
\end{equation}
i.e.
\begin{equation}\label{ell-cur-2}
(\wp'(z))^2=4\wp^3(z)-g_2\wp(z)-g_3,
\end{equation}
\end{subequations}
where
$g_2=-4(e_1e_2+e_2e_3+e_3e_1)$ and $g_3=4e_1e_2e_3$ are invariants of the curve.
This relation is often  used when calculating higher order derivatives of $\wp(z)$.
Three Weierstrass functions are connected via
\[\zeta(z)=\frac{\sigma'(z)}{\sigma(z)}, \qquad  \wp(z)=-\zeta'(z).\]
In addition, the basic addition rules are given below:
\begin{equation}\label{eq:add-1}
 \wp(z)-\wp(u)=-\frac{\sigma(z+u)\sigma(z-u)}{\sigma^2(z)\sigma^2(u)}=\Phi_{z}(u)\Phi_z(-u),
\end{equation}
where $\Phi_z(u)$ is defined in \eqref{Phi},
\begin{equation}\label{eq:add-2}
\eta_u(z)=\zeta(z+u)-\zeta(z)-\zeta(u)=\frac{1}{2}\frac{\wp'(z)-\wp'(u)}{\wp(z)-\wp(u)},
\end{equation}
\begin{equation}\label{eq:add-3}
\wp(z)+\wp(u)+\wp(z+u)=\eta_u^2(z),
\end{equation}
\begin{equation}\label{eq:add-4}
\chi_{u,v}(z)=\zeta(u)+\zeta(v)+\zeta(z)-\zeta(u+v+z)
=\frac{\sigma(u+v)\sigma(u+z)\sigma(z+v)}{\sigma(u)\sigma(v)\sigma(z)\sigma(z+u+v)},
\end{equation}
and
\begin{equation}\label{eq:add-5}
		\begin{split}
    & \sigma(x+a)\sigma(x-a)\sigma(y+b)\sigma(y-b)- \sigma(x+b)\sigma(x-b)\sigma(y+a)\sigma(y-a)\\
	 &= \sigma(x+y)\sigma(x-y)\sigma(a+b)\sigma(a-b).
		\end{split}
\end{equation}
The famous Frobenius-Stickelberger determinant (also known as elliptic van der Monde determinant) is \cite{FS80}
\begin{equation}\label{eq:FS-1}
	\begin{split}
	&|\mathbf{1},~ \wp(\mathbf{k}),~ \wp'(\mathbf{k}),~ \wp''(\mathbf{k}),~ \cdots,~\wp^{(n-2)}(\mathbf{k}) |\\
	=&(-1)^{\frac{(n-1)(n-2)}{2}}\left(\prod^{n-1}_{s=1}s!\right) \frac{\sigma(k_1+\cdots+k_n)
        \prod_{i<j}\sigma(k_i-k_j)}{\sigma^n(k_1)\sigma^n(k_2)\cdots\sigma^n(k_n)},
    \end{split}
\end{equation}
where $\mathbf{1}$ denotes a n-th order  column vector $(1, 1, \cdots, 1)^T$,
and $f{(\mathbf{k})}$ denotes a column vector with entries $f(k_j)$,
i.e.  $f{(\mathbf{k})}=(f(k_1), f(k_2), \cdots, f(k_n))^T$.
With a limit process, it follows from   \eqref{eq:FS-1} that (cf.\cite{Kiepert-1873})
\begin{equation}\label{eq:FS-2}
\begin{split}
(-1)^{n-1}\left(1!2!\cdots(n-1)!\right)^2\frac{\sigma(n\nu)}{\sigma(\nu)^{n^2}}=\left|\begin{array}{cccc}
 \wp'(\nu) & \cdots  & \wp^{(n-1)}(\nu)\\
\wp''(\nu) & \cdots  & \wp^{(n)}(\nu)\\
\vdots   & \vdots  & \vdots \\
 \wp^{(n-1)}(\nu) & \cdots  & \wp^{(2n-3)}(\nu)
\end{array}\right|.
\end{split}
\end{equation}
Other  useful formulas are (see (C.7) in \cite{ND-2016} and \cite{FS80})
\begin{equation}\label{eq:ND16}
\prod_{j=1}^n\Phi_x(k_j)=\frac{(-1)^{n-1}}{(n-1)!}\Phi_x(k_1+\cdots+k_n)
\frac{|\mathbf{1},~ \wp(\mathbf{k}),~ \wp'(\mathbf{k}),~ \cdots,~\wp^{(n-2)}(\mathbf{k}) |}
{|\mathbf{1}, ~\eta_x(\mathbf{k}),~ \wp(\mathbf{k}),~ \wp'(\mathbf{k}),~ \cdots,~\wp^{(n-3)}(\mathbf{k}) |},
\end{equation}
and
\begin{equation}\label{eq:add-6}
\begin{split}
n!\frac{\sigma(\nu-\mu)^{n}\sigma(\mu+n\nu)}{\sigma(\mu)^{n+1}\sigma(\nu)^n\sigma(n\nu)}&=\frac{\left|\begin{array}{cccc}
\wp(\mu)-\wp(\nu) & \wp'(\nu) & \cdots  & \wp^{(n-1)}(\nu)\\
\wp'(\mu)-\wp'(\nu) & \wp''(\nu) & \cdots  & \wp^{(n)}(\nu)\\
\vdots &   & \vdots  & \vdots \\
\wp^{(n-1)}(\mu)-\wp^{(n-1)}(\nu) & \wp^{(n)}(\nu) & \cdots  & \wp^{(2n-2)}(\nu)
\end{array}\right|}
{\left|\begin{array}{cccc}
 \wp'(\nu) & \cdots  & \wp^{(n-1)}(\nu)\\
\wp''(\nu) & \cdots  & \wp^{(n)}(\nu)\\
\vdots   & \vdots  & \vdots \\
 \wp^{(n-1)}(\nu) & \cdots  & \wp^{(2n-3)}(\nu)
\end{array}\right|}.
\end{split}
\end{equation}

Next, we introduce the Hirota operator $D$ \cite{74Hirota}
\begin{equation}\label{eq:hirota-D}
\begin{split}
e^{aD_x+bD_t}f(x,t)\cdot g(x,t)&=e^{a(\partial_x-\partial_{x'})+b(\partial_t-\partial_{t'})}f(x,t)\cdot g(x',t')|_{x'=x,t'=t},\\
&=f(x+a,t+b)g(x-a,t-b),
\end{split}
\end{equation}
where $a,b$ are arbitrary numbers, and $f$, $g$ are sufficiently smooth function of $(x,t)$.
Implementing Taylor expansion at $(a,b)=(0,0)$  and comparing  the coefficients of $a^nb^m$, we can obtain the equivalent definition of Hirota's operator (which is also known as Hirota's bilinear derivative operator)
\begin{equation}
D_x^nD_t^m f(x,t)\cdot g(x,t)=(\partial_x-\partial_{x'})^n(\partial_t-\partial_{t'})^mf(x,t)g(x',t')|_{x'=x,t'=t}.
\end{equation}
For $F$  being a polynomial of $D_x, D_t$, a bilinear derivative equation is defined by
\begin{equation}\label{eq:HB-cont}
F(D_x,D_t)f(x,t)\cdot g(x,t)=0.
\end{equation}
A discrete  analogue of the above equation
is defined by \cite{09HZ}
\begin{equation}\label{eq:HB-disc}
\sum_{j}c_jf_j(n+\nu_j^+,m+\mu_j^+)g_j(n+\nu_j^-,m+\mu_j^-)=0,
\end{equation}
 where $\nu_j^+ +\nu_j^-=\nu^s$,  $\mu_j^+ +\mu_j^-=\mu^s$, and $\nu^s$ and $\mu^s$ are independent of $j$.

 The fundamental property of the Hirota bilinear form is the  gauge property.
\begin{lemma}\label{L-gauge}
Let
\begin{equation}\label{f'g'}
f'(x,t)=e^{ax+bt}f(x,t),\quad g'(x,t)=e^{ax+bt}g(x,t).
\end{equation}
Then it holds that
 \begin{equation*}
D_x^nD_t^m f'(x,t)\cdot g'(x,t)=e^{2(ax+bt)}D_x^nD_t^mf(x,t)\cdot g(x,t).
 \end{equation*}
\end{lemma}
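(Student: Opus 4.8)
The plan is to prove this directly from the definition of the Hirota bilinear operator given in equation (2.18), using the gauge functions to factor out the exponential. The key observation is that the exponential prefactor $e^{ax+bt}$, being a product of exponentials in the independent variables, interacts with the difference operators $(\partial_x - \partial_{x'})$ and $(\partial_t - \partial_{t'})$ in a controlled way. Let me describe the approach.

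First I would apply the second form of the Hirota definition in (2.17), namely
\begin{equation*}
D_x^n D_t^m f' \cdot g' = (\partial_x - \partial_{x'})^n (\partial_t - \partial_{t'})^m f'(x,t)\, g'(x',t') \big|_{x'=x,\, t'=t}.
\end{equation*}
Substituting the definitions from (2.19), the product $f'(x,t)g'(x',t')$ becomes $e^{ax+bt} e^{ax'+bt'} f(x,t)g(x',t')$. The crucial point is that the combined exponential factors as $e^{a(x+x')+b(t+t')}$, which depends on $x,x'$ only through their \emph{sum} $x+x'$ and on $t,t'$ only through $t+t'$. Since the operators $\partial_x - \partial_{x'}$ and $\partial_t - \partial_{t'}$ annihilate any function of $x+x'$ and $t+t'$ respectively, this exponential prefactor behaves as a constant with respect to these difference operators.

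The cleanest way to make this rigorous would be to use the translation form of the operator, namely the first line of (2.16): $e^{aD_x + bD_t} f'\cdot g' = f'(x+a,t+b)\,g'(x-a,t-b)$. Applying this to the gauged functions gives
\begin{equation*}
f'(x+a,t+b)g'(x-a,t-b) = e^{a(x+a)+b(t+b)} e^{a(x-a)+b(t-b)} f(x+a,t+b)g(x-a,t-b),
\end{equation*}
where the two exponential factors multiply to $e^{2(ax+bt)}$, since the $a^2$, $b^2$, and cross terms cancel between the two arguments. I would then Taylor-expand both sides at $(a,b)=(0,0)$ and compare coefficients of $a^n b^m$; because $e^{2(ax+bt)}$ is independent of the bookkeeping variables, it passes through the comparison intact, leaving exactly $e^{2(ax+bt)} D_x^n D_t^m f \cdot g$ on the right. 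This reduces the lemma to the already-established equivalence of the two definitions of the Hirota operator.

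I do not anticipate a serious obstacle here; the statement is essentially a structural identity. The only point requiring a little care is confirming that the exponential of the sum variables is genuinely in the kernel of the difference operators, so that it can be pulled outside the full $n,m$-fold application; this follows because $(\partial_x - \partial_{x'})$ applied to any smooth $h(x+x')$ vanishes identically, and derivatives commute so the factor commutes past all $n+m$ applications. The bookkeeping of the cross terms $a^2$ and $b^2$ in the translation argument is the one place to verify signs, but these cancel cleanly, yielding the stated factor $e^{2(ax+bt)}$ rather than any residual $a,b$-dependence.
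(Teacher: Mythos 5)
The paper states Lemma \ref{L-gauge} without proof---it is Hirota's classical gauge-invariance property, quoted as a known fact---so there is no internal argument to compare against; your proposal has to stand on its own, and in substance it does. Your first route is complete and correct as sketched: writing $f'(x,t)g'(x',t')=e^{a(x+x')+b(t+t')}f(x,t)g(x',t')$, the prefactor depends only on the sum variables $x+x'$ and $t+t'$, so it is annihilated by $\partial_x-\partial_{x'}$ and $\partial_t-\partial_{t'}$; since these are first-order operators, the Leibniz rule then lets the prefactor commute past all $n+m$ applications, and setting $x'=x$, $t'=t$ turns it into $e^{2(ax+bt)}$, leaving exactly $D_x^nD_t^mf\cdot g$.

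Your second route, however, contains a notational clash you must repair. You recycle the letters $a,b$ both as the gauge constants of \eqref{f'g'} and as the expansion parameters of the translation operator $e^{aD_x+bD_t}$ in \eqref{eq:hirota-D}. Under that identification your key assertion---that $e^{2(ax+bt)}$ ``is independent of the bookkeeping variables'' and therefore passes through the coefficient comparison---is literally false: the factor manifestly depends on $a$ and $b$, and extracting the coefficient of $a^nb^m$ from $e^{2(ax+bt)}f(x+a,t+b)g(x-a,t-b)$ is not the same as multiplying $D_x^nD_t^mf\cdot g$ by $e^{2(ax+bt)}$. The fix is purely cosmetic: introduce fresh shift parameters $\lambda,\mu$, so that
\begin{equation*}
e^{\lambda D_x+\mu D_t}f'\cdot g'=f'(x+\lambda,t+\mu)\,g'(x-\lambda,t-\mu)
=e^{2(ax+bt)}f(x+\lambda,t+\mu)\,g(x-\lambda,t-\mu),
\end{equation*}
where now the canceling cross terms are $\pm a\lambda$ and $\pm b\mu$ rather than $\pm a^2$ and $\pm b^2$; the prefactor is then genuinely constant in the expansion variables, and comparing coefficients of $\lambda^n\mu^m$ gives the lemma. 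With that renaming both of your arguments are valid, and either one supplies the proof the paper omits.
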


This means the bilinear equation \eqref{eq:HB-cont} is invariant under the transformation \eqref{f'g'}, i.e.
$F(D_x,D_t)f'(x,t)\cdot g'(x,t)=0$.
In discrete case, equation \eqref{eq:HB-disc} is invariant with the transformation
 $$f_j'(n,m)=A^nB^mf_j(n,m), \quad g_j'(n,m)=A^nB^mg_j(n,m),$$
where $A$ and $B$ are constants.
The gauge property will be changed when the usual PWFs (linear exponential functions)  are replaced by
the Lam\'e-type PWFs.
We will present   details in the next section.

Solutions in Wronskian/Casoratian form allow direct verification
via Hirota bilinear equations. This is known as the  Wronskian/Casoratian technique \cite{FreN-1983}
(see also \cite{Zhang20}).
Let $\mathbf{f}$ be a $N$-th order column vector
\begin{equation}\label{vphi}
\mathbf{f}=(f_1,f_2,\cdots,f_N)^T,
\end{equation}
where $f_j=f_j(x,t)$.
A $N$-th order Wronskian is defined as
\[|~\mathbf{f}, ~ \partial_x \mathbf{f}, ~  \partial_x^2 \mathbf{f},  ~    \cdots,  ~  \partial_x^{N-1}\mathbf{f}|
=|0, ~ 1, ~ 2, ~ \cdots, ~ N-1|=|\widehat{N-1}|,
\]
where we  employ the conventional shorthand introduced in \cite{FreN-1983}.
For convenience we say the above Wronskian is generated by $\mathbf{f}$.
For a given discrete  column vector function $\mathbf{f}=\mathbf{f}(n,m,h)=(f_1(n,m,h),f_2(n,m,h),\cdots,$ $f_N(n,m,h))^T$, the Casorati matrix is constructed of such column vectors with different shifts.
A Casoratian, which is the determinant of the Casorati matrix,
 can be viewed as the discrete analogue of the Wronskian.
 We keep the shorthand notation
\begin{equation*}
|~\mathbf{f}, ~E^{\nu}\mathbf{f}, ~(E^{\nu})^2\mathbf{f}~\cdots,~(E^{\nu})^{N-1}\mathbf{f}|\equiv|\widehat{N-1}|_{[\nu]},
\end{equation*}
where the operators $E^{\nu}(\nu=1,2,3)$ are defined as $E^1\mathbf{f}=\widetilde{\mathbf{f}}=\mathbf{f}(n+1,m,h)$, $E^2\mathbf{f}=\widehat{\mathbf{f}}=\mathbf{f}(n,m+1,h)$, $E^3\mathbf{f}=\overline{\mathbf{f}}=\mathbf{f}(n,m,h+1)$.
The following special Pl\"ucker relation is often used in the Wronskian/Casoratian technique.
\begin{theorem}\label{P-C1}
\cite{FreN-1983} The  relation
	\begin{equation}
	|M,\mathbf{a},\mathbf{b}||M,\mathbf{c},\mathbf{d}|
-|M,\mathbf{a},\mathbf{c}| |M,\mathbf{b},\mathbf{d}|
+|M,\mathbf{a},\mathbf{d}| |M,\mathbf{b},\mathbf{c}|=0
	\end{equation}
holds, where $M$ is a $N\times(N-2)$ matrix, $\mathbf{a}, \mathbf{b}, \mathbf{c}$ and $\mathbf{d}$
are $N$-th order column vectors.
\end{theorem}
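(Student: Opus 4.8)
The plan is to read the three-term relation as the quadratic Pl\"ucker identity for the six maximal minors $p_{\mathbf x\mathbf y}:=|M,\mathbf x,\mathbf y|$ obtained by adjoining two of the vectors $\mathbf a,\mathbf b,\mathbf c,\mathbf d$ to the fixed block $M$, so that the goal becomes
\[
p_{\mathbf a\mathbf b}\,p_{\mathbf c\mathbf d}-p_{\mathbf a\mathbf c}\,p_{\mathbf b\mathbf d}+p_{\mathbf a\mathbf d}\,p_{\mathbf b\mathbf c}=0 .
\]
The first thing to notice is that the identity is \emph{quadratic} in these minors, so the naive device of bordering $[M,\mathbf a,\mathbf b,\mathbf c,\mathbf d]$ with two extra rows and expanding by Laplace cannot work: such an expansion produces only a \emph{linear} combination of the $p_{\mathbf x\mathbf y}$ with constant coefficients. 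One must instead use a construction that genuinely encodes the product structure.

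The approach I would take is to pass to a quotient in which all six minors become ordinary $2\times2$ determinants. Work in $V=\mathbb{C}^{N}$ with its fixed volume form, so that $p_{\mathbf x\mathbf y}=|M,\mathbf x,\mathbf y|$ is an alternating bilinear function of the pair $(\mathbf x,\mathbf y)$. If the columns of $M$ are linearly dependent, every $p_{\mathbf x\mathbf y}$ vanishes and the identity holds trivially; so assume $\operatorname{rank}M=N-2$ and set $W:=\operatorname{colspan}M$, a subspace of dimension $N-2$. Since $p_{\mathbf x\mathbf y}=0$ whenever $\mathbf x\in W$ or $\mathbf y\in W$ (a repeated-column vanishing), bilinearity gives $p_{\mathbf x+\mathbf w,\mathbf y}=p_{\mathbf x\mathbf y}$ for $\mathbf w\in W$, and likewise in the second slot. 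Hence $p$ descends to an alternating bilinear form on the two-dimensional quotient $V/W$.

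Because the space of alternating bilinear forms on a two-dimensional space is one-dimensional, there is a single scalar $\lambda$, independent of the vectors, with $p_{\mathbf x\mathbf y}=\lambda\,\det{}_2[\,\overline{\mathbf x},\overline{\mathbf y}\,]$, where $\overline{\mathbf x}$ is the image of $\mathbf x$ in $V/W\cong\mathbb{C}^{2}$ and $\det{}_2$ is the $2\times2$ determinant. Substituting this into the three products reduces the whole statement to $\lambda^{2}$ times the base case $N=2$,
\[
\det{}_2[\overline{\mathbf a},\overline{\mathbf b}]\,\det{}_2[\overline{\mathbf c},\overline{\mathbf d}]
-\det{}_2[\overline{\mathbf a},\overline{\mathbf c}]\,\det{}_2[\overline{\mathbf b},\overline{\mathbf d}]
+\det{}_2[\overline{\mathbf a},\overline{\mathbf d}]\,\det{}_2[\overline{\mathbf b},\overline{\mathbf c}]=0,
\]
which is a one-line polynomial identity in the eight coordinates of $\overline{\mathbf a},\overline{\mathbf b},\overline{\mathbf c},\overline{\mathbf d}$, verified by direct expansion. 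As an alternative, purely determinantal route one may apply the generalised Laplace expansion to the $2N\times2N$ determinant with two identical block rows $\bigl(\begin{smallmatrix}R\\ R\end{smallmatrix}\bigr)$, where $R=[M,\mathbf a,\mathbf b,M,\mathbf c,\mathbf d]$; its value is zero, and the repetition of the $M$ block forces every surviving term to pair $M$ with exactly two of $\mathbf a,\mathbf b,\mathbf c,\mathbf d$ on each side, collapsing the expansion to the three products above.

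The step I expect to be the crux is establishing that one and the \emph{same} scalar $\lambda$ governs all six minors simultaneously, i.e. that $p$ really is a single bilinear form pulled back from $V/W$ rather than six unrelated determinants; once this descent is in place the remaining computation is routine. In the determinantal alternative the corresponding difficulty migrates entirely into the sign and multiplicity bookkeeping of the Laplace expansion, namely checking that the $2^{N-2}$ ways of splitting the duplicated columns of $M$ reinforce rather than cancel and that the three resulting products carry the signs $+,-,+$.
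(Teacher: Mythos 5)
Your main argument is correct, and it is necessarily a different route from the paper's, because the paper gives no proof of this theorem at all: it states it with a citation to Freeman--Nimmo \cite{FreN-1983}, where the identity is obtained by the classical determinantal device (Laplace expansion of a vanishing $2N\times 2N$ determinant). Your descent argument is airtight as written: $p_{\mathbf{x}\mathbf{y}}=|M,\mathbf{x},\mathbf{y}|$ vanishes whenever either argument lies in $W=\operatorname{colspan}M$, hence descends to an alternating bilinear form on the two-dimensional quotient $V/W$; such forms are scalar multiples of the $2\times2$ determinant, so all six minors equal $\lambda$ times $2\times 2$ determinants with one common $\lambda$, and the statement becomes $\lambda^{2}$ times the elementary $N=2$ identity. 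The degenerate case $\operatorname{rank}M<N-2$ is handled, and the ``same $\lambda$'' worry you single out is exactly what the descent settles. What this buys over the classical proof is a conceptual reason for the three terms and the signs $+,-,+$: they are the $2\times2$ Pl\"ucker relation in disguise.

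Your alternative ``purely determinantal route'', however, is wrong as stated, and it fails at precisely the point you flagged: the $2^{N-2}$ splittings of the duplicated $M$-columns cancel instead of reinforcing. In $\det\bigl(\begin{smallmatrix}R\\ R\end{smallmatrix}\bigr)$ with $R=[M,\mathbf{a},\mathbf{b},M,\mathbf{c},\mathbf{d}]$, pair each surviving Laplace term (top index set $S$) with the term obtained by swapping which copy of a fixed $M$-column lies in $S$. The swap leaves the two minors unchanged up to repositioning signs $(-1)^{t}$ and $(-1)^{t'}$, where $t+t'=N-1$ counts the positions strictly between the two copies, and it changes the Laplace sign by $(-1)^{N}$; the whole term therefore changes by $(-1)^{N+t+t'}=(-1)^{2N-1}=-1$, so the surviving terms cancel in pairs and the expansion of $0=\det\bigl(\begin{smallmatrix}R\\ R\end{smallmatrix}\bigr)$ reads $0=0$ for every $N\geq 3$. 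Concretely, for $N=3$ with $M=(\mathbf{m})$, the set $S=\{1,2,3\}$ contributes $+|\mathbf{m},\mathbf{a},\mathbf{b}|\,|\mathbf{m},\mathbf{c},\mathbf{d}|$ while $S=\{2,3,4\}$ contributes $-|\mathbf{m},\mathbf{a},\mathbf{b}|\,|\mathbf{m},\mathbf{c},\mathbf{d}|$. The classical construction avoids this by staggering zero blocks,
\begin{equation*}
\det\begin{pmatrix} M & \;\mathbf{a}\;\;\mathbf{b}\;\;\mathbf{c}\;\;\mathbf{d}\; & 0\\ 0 & \;\mathbf{a}\;\;\mathbf{b}\;\;\mathbf{c}\;\;\mathbf{d}\; & M\end{pmatrix}=0,
\end{equation*}
which forces every nonzero Laplace term to put the entire top copy of $M$ into the top minor and the entire bottom copy into the bottom minor, leaving exactly the six wanted products with the signs $+,-,+$. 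Since you present the determinantal route only as an aside, your proof stands on the quotient argument alone; delete the alternative or replace it with the staggered construction.
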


\section{Elliptic  \texorpdfstring{$\tau$}{}  functions of the KdV equation and the KP equation}\label{sec-3}

\subsection{\texorpdfstring{$\tau$}{} function of KdV equation}\label{kdv}
In this subsection, we  take the  KdV equation  \eqref{eq:kdv} as a demonstration and  reproduce here some results in \cite{22LxZhang}.
It is notable that, in continuous case, the obtained solutions are no longer elliptic, but still doubly periodic
with respect to some parameters.
We just follow \cite{IMRN2010} and use the term \textit{elliptic  soliton solutions}.

Taking $u=v_x$ we   get the potential KdV equation
 \begin{equation*}
   v_t-\frac{3}{4}v_x^2-\frac{1}{4}v_{xxx}=0,
 \end{equation*}
  which can be bilinearized as
 \begin{equation}\label{bilinear-a}
(D_x^4-4D_xD_t-12\wp(x)D_x^2)f\cdot f=0,
\end{equation}
 by employing the transformation
 \begin{equation}\label{eq:v}
 v =2\zeta(x)+\frac{1}{4}g_2t+2(\ln f)_x.
\end{equation}
On can also introduce $f'=\sigma(x)f$ and
\begin{equation}
	u= 2(\ln f')_{xx}, \quad  v =\frac{1}{4}g_2t+2(\ln f')_x,
\end{equation}
from which the KdV equation \eqref{eq:kdv} is written into an alternative bilinear form
\begin{equation}\label{bilinear-aa}
	(D_x^4-4D_xD_t-g_2)f' \cdot f'=0.
\end{equation}
The bilinear equation \eqref{bilinear-a} admits a Wronskian solution \cite{22LxZhang}
\begin{equation}\label{f-W}
f=|\widehat{N-1}|
\end{equation}
composed by vector $\boldsymbol{\varphi}=(\varphi_1,\varphi_2,\cdots,\varphi_N)^T$.
Each element  $\varphi_j$ is the solution of linear equations
\begin{subequations}
\begin{align}
&\varphi_{j,xx}=(\wp(k_j)+2\wp(x))\varphi_j, \label{phi-jx}\\
& \varphi_{j,t}=\mathbf{\varphi}_{j,xxx}-3\wp(x)\varphi_{j,x}-\frac{3}{2}\wp'(x)\varphi_j, \label{phi-jt}
\end{align}
\end{subequations}
which allow explicit solution in terms of the Weierstrass functions
\begin{equation}
 \varphi_{j}=\varphi_j^++ (-1)^j\varphi_j^-,
\end{equation}
with $ \varphi_j^{\pm}=\varphi^{\pm}(x,t;k_j)$ defined in \eqref{phi-j-gamma}.

To secure the $\tau$ function in Hirota's form, we need to expand the $f$ into $2^N$  distinct
Wronskians, each of which is generated by the elementary column vector of the following form,
\begin{equation}\label{phi-eps}
\boldsymbol{\phi}=(\phi_1, \phi_2, \cdots, \phi_N)^T,~~
\phi_j =(\nu_j)^j\Phi_x(\nu_j k_j)e^{-\nu_j \gamma(k_j)},
\end{equation}
where $\{\nu_1, \nu_2, \cdots, \nu_N\}$ run over $\{1,-1\}$. Recall that $\boldsymbol{\varphi}^-=(\varphi_1^-,\varphi_2^-,\cdots,\varphi_N^-)^T$, and $\varphi_j^-=\varphi^{-}(x,t;k_j)$ is defined in \eqref{phi-j-gamma},
Wronskian composed of $\boldsymbol{\varphi}^-$ can be written as the following.

\begin{lemma}\label{L-1}
For $\boldsymbol{\varphi}^-$ defined as above, we have
\begin{align}
   &|\boldsymbol{\varphi}^-,~ \partial_x \boldsymbol{\varphi}^- ,~\partial_x^2 \boldsymbol{\varphi}^-,~ \cdots  ,~\partial_x^{N-1}\boldsymbol{\varphi}^- |\nonumber \\
=\,&(-1)^N\frac{\sigma(x-\sum_{i=1}^Nk_i)}{\sigma(x)}\cdot
\frac{\prod_{1\leq i<j\leq N}\sigma(k_i-k_j)}{\sigma^{N}(k_1)\cdots\sigma^{N}(k_N)}
\exp\left( \sum_{i=1}^N \gamma(k_i)\right).
\label{W-phi-}
\end{align}
\end{lemma}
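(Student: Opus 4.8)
The plan is to treat the determinant as a function of $x$ on the torus $\mathbb{C}/\mathbb{D}$ and pin it down by a Liouville-type comparison. First I would strip off everything that is constant in $x$. Writing $\gamma(k_j)=\zeta(k_j)x+d_j$ with $d_j=-\tfrac12\wp'(k_j)t+\gamma^{(0)}(k_j)$ independent of $x$, each entry is $\varphi_j^-=\Phi_x(-k_j)e^{\gamma(k_j)}=e^{d_j}\psi_j$, where $\psi_j:=\Phi_x(-k_j)e^{\zeta(k_j)x}=\Psi_x(-k_j)$ is exactly the Lam\'e function of \eqref{Phi}. Since $e^{d_j}$ multiplies the whole $j$-th row of the Wronskian matrix, it pulls out, so that with $\boldsymbol\psi=(\psi_1,\dots,\psi_N)^T$,
\[
|\boldsymbol\varphi^-,\partial_x\boldsymbol\varphi^-,\dots,\partial_x^{N-1}\boldsymbol\varphi^-|=e^{\sum_j d_j}\,W(x),\qquad W(x):=|\boldsymbol\psi,\partial_x\boldsymbol\psi,\dots,\partial_x^{N-1}\boldsymbol\psi|.
\]
Because $\sum_j d_j=\sum_j\gamma(k_j)-Zx$ with $Z:=\sum_i\zeta(k_i)$, proving \eqref{W-phi-} is equivalent to showing $W(x)=R(x)$, where $R(x):=(-1)^N A\,\sigma(x-K)\sigma(x)^{-1}e^{Zx}$, $K:=\sum_i k_i$ and $A:=\prod_{i<j}\sigma(k_i-k_j)\big/\prod_i\sigma^N(k_i)$.

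Next I would show $W/R$ is elliptic. Each $\psi_j$ is a Bloch solution: using the quasi-periodicity \eqref{periodicity} one finds $\Psi_{x+2w_i}(-k_j)=\rho_i(-k_j)\Psi_x(-k_j)$ with an $x$-independent multiplier $\rho_i(-k_j)=e^{2[w_i\zeta(k_j)-\zeta(w_i)k_j]}$. Hence the whole $j$-th row of the matrix for $W$ picks up the factor $\rho_i(-k_j)$ under $x\mapsto x+2w_i$, so $W(x+2w_i)=\big(\prod_j\rho_i(-k_j)\big)W(x)=e^{2[w_iZ-\zeta(w_i)K]}W(x)$. A direct computation with \eqref{periodicity} shows $R$ carries these same two multipliers, so $W/R$ is doubly periodic.

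Then I would control the poles. The only singularities of $\psi_j$ lie at $x\equiv0$; writing $\psi_j=g_j/\sigma$ with $g_j(x)=\sigma(x-k_j)e^{\zeta(k_j)x}/\sigma(-k_j)$ entire, the Wronskian scaling identity gives $W=\sigma^{-N}\widetilde W$ with $\widetilde W:=|\boldsymbol g,\dots,\partial_x^{N-1}\boldsymbol g|$ entire. Now $\partial_x\ln g_j=\zeta(x-k_j)+\zeta(k_j)$ (the $\eta$-type quantity of \eqref{eq:add-2}), so $g_j(0)=1$ and $g_j'(0)=\zeta(-k_j)+\zeta(k_j)=0$ for every $j$: the whole family shares its $0$-th and $1$-st Taylor coefficients. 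Therefore the vanishing sequence of $\operatorname{span}\{g_j\}$ at $0$ omits order $1$ and is dominated by $\{0,2,3,\dots,N\}$, forcing $\operatorname{ord}_0\widetilde W\ge N-1$ and hence $\operatorname{ord}_0 W\ge-1$. Thus $W$ has at most a simple pole at $x\equiv0$, exactly like $R$, while $R$ has in addition a simple zero at $x\equiv K$. Consequently $W/R$ is elliptic with at most a single simple pole in a fundamental cell (at $x\equiv K$); since an elliptic function cannot have a lone simple pole, $W/R$ is holomorphic, hence equal to a constant $\lambda$.

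Finally I would fix $\lambda$ by matching residues at $x=0$. From $R(x)\sim(-1)^{N+1}A\,\sigma(K)/x$ and $W(x)\sim\big([x^{N-1}]\widetilde W\big)/x$ one gets $\lambda=[x^{N-1}]\widetilde W\big/\big((-1)^{N+1}A\,\sigma(K)\big)$, so it remains to evaluate the leading Taylor coefficient of $\widetilde W$. Expanding $g_j=1-\tfrac12\wp(k_j)x^2+\tfrac16\wp'(k_j)x^3-\cdots$, the coefficient $[x^p]g_j$ is a polynomial in $\wp^{(p-2)}(k_j)$ and lower derivatives. I expect this to be the main obstacle: after normalising the factorials coming from $\partial_x^\ell$ and using the curve relation \eqref{ell-cur-2} together with its derivatives to perform column reduction, the matrix of these coefficients collapses, up to an explicit scalar, to the Frobenius–Stickelberger matrix $|\mathbf 1,\wp(\mathbf k),\wp'(\mathbf k),\dots,\wp^{(N-2)}(\mathbf k)|$. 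Its value is supplied by \eqref{eq:FS-1}, and substituting it should produce precisely $(-1)^{N+1}A\,\sigma(K)$, giving $\lambda=1$ and hence $W=R$, which is the claim. I would first verify the cases $N=1,2$ by hand to nail down all signs and factorial constants before trusting the general column-reduction bookkeeping.
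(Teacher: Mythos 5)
Your strategy --- characterizing the Wronskian as a quasi-periodic meromorphic function of $x$ and pinning it down by a Liouville argument --- is genuinely different from the paper's route, which (see the proof of the discrete analogue, Lemma~\ref{H1-L-2}, and \cite{22LxZhang}) evaluates the determinant directly by factoring out the $\Phi$'s, column-reducing via the addition formulas \eqref{eq:add-2}, \eqref{eq:add-6}, \eqref{eq:FS-2}, \eqref{eq:ND16}, and invoking the Frobenius--Stickelberger formula \eqref{eq:FS-1}. Your intermediate steps are sound: the gauge factors $e^{d_j}$ pull out row by row; the Bloch multipliers $e^{2[w_i\zeta(k_j)-\zeta(w_i)k_j]}$ and the matching quasi-periodicity of $R$ are computed correctly from \eqref{periodicity}; $W=\sigma^{-N}\widetilde W$ with $\widetilde W$ entire; $g_j(0)=1$ and $g_j'(0)=0$ do force $\operatorname{ord}_0\widetilde W\ge N-1$; and an elliptic function with at most one simple pole per cell is constant. (Degenerate configurations, e.g.\ $\sum_i k_i\in\mathbb{D}$ or coinciding $k_i$, where the zero of $R$ collides with its pole, should be dismissed by continuity in the parameters; this is minor.)

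The genuine gap is the final normalization $\lambda=1$, which you leave as an expectation rather than a proof, and your sketch of it understates the difficulty. The coefficient $[x^p]g_j$ is a polynomial in $\wp(k_j),\dots,\wp^{(p-2)}(k_j)$, but the product terms, once rewritten in the basis $1,\wp,\wp',\wp'',\dots$ via the curve relation \eqref{ell-cur-2}, feed back into the \emph{top} coefficient: for example $[x^4]g_j=\tfrac18\wp^2(k_j)-\tfrac1{24}\wp''(k_j)=-\tfrac1{48}\wp''(k_j)+\tfrac{g_2}{96}$, so the scalar in front of the Frobenius--Stickelberger column is not the naive $\tfrac{(-1)^{p-1}}{p!}$. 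Tracking these scalars for general $N$, and showing that their product combines with the generalized Vandermonde factor from $W(x^0,x^2,\dots,x^N)$ and the constants and signs in \eqref{eq:FS-1} to yield exactly $(-1)^{N+1}A\,\sigma(K)$, is a real computation --- essentially the same reduction-to-FS-determinant computation that constitutes the paper's entire direct proof. In other words, the Liouville machinery does not remove the computational core of the lemma; it relocates it into the single coefficient $[x^{N-1}]\widetilde W$, which you verify only for $N=1,2$. To complete the argument you need either to carry out that reduction with the paper's identities (\eqref{eq:add-6}, \eqref{eq:FS-2}, \eqref{eq:ND16} together with \eqref{eq:FS-1}), or to determine $\lambda$ by an independent mechanism (for instance induction on $N$ via Jacobi's identity for Wronskians); as written, neither is supplied, so the proof is incomplete at precisely its decisive step.
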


 Compared with $\varphi_j^-$, the elementary entries $\phi_j$ defined in \eqref{phi-eps}
 can be formally obtained from $\varphi_j^-$ by multiplying $(\nu_j)^j$
and replacing $k_j$ with $-\nu_j k_j$.
Thus, in light of Lemma \ref{L-1}, the Wronskian generated by \eqref{phi-eps} can be expressed as
\begin{align*}
f_{\boldsymbol{\nu}}\!
= \!(-1)^\frac{N(N-1)}{2}
\frac{\sigma(x+\sum_{i=1}^N \nu_i k_i)}{\sigma(x)}\!
\frac{\prod_{1\leq i<j\leq N}\nu_i\,\sigma(\nu_i k_i-\nu_j k_j)}
{\sigma^{N}(k_1)\cdots\sigma^{N}(k_N)}
\exp\! \left(-\sum_{i=1}^N\nu_i\gamma(k_i)\right)\!,\label{W-phi-eps}
\end{align*}
where $\boldsymbol{\nu}$ indicates cluster
$\boldsymbol{\nu}=\{\nu_1, \nu_2, \cdots, \nu_N\}$.
Introduce length of $\boldsymbol{\nu}$ by $|\boldsymbol{\nu}|$  to denote the number of
positive $\nu_j$'s in the cluster $\boldsymbol{\nu}$.
Rearrange the $2^N$ terms in the $f$ function \eqref{f-W}
in terms of  $|\boldsymbol{\nu}|$ such that
\begin{equation}\label{tau-epl}
f=\sum^{N}_{l=0} \sum_{|\boldsymbol{\nu}|=l}f_{\boldsymbol{\nu}}=\sum^{N}_{l=0}f_{l},
\end{equation}
where $f_{l}=\sum_{|\boldsymbol{\nu}|=l}f_{\boldsymbol{\nu}}$,
 in particular, we have
\begin{align}
f_{0}
=(-1)^{\frac{N(N-1)}{2} }
	\frac{\sigma(x-\sum_{i=1}^Nk_i)}{\sigma(x)}\cdot
\frac{\prod_{1\leq i<j\leq N}\sigma(k_i-k_j)}{\sigma^{N}(k_1)\cdots\sigma^{N}(k_N)}
\exp\left(\sum_{i=1}^N \gamma(k_i)\right). \label{g}
\end{align}

Here, for convenience of this  subsection, for a function $f=f(x)$,
by $f^{\sharp}$ we specially denote the $f$ shifted in $x$ by $\sum_{i=1}^Nk_i$, i.e.
$f^{\sharp}=f(x+\sum_{i=1}^Nk_i)$.
Then we have the following.

\begin{theorem}\label{T-2}
Let
\begin{equation}\label{f-tg}
\tau=\frac{f^{\sharp}}{f_0^{\sharp}},
\end{equation}
where  $f$ and $f_0$ are given by \eqref{f-W} and \eqref{g}.
Then we have
\begin{equation}\label{bilinear-f}
 (D_x^4-4D_xD_t-12\wp(x)D_x^2)\tau\cdot \tau=0,
\end{equation}
the $\tau$ function is in Hirota's form, written as
\begin{equation}\label{f-Hirota}
\tau=\sum_{\mu=0,1} \frac{\sigma(x+2\sum_{i=1}^N \mu_i k_i)}{\sigma(x)\prod^N_{j=1}\sigma^{\mu_j}(2k_j)}
\mathrm{exp}\left(\sum^{N}_{j=1} \mu_j \theta_j+\sum^N_{1\leq i<j}\mu_i\mu_j a_{ij}\right),
\end{equation}
i.e.
\begin{equation*}
\begin{split}
 \tau=1&+ \sum_{i=1}^N\Phi_x(2k_i) e^{\theta_i}
	 +\mathop{\rm{\sum}}_{1\leq l<p\leq N} \frac{\sigma(x+2k_l+2k_p)}{\sigma(x)\sigma(2k_l)\sigma(2k_p)}
A_{lp}e^{\theta_l+\theta_p}\\
	&+\cdots
	 +\frac{\sigma(x+2\sum_{i=1}^Nk_i)}{\sigma(x)\prod^N_{j=1}\sigma(2k_j)}
\left(\prod_{1\leq i<j\leq N}A_{ij}\right)\prod_{i=1}^{N}e^{\theta_i},
    \end{split}
\end{equation*}
where
\begin{subequations}\label{theta-A}
\begin{align}
& \theta_i=-2\zeta(k_i)x + \wp'(k_{i})t +\theta_{i}^{(0)}, ~~ \theta_{i}^{(0)}\in \mathbb{C},\label{theta}\\
& e^{a_{ij}}=A_{ij}=\left(\frac{\sigma(k_i-k_j)}{\sigma(k_i+k_j)}\right)^2,\label{A-ij}
\end{align}
\end{subequations}
and the summation of $\mu$ means to take all possible $\mu_i=\{0,1\}$  for $ i=1,2,\cdots, N$.
\end{theorem}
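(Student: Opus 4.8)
The plan is to deduce the bilinear equation \eqref{bilinear-f} from the fact, established in \cite{22LxZhang} via the Wronskian technique (Theorem~\ref{P-C1}), that the Wronskian $f$ of \eqref{f-W} solves \eqref{bilinear-a}, routing the argument through the constant-coefficient form \eqref{bilinear-aa}, and then to read off the Hirota form \eqref{f-Hirota} by dividing the $2^N$-term expansion of $f^{\sharp}$ by $f_0^{\sharp}$ term by term. The tool for the first part is the equivalence recorded between \eqref{bilinear-a} and \eqref{bilinear-aa}, the elliptic analogue of Lemma~\ref{L-gauge}: a smooth $h$ solves \eqref{bilinear-a} if and only if $\sigma(x)h$ solves \eqref{bilinear-aa}, which rests on the operator identity
\begin{equation*}
(D_x^4-4D_xD_t-g_2)(\sigma(x)h)\cdot(\sigma(x)h)=\sigma^2(x)\,(D_x^4-4D_xD_t-12\wp(x)D_x^2)\,h\cdot h,
\end{equation*}
itself verified once using $\wp=-(\ln\sigma)''$ and the curve relation \eqref{ell-cur-2}.

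Granting this, since $f$ solves \eqref{bilinear-a} the product $\sigma(x)f$ solves \eqref{bilinear-aa}, and as $g_2$ is constant the translate $(\sigma f)^{\sharp}=\sigma(x+c)f^{\sharp}$, with $c=\sum_{i=1}^N k_i$, again solves \eqref{bilinear-aa}. A direct shift of \eqref{g} (equivalently Lemma~\ref{L-1}) gives $f_0^{\sharp}=C\,\frac{\sigma(x)}{\sigma(x+c)}\,e^{\alpha x+\beta t}$ with $\alpha=\sum_i\zeta(k_i)$, $\beta=-\tfrac12\sum_i\wp'(k_i)$ and a constant $C$, so that
\begin{equation*}
\sigma(x)\tau=\sigma(x)\frac{f^{\sharp}}{f_0^{\sharp}}=\frac1C\,\bigl(\sigma(x+c)f^{\sharp}\bigr)\,e^{-\alpha x-\beta t}.
\end{equation*}
Hence $\sigma(x)\tau$ differs from the solution $(\sigma f)^{\sharp}$ of \eqref{bilinear-aa} only by a constant and an exponential gauge factor; by Lemma~\ref{L-gauge} the latter preserves \eqref{bilinear-aa}, so $\sigma(x)\tau$ solves it too, and the equivalence read backwards gives \eqref{bilinear-f}. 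The point worth stressing is that the bare shift $f\mapsto f^{\sharp}$ would replace $\wp(x)$ by $\wp(x+c)$; it is exactly the normalising denominator $f_0^{\sharp}$, whose sigma-quotient undoes this translation, that restores the coefficient $\wp(x)$.

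For the Hirota form I would use the bijection $\nu_i=2\mu_i-1$ between clusters $\boldsymbol\nu\in\{\pm1\}^N$ and subsets $S=\{i:\mu_i=1\}$, under which the all-$(-1)$ cluster $f_0$ corresponds to $\mu\equiv0$, hence to the constant term $1$. Shifting $x\mapsto x+c$ in the closed-form expression for $f_{\boldsymbol\nu}$ and forming $\tau_{\boldsymbol\nu}:=f_{\boldsymbol\nu}^{\sharp}/f_0^{\sharp}$, the common factors $(-1)^{N(N-1)/2}$, $\sigma(x+c)$ and $\prod_j\sigma^N(k_j)$ cancel; since $c+\sum_i\nu_ik_i=2\sum_i\mu_ik_i$ the sigma-numerator becomes $\sigma(x+2\sum_i\mu_ik_i)/\sigma(x)$ as in \eqref{f-Hirota}, and the $x,t$-part of the exponent is $\sum_i\mu_i(-2\zeta(k_i)x+\wp'(k_i)t)$. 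Matching the single-index terms $\tau_{e_i}$ with $\Phi_x(2k_i)e^{\theta_i}$ pins down each phase, giving $e^{\theta_i^{(0)}}=\sigma(2k_i)\prod_{j\neq i}\frac{\sigma(k_i+k_j)}{\sigma(k_i-k_j)}\,e^{-2\gamma^{(0)}(k_i)-2\zeta(k_i)c}$ up to sign.

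The main obstacle is the general multi-index term. By the oddness of $\sigma$, the pair factor $\nu_i\sigma(\nu_ik_i-\nu_jk_j)/\sigma(k_i-k_j)$ equals $1$ when $\mu_i=\mu_j$ and $\sigma(k_i+k_j)/\sigma(k_i-k_j)$ when exactly one of $\mu_i,\mu_j$ is $1$, so the leftover sigma-quotient is a product $R_S$ over the boundary pairs of $S$. Substituting the phases above, the assertion \eqref{f-Hirota} reduces to
\begin{equation*}
R_S=\Bigl(\prod_{\substack{i<j\\ i,j\in S}}A_{ij}\Bigr)\prod_{i\in S}\prod_{j\neq i}\frac{\sigma(k_i+k_j)}{\sigma(k_i-k_j)},
\end{equation*}
which I would prove by separating the right-hand double product into interior pairs (both indices in $S$) and boundary pairs (one index in $S$): the interior pairs combine with the explicitly written $A_{ij}$ of \eqref{A-ij} to leave only a sign, while the boundary pairs reassemble $R_S$. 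The delicate part is the sign bookkeeping coming from $(-1)^{N(N-1)/2}$ and from the oddness of $\sigma$ when an $S$-index exceeds its partner; the cases $N=1,2$ already exhibit this cancellation, and the general count proceeds identically.
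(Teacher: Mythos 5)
Your proposal is correct; the bilinear half takes a genuinely different route from the paper's, while the Hirota-form half is essentially the paper's own approach. For \eqref{bilinear-f}, the paper (in the Remark immediately following the theorem) works entirely inside the elliptic-coefficient setting: it applies the quasi-gauge property \eqref{quasi-gauge} with $\varrho=f_0^{\sharp}$, and trades the coefficient $12\wp(x+\sum_{i=1}^Nk_i)$ of the shifted equation for $12\wp(x)$ by means of the two identities $D_x^2 f_0^{\sharp}\cdot f_0^{\sharp}=2\bigl(\wp(x+\sum_{i=1}^Nk_i)-\wp(x)\bigr)(f_0^{\sharp})^2$ and $D_x^4 f_0^{\sharp}\cdot f_0^{\sharp}=12\wp(x+\sum_{i=1}^Nk_i)\,D_x^2 f_0^{\sharp}\cdot f_0^{\sharp}$. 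You instead route through the constant-coefficient form \eqref{bilinear-aa} via the conjugation identity
\begin{equation*}
(D_x^4-4D_xD_t-g_2)(\sigma(x)h)\cdot(\sigma(x)h)=\sigma^2(x)\,(D_x^4-4D_xD_t-12\wp(x)D_x^2)\,h\cdot h,
\end{equation*}
which is indeed valid (it needs only $D_x^2\sigma\cdot\sigma=-2\wp\sigma^2$ and $D_x^4\sigma\cdot\sigma=g_2\sigma^2$, the latter from $\wp''=6\wp^2-g_2/2$, i.e.\ the differentiated curve relation \eqref{ell-cur-2}), after which only translation invariance of \eqref{bilinear-aa} and the ordinary gauge property (Lemma \ref{L-gauge}) are needed, given your observation that $f_0^{\sharp}$ equals $\sigma(x)/\sigma(x+\sum_{i=1}^Nk_i)$ times a constant and an exponential. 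Both arguments pivot on exactly this structural fact about $f_0^{\sharp}$; yours is more modular and reuses the equivalence between \eqref{bilinear-a} and \eqref{bilinear-aa} already recorded in the paper, whereas the paper's is self-contained in the elliptic setting and exercises the quasi-gauge calculus it develops as a general tool. For the Hirota form, your term-by-term division of the $2^N$ Wronskians $f_{\boldsymbol{\nu}}^{\sharp}$ by $f_0^{\sharp}$ is what the paper intends (it is spelled out for the lpKdV analogue in Sec.~\ref{lkdv}, where the sign bookkeeping you defer is handled instead by inserting $\mathrm{sgn}[j-i]$ factors into the absorbed constants), and your deferred count does close: with $M$ the number of pairs $a<b$ having $a\notin J$, $b\in J$, the signs $(-1)^{i-1}$ absorbed into the one-soliton phases contribute $\sum_{i\in J}(i-1)=\binom{|J|}{2}+M$, which exactly cancels the factor $(-1)^{\binom{|J|}{2}+M}$ produced by the oddness of $\sigma$ in the interior and boundary products, so \eqref{f-Hirota} holds with a single consistent choice of $e^{\theta_i^{(0)}}$.
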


\begin{remark}
Different from the gauge property presented in Lemma \ref{L-gauge},
when replacing the usual PWF $e^{ax+bt}$ with  the Lam\'e-type PWF $\varrho=\Phi_x (a) e^{bx+ct}$, $a, b, c \in \mathbb{C}$,
we have the quasi-gauge property rather than the gauge property:
\begin{equation}\label{quasi-gauge}
D^n_xD^m_t (\varrho f)\cdot (\varrho g) = \varrho^2  D^n_xD^m_t f\cdot g
+\sum^{\left[\frac{n}{2}\right]}_{l=1}\left(\begin{array}{c}n\\2l\end{array}\right)(D^{2l}_x \varrho\cdot \varrho)
D_x^{n-2l}D^m_t f \cdot g,
\end{equation}
where $\left[\frac{n}{2}\right]$ stands for the floor function of $\frac{n}{2}$,
$f(x,t)$ and $g(x,t)$ are arbitrary $C^{\infty}$ functions.
Noticing that $f^{\sharp}=f_0^{\sharp} \tau$ and $f_0^{\sharp}$ is a Lam\'e-type PWF,
using  the quasi-gauge property, we find
\begin{align*}
     &D_x^4  f^{\sharp} \cdot  f^{\sharp} =
     D_x^4 (f_0^{\sharp} \tau)\cdot (f_0^{\sharp} \tau) =(f_0^{\sharp})^2  D_x^4 \tau\cdot \tau + 6 (D_x^2  f_0^{\sharp}\cdot f_0^{\sharp}) D_x^2 \tau\cdot \tau
     + \tau^2 D_x^4  f_0^{\sharp}\cdot f_0^{\sharp},\\
     &D_x^2 f^{\sharp}\cdot  f^{\sharp} =
     D_x^2 (f_0^{\sharp} \tau)\cdot (f_0^{\sharp} \tau) =(f_0^{\sharp})^2  D_x^2 \tau\cdot \tau+  \tau^2 D_x^2  f_0^{\sharp}\cdot f_0^{\sharp},\\
	 &D_xD_t f^{\sharp}\cdot  f^{\sharp}=
D_x D_t (f_0^{\sharp} \tau)\cdot (f_0^{\sharp} \tau) =(f_0^{\sharp})^2  D_xD_t \tau\cdot \tau,
\end{align*}
and meanwhile, the term $D_x^{2l}~ f_0^{\sharp}\cdot f_0^{\sharp}$ can be replaced by
\begin{align*}
     &D_x^2~ f_0^{\sharp}\cdot f_0^{\sharp}=2\left(\wp(x+\hbox{$\sum_{i=1}^Nk_i$})-\wp(x)\right) (f_0^{\sharp})^2,\\
	 &D_x^4~ f_0^{\sharp}\cdot f_0^{\sharp}=12\wp(x+\hbox{$\sum_{i=1}^Nk_i$}) D_x^2~ f_0^{\sharp}\cdot f_0^{\sharp}.
\end{align*}
These results give rise to
\begin{equation*}
\begin{split}
0=&\, (D_x^4-4D_xD_t-12\wp(\hbox{$x+\sum_{i=1}^Nk_i)D_x^2$})f^{\sharp}\cdot f^{\sharp}\\
=&\, (f_0^{\sharp})^2 (D_x^4-4D_xD_t-12\wp(x)D_x^2)\tau\cdot \tau,
\end{split}
\end{equation*}
which indicates that $\tau=f^{\sharp} /f_0^{\sharp}$ solves the bilinear KdV equation \eqref{bilinear-f}.
\end{remark}

By introducing infinite coordinates, the above $\tau$  function can be generalized to the KdV hierarchy. Let us first list some notations:
\begin{subequations}\label{xi-theta}
	\begin{align}
    &\mathbf{t}=(t_1=x,t_2, \cdots, t_n, \cdots), ~~~\overline{\mathbf{t}}=(t_1=x,t_3, \cdots, t_{2n+1}, \cdots),\\
    &\widetilde{\partial}=(\partial_{t_1},\frac{1}{2} \partial_{t_2},\cdots,\frac{1}{n} \partial_{t_n},\cdots),~~~
    \overline{\widetilde{\partial}}=(\partial_{t_1},\frac{1}{3} \partial_{t_3},\cdots,\frac{1}{(2n+1)} \partial_{t_{2n+1}},\cdots),\\
	&\xi (\mathbf{t},k)=\sum_{n=1}^{\infty}k^n t_n, ~~~
      \xi_{[e]}(\mathbf{t},k)=\sum_{n=1}^{\infty}(-1)^n\frac{\zeta^{(n-1)}(k)}{(n-1)!}t_n,~~
      \zeta^{(i)}(k)=\partial^i_k\zeta(k),\\
    & \theta(\overline{\mathbf{t}},k)=\xi (\mathbf{t},k)-\xi (\mathbf{t},-k)
    =2\sum_{n=0}^{\infty} k^{2n+1} t_{2n+1},\\
    & \theta_{[e]}(\overline{\mathbf{t}},k)=\xi_{[e]} (\mathbf{t},k)-\xi_{[e]} (\mathbf{t},-k)
    =-2\sum_{n=0}^{\infty}\frac{\zeta^{(2n)}(k)}{(2n)!} t_{2n+1}.
\end{align}
\end{subequations}
Consider the following $\tau$ function which is equivalent to \eqref{f-Hirota},
\begin{equation}\label{tau-ver}
\tau_N^{}(\overline{\mathbf{t}})=\sum_{J\subset S}\left(\prod_{i\in J}c_i\right)\Biggl(\prod_{i,j\in J\atop i<j}A_{ij}\Biggr)
\frac{\sigma(t_1+2\sum_{i\in J}k_i)}{\sigma(t_1)\prod_{i\in J}\sigma(2k_i)}
\exp\left(\sum_{i\in J}\theta_{[e]}(\overline{\mathbf{t}},k_i)\right),
\end{equation}
where $c_i$ are arbitrary constants, $A_{ij}$ is defined as in \eqref{A-ij},
$S=\{1,2,\cdots,N\}$, $J$ is a subset of $S$,
and $\sum_{J\subset S}$ means the summation runs over all subsets of $S$.
Considering the Taylor series of $A_{ij}$ in the neighborhood of $q=0$, we have
\begin{equation}\label{Aij-pq}
\ln \frac{\sigma(p-q)}{\sigma(p+q)}=\theta_{[e]}(\bar\varepsilon(q), p),
\end{equation}
which indicates
\begin{equation}
A_{ij}=e^{2\theta_{[e]}(\bar\varepsilon(k_j), k_i)},
\end{equation}
where $\bar\varepsilon(q)=(q,\frac{q^3}{3},\cdots,\frac{q^{2n+1}}{(2n+1)},\cdots)$.
Furthermore,
for $\theta$ and $\theta_{[e]}$ defined in \eqref{xi-theta}, we have
\begin{equation*}\label{X-exch}
e^{\theta (\overline{\widetilde{\partial}},k_i)} e^{\theta_{[e]} (\bar{\mathbf{t}},k_j)}=A_{ij} \,
e^{\theta_{[e]} (\bar{\mathbf{t}},k_j)} e^{\theta (\overline{\widetilde{\partial}},k_i)},
\end{equation*}
where
$A_{ij}$ is defined as \eqref{A-ij}.

The vertex operator that generates the above $\tau$ function is described below.
\begin{theorem}\label{T-3}
The $\tau$ function \eqref{tau-ver} can be generated by the vertex operator
\begin{equation}\label{eq:vertexkdv}
X(k)=\Phi_{t_1}(2 k) e^{\theta_{[e]} (\overline{\mathbf{t}},k)} e^{\theta (\overline{\widetilde \partial},k)}
\end{equation}
via
\begin{equation}\label{tau-N-N}
 \tau_{N}(\overline{\mathbf{t}})=e^{c_{N}X(k_{N})}\,\tau_{N-1}(\overline{\mathbf{t}}),~~
 \tau_{0}(\overline{\mathbf{t}})=1,
\end{equation}
i.e.
\begin{equation}\label{tau-N}
\tau_N(\overline{\mathbf{t}})=e^{c_{N}X(k_N)}\cdots e^{c_{2}X(k_2)}e^{c_{1}X(k_1)}\, 1.
\end{equation}
\end{theorem}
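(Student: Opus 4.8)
The plan is to argue by induction on $N$, the key structural input being that the vertex operator $X(k)$ is nilpotent, so that $e^{c_N X(k_N)}=1+c_N X(k_N)$ and the recursion \eqref{tau-N-N} collapses to $\tau_N=\tau_{N-1}+c_N X(k_N)\,\tau_{N-1}$. Writing $\tau_N$ as the sum \eqref{tau-ver} over subsets $J\subset S=\{1,\dots,N\}$ of the terms
\[
w_J=\Bigl(\prod_{i\in J}c_i\Bigr)\Bigl(\prod_{i<j,\,i,j\in J}A_{ij}\Bigr)\frac{\sigma(t_1+2\sum_{i\in J}k_i)}{\sigma(t_1)\prod_{i\in J}\sigma(2k_i)}\exp\Bigl(\sum_{i\in J}\theta_{[e]}(\overline{\mathbf{t}},k_i)\Bigr),
\]
the target is the single identity $c_N X(k_N)\,w_J=w_{J\cup\{N\}}$ for every $J\subset\{1,\dots,N-1\}$. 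Granting this, $\tau_{N-1}=\sum_{J\subset\{1,\dots,N-1\}}w_J$ collects exactly the subsets with $N\notin J$, while $c_NX(k_N)\tau_{N-1}=\sum_{N\in J}w_J$ supplies those with $N\in J$, so the two reassemble into $\tau_N=\sum_{J\subset S}w_J$; the base case $\tau_0=1$ is the empty-subset term.

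To run this I would first record the action of the three factors of $X(k)=\Phi_{t_1}(2k)\,e^{\theta_{[e]}(\overline{\mathbf{t}},k)}\,e^{\theta(\overline{\widetilde\partial},k)}$. Since $\theta(\overline{\widetilde\partial},k)=\sum_{n\ge0}\frac{2k^{2n+1}}{2n+1}\partial_{t_{2n+1}}$, the rightmost factor is precisely the Miwa shift $\overline{\mathbf{t}}\mapsto\overline{\mathbf{t}}+2\bar\varepsilon(k)$; in particular it sends $t_1\mapsto t_1+2k$. Acting on $w_J$ it therefore replaces $t_1$ by $t_1+2k_N$ in the $\sigma$-ratio, and, by the exchange relation recorded just before \eqref{eq:vertexkdv}, pulling the shift through each $e^{\theta_{[e]}(\overline{\mathbf{t}},k_i)}$ emits the scalar $e^{2\theta_{[e]}(\bar\varepsilon(k_N),k_i)}=A_{iN}$, producing the factor $\prod_{i\in J}A_{iN}$. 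The middle factor then appends $\theta_{[e]}(\overline{\mathbf{t}},k_N)$ to the exponent, and the left factor multiplies by $\Phi_{t_1}(2k_N)=\sigma(t_1+2k_N)/(\sigma(t_1)\sigma(2k_N))$.

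Collecting these, the two $\sigma$-ratios telescope: the $\sigma(t_1+2k_N)$ created by $\Phi_{t_1}(2k_N)$ cancels the shifted denominator $\sigma(t_1+2k_N)$, leaving exactly $\sigma(t_1+2\sum_{i\in J\cup\{N\}}k_i)/(\sigma(t_1)\prod_{i\in J\cup\{N\}}\sigma(2k_i))$; and since $N=\max(J\cup\{N\})$, the new pairs $(i,N)$ with $i\in J$ are precisely the additional index pairs, so $\bigl(\prod_{i<j\in J}A_{ij}\bigr)\prod_{i\in J}A_{iN}=\prod_{i<j\in J\cup\{N\}}A_{ij}$. Together with the $c_N$ prefactor and the appended exponential this is exactly $w_{J\cup\{N\}}$, as claimed. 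Finally, nilpotency $X(k)^2=0$ comes from the same computation applied twice at the same node: the self-contraction weight is $e^{2\theta_{[e]}(\bar\varepsilon(k),k)}=\bigl(\sigma(0)/\sigma(2k)\bigr)^2=0$ by \eqref{Aij-pq} (equivalently $A_{kk}=0$), so $e^{c X(k)}=1+cX(k)$ as used above.

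The main obstacle I anticipate is not any single hard estimate but the careful bookkeeping of the three operator actions—making precise that the shift commutes through the exponentials with exactly the weight $\prod_{i\in J}A_{iN}$ and no spurious $t_1$-dependence, and confirming the $\sigma$-telescoping that merges $J$ and $\{N\}$. The only genuinely delicate step is nilpotency: one must verify that the degenerate self-interaction weight vanishes through $\sigma(0)=0$ rather than through an ill-defined $\Phi_{t_1}$ factor (the two $\Phi_{t_1}(2k)$'s in fact combine to the finite $\sigma(t_1+4k)/(\sigma(t_1)\sigma(2k)^2)$), so that truncating $e^{c_NX(k_N)}$ at first order is legitimate.
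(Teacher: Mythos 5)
Your proposal is correct and takes essentially the same route as the paper: the paper verifies the theorem via the commutation relations that give the closed form of $X(k_s)\cdots X(k_1)$ together with $X(k)^2=0$ (hence $e^{cX(k)}=1+cX(k)$), which is precisely the content of your induction step $c_N X(k_N)\,w_J=w_{J\cup\{N\}}$. Your bookkeeping — the Miwa shift emitting $\prod_{i\in J}A_{iN}$ through the exchange relation, the $\sigma$-telescoping with $\Phi_{t_1}(2k_N)$, and nilpotency via $\sigma(0)=0$ — is just the paper's product formula carried out one operator at a time.
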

The above theorem can be directly verified by the commutative relations between the vertex operators,
such as
\begin{subequations}
\begin{align*}
	&X(k_i)X(k_j)=A_{ij}\frac{\sigma(t_1+2k_i+2 k_j)}{\sigma(t_1)\sigma(2k_i)\sigma(2k_j)}
      e^{\theta_{[e]}(\bar{\mathbf{t}},k_i)+\theta_{[e]}(\bar{\mathbf{t}},k_j)}
      e^{\theta(\overline{\widetilde{\partial}}, k_i)+\theta(\overline{\widetilde{ \partial}}, k_j)},\label{XX-commu}\\
	&X(k_s)\cdots X(k_2)X(k_1) \nonumber \\
	&\qquad=\left(\prod_{1\leq i<j\leq s}A_{ij}\right)
       \frac{\sigma(t_1+2\sum_{i=1}^sk_i)}{\sigma(t_1)\prod_{i=1}^{s}\sigma(2k_i)}
       \exp\left( \sum_{i=1}^s\theta_{[e]}(\overline{\mathbf{t}},k_i)\right)
       \exp\left(\sum_{i=1}^s\theta(\overline{\widetilde{\partial}}, k_i)\right),
\end{align*}
\end{subequations}
and hence
\begin{equation*}
X(k)^2=0,~~
e^{cX(k)}=1+cX(k).
\end{equation*}

The vertex operators can be viewed here as the  B\"acklund transformation to generate new solutions. In addition, noticing that $\tau_1(\bar{\mathbf{t}})= e^{c_{1}X(k_1)}\, 1$ is doubly periodic with respect to $k_1$,
and $X(k_i)$ and $X(k_j)$ commute,
it  follows that $\tau_N(\bar{\mathbf{t}})$ defined by \eqref{tau-N} is
doubly periodic with respect to any $k_i$, for $i=1,2, \cdots, N$.

\subsection{\texorpdfstring{$\tau$}{} function of KP equation}\label{kp}

Let us partially review the results about the KP hierarchy. Details can be found in \cite{22LxZhang}.
The KP equation is
\begin{equation}\label{eq:kp}
	4u_t-u_{xxx}-6uu_x-3\partial^{-1}u_{yy}=0,
\end{equation}
or in the potential form $(u=v_x)$
\begin{equation}\label{eq:kp-v}
	4v_t-v_{xxx}-3(v_x)^2-3\partial^{-1}v_{yy}=0.
\end{equation}
By the transformation
\begin{align}
u=-2\wp(x)+2(\ln f)_{xx},
\end{align}
or
\begin{equation}
v=2\zeta(x)+\frac{g_2}{4} t+2(\ln f)_x,
\end{equation}
the KP equation is bilinearised as
\begin{equation}\label{bilinear-b}
(D_x^4- 4D_xD_t-12\wp(x)D_x^2+3D_y^2)f\cdot f=0,
\end{equation}
or
\begin{equation}
	(D_x^4-4D_xD_t+3 D_y^2-g_2)f'\cdot f'=0,
\end{equation}
where $f'=\sigma(x)f$. The bilinear KP equation allows elliptic soliton solutions.
\begin{theorem}\label{T-6}
The following Wronskian
\begin{equation}\label{slt:KPWronskian}
	f=|\widehat{N-1}|
\end{equation}
is a solution to the bilinear KP equation \eqref{bilinear-b}, where $f$ is composed by vector
$\boldsymbol{\varphi}\!=\!(\varphi_1, \cdots, \varphi_N)^T$
with entries
\begin{subequations}
\begin{equation}
\varphi_j(x,y,t) = \Phi_x(k_j)e^{-\gamma(k_j)}+ \Phi_x(l_j)e^{-\gamma(l_j)},
\end{equation}
where
\begin{equation}
\gamma(k)=\zeta(k)x+ \wp(k)y-\frac{\wp'(k)}{2}t+\gamma^{(0)}(k),~~ k\in \mathbb{C}
\end{equation}
with a constant $\gamma^{(0)}(k)$ related to $k$.
\end{subequations}
Note that $\varphi_j$ satisfies
\begin{equation}\label{eq:kplax}
	\begin{split}
	&\varphi_{j,y}= -\varphi_{j,xx}+2\wp(x)\varphi_j,\\
	&\varphi_{j,t}=\varphi_{j,xxx}-3\wp(x)\varphi_{j,x}-\frac{3}{2}\wp'(x)\varphi_j.
	\end{split}
\end{equation}
\end{theorem}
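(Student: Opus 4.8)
The plan is to verify \eqref{bilinear-b} by the Wronskian technique, reducing the bilinear expression to the Plücker relation of Theorem \ref{P-C1}; the two conditions \eqref{eq:kplax} are precisely what make this reduction close up.

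I would first confirm \eqref{eq:kplax} on each summand. Writing $\psi(k)=\Phi_x(k)e^{-\gamma(k)}$, the $x$-dependence of $\gamma$ is $\zeta(k)x$, so $\psi(k)=\Psi_x(k)\,e^{-\wp(k)y+\frac{1}{2}\wp'(k)t}$ with $\Psi_x(k)$ the Lamé solution of \eqref{eq:lame}; hence $\partial_x^2\psi(k)=(\wp(k)+2\wp(x))\psi(k)$. Since $\partial_y\gamma(k)=\wp(k)$, the first relation in \eqref{eq:kplax} is immediate. For the second, one more $x$-differentiation gives $\partial_x^3\psi=2\wp'(x)\psi+(\wp(k)+2\wp(x))\partial_x\psi$, and using $\partial_x\psi=\eta_k(x)\psi$ together with the addition rule \eqref{eq:add-2} collapses the $\partial_x\psi$ terms so that $\partial_x^3\psi-3\wp(x)\partial_x\psi-\frac{3}{2}\wp'(x)\psi=\frac{1}{2}\wp'(k)\psi=\partial_t\psi(k)$, as required.

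Next I would differentiate $f=|\widehat{N-1}|$ and express each Hirota derivative in \eqref{bilinear-b} as a sum of shifted Wronskians $|\widehat{N-2},N|$, $|\widehat{N-3},N-1,N|$, $|\widehat{N-2},N+1|$, $|\widehat{N-3},N,N+1|$, $|\widehat{N-3},N-1,N+1|$, and so on. The $x$-derivatives are standard; for $\partial_y$ and $\partial_t$ I would substitute \eqref{eq:kplax} column by column. A convenient device handles the $\wp(x)$-weighted pieces all at once: for the multiplier $\sigma(x)$ one has $D_x^2\,\sigma\cdot\sigma=-2\wp(x)\sigma^2$ and $D_x^4\,\sigma\cdot\sigma=g_2\sigma^2$ (the latter using $\wp''=6\wp^2-\tfrac12 g_2$), so the product rule $D_x^n(\sigma f)\cdot(\sigma g)=\sum_l\binom{n}{2l}(D_x^{2l}\sigma\cdot\sigma)\,D_x^{n-2l}f\cdot g$ yields the exact identity
\begin{equation*}
\begin{split}
&(D_x^4-4D_xD_t+3D_y^2-g_2)(\sigma f)\cdot(\sigma f)\\
&\quad=\sigma^2\,(D_x^4-4D_xD_t+3D_y^2-12\wp(x)D_x^2)\,f\cdot f .
\end{split}
\end{equation*}
This shows the explicit $-12\wp(x)D_x^2$ term in \eqref{bilinear-b} is exactly the $\sigma$-conjugate of a constant-coefficient operator, so the entire $\wp$-dependence of the left side of \eqref{bilinear-b} is an artifact of this conjugation. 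Once the $\wp$-terms produced by \eqref{eq:kplax} are recognised as part of this pattern and cancel, the surviving combination is the familiar constant-coefficient one, which vanishes by Theorem \ref{P-C1} with $M=(\boldsymbol{\varphi},\dots,\partial_x^{N-3}\boldsymbol{\varphi})$ and $(\mathbf{a},\mathbf{b},\mathbf{c},\mathbf{d})=(\partial_x^{N-2}\boldsymbol{\varphi},\partial_x^{N-1}\boldsymbol{\varphi},\partial_x^{N}\boldsymbol{\varphi},\partial_x^{N+1}\boldsymbol{\varphi})$, giving $f\,|\widehat{N-3},N,N+1|-|\widehat{N-2},N|\,|\widehat{N-3},N-1,N+1|+|\widehat{N-2},N+1|\,|\widehat{N-3},N-1,N|=0$.

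The step I expect to be the main obstacle is the bookkeeping of the $\wp$-weighted Wronskians coming from $D_y^2$. Applying \eqref{eq:kplax} twice produces $\wp(x)^2$, $\wp'(x)$ and $\wp''(x)$ factors, and one must check that, after reducing $\wp''$ by $6\wp^2-\tfrac12 g_2$, these regroup precisely into the $\sigma$-conjugation pattern above rather than leaving a residual $\wp$-dependent determinant. In particular, when $\partial_y$ acts on a ``gapped'' column of a shifted Wronskian (for instance the column $\partial_x^{N+1}\boldsymbol{\varphi}$ of $|\widehat{N-2},N+1|$) the Leibniz $\wp'(x)$-term no longer duplicates an existing column and therefore survives; confirming that every such residual $\wp'(x)$-contribution is matched and cancelled is the delicate point of the verification.
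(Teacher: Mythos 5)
Your preparatory steps are sound. The verification that each $\psi(k)=\Phi_x(k)e^{-\gamma(k)}$ satisfies \eqref{eq:kplax} is correct: the Lam\'e property gives $\psi_{xx}=(\wp(k)+2\wp(x))\psi$, and writing $\psi_x=\eta_k(x)\psi$ with \eqref{eq:add-2} indeed collapses $(\wp(k)-\wp(x))\psi_x$ to $\tfrac12(\wp'(k)-\wp'(x))\psi$, closing the $t$-equation. Your conjugation identity
\begin{equation*}
(D_x^4-4D_xD_t+3D_y^2-g_2)(\sigma f)\cdot(\sigma f)
=\sigma^2\,(D_x^4-4D_xD_t+3D_y^2-12\wp(x)D_x^2)\,f\cdot f
\end{equation*}
is also correct, since $D_x^2\,\sigma\cdot\sigma=-2\wp(x)\sigma^2$ and $D_x^4\,\sigma\cdot\sigma=g_2\sigma^2$; it is the quasi-gauge property \eqref{quasi-gauge} with multiplier $\sigma(x)$, and it re-derives the equivalence between \eqref{bilinear-b} and the $f'=\sigma(x)f$ form that the paper already records below \eqref{bilinear-b}.

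However, the core of the proof is missing, and the conjugation identity does not supply it --- it only relocates the difficulty. The function $f'=\sigma f$ is not a Wronskian generated by entries obeying constant-coefficient linear equations, so Theorem \ref{P-C1} cannot be invoked for $f'$ ``as in the familiar case''; and if you work with $f$ itself, the derivative formulas under \eqref{eq:kplax} are genuinely variable-coefficient: $\partial_y(\partial_x^j\boldsymbol{\varphi})=-\partial_x^{j+2}\boldsymbol{\varphi}+2\sum_{i\ge 0}\binom{j}{i}\wp^{(i)}(x)\,\partial_x^{j-i}\boldsymbol{\varphi}$, and the $i\ge 1$ terms annihilate against duplicate columns only when those columns are actually present. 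In the shifted, ``gapped'' Wronskians entering $D_y^2$, $D_xD_t$ and $D_x^4$ they survive --- exactly the phenomenon you flag for the column $\partial_x^{N+1}\boldsymbol{\varphi}$ in $|\widehat{N-2},N+1|$. Showing that all these residual $\wp^{(i)}(x)$-weighted determinants regroup into the $\sigma$-conjugation pattern, leaving only the Pl\"ucker combination $f\,|\widehat{N-3},N,N+1|-|\widehat{N-2},N|\,|\widehat{N-3},N-1,N+1|+|\widehat{N-2},N+1|\,|\widehat{N-3},N-1,N|$, is the entire technical content of the theorem; it requires additional Wronskian (trace-type) identities and Weierstrass-function identities and is precisely what the proof in \cite{22LxZhang} carries out at length --- note that the present paper states Theorem \ref{T-6} without proof and defers to that reference, much as its own detailed discrete analogue (the proof of Lemma \ref{lemma-1}) needs auxiliary functions and shift-relation lemmas before Theorem \ref{P-C1} can be applied. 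As it stands, your argument establishes the linear system and the equivalence of the two bilinear forms, asserts the final Pl\"ucker relation, but leaves unproven the cancellation that separates the elliptic case from the classical Freeman--Nimmo computation; this is a genuine gap, which you yourself identify but do not close.
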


To find out a corresponding Hirota's form of the $f$ function \eqref{slt:KPWronskian},
we consider \eqref{slt:KPWronskian} to be a summation of $2^N$ terms,
i.e. $f=\sum_{J \subset S}f_{J}^{}$,
where the generic term $f_{J}^{}$ is the Wronskian $|\widehat{N-1}|$ generated by
\begin{equation}\label{vphi-J}
\boldsymbol{\phi}=(\phi_1, \phi_2, \cdots, \phi_N)^T,
\end{equation}
in which $\phi_j=\Phi_x(k_j)e^{-\gamma(k_j)}$ for $j\in J$
and $\phi_j=\Phi_x(l_j) e^{-\gamma(l_j)}$ for $j\in S\backslash J$,
$J$ is a subset of $S=\{1,2,\cdots,N\}$.
In light of Lemma \ref{L-1}, one can  get the following result.

\begin{lemma}\label{L-5}
The Wronskian $f_{J}^{}$ generated by vector \eqref{vphi-J} can be expressed as
\begin{align}
f_{J}^{}=&(-1)^{\frac{N(N-1)}{2}}
\frac{\sigma(x+\sum_{i\in J }k_{i}+\sum_{j\in S\backslash J}l_j)}{\sigma(x)}
\frac{\prod_{i\in J \atop j\in S\backslash J}\sigma(k_i-l_j)\mathrm{sgn}[j-i]}
{\left(\prod_{i\in J}\sigma^N(k_{i})\right)\left(\prod_{ j\in S\backslash J}\sigma^N(l_{j})\right)} \nonumber\\
 & \times \left(\prod_{i<j\in J}\sigma(k_i-k_j) \right)
 \left(\prod_{i<j\in S\backslash J}\sigma(l_{i}-l_j)\right)
 \exp\left[-\sum_{i\in J}\gamma(k_i)-\sum_{j\in S\backslash J}\gamma(l_j)\right],
\end{align}
especially, when $J$ is the empty set $\varnothing$, we have
\begin{equation}\label{g-KP}
f_{\varnothing}^{}
=(-1)^{\frac{N(N-1)}{2}}
\frac{\sigma(x+\sum_{j\in S}l_j)}{\sigma(x)}
\frac{\prod_{i<j \in S}\sigma(l_i-l_j)}
{\prod_{ j\in S}\sigma^N(l_{j})} \mathrm{exp}\left(-\sum_{j\in S}\gamma(l_j)\right).
\end{equation}
\end{lemma}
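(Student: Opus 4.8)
The plan is to reduce the statement to Lemma~\ref{L-1}, exactly as the companion formula for $f_{\boldsymbol{\nu}}$ was derived in the KdV case, the only new feature being the bookkeeping of the mixed $k$--$l$ pairs. First I would strip the $x$-independent part of each plane-wave factor. Writing $\gamma(k)=\zeta(k)x+r(k)$ with $r(k)=\wp(k)y-\tfrac12\wp'(k)t+\gamma^{(0)}(k)$ independent of $x$, every entry of the generating vector \eqref{vphi-J} takes the form $\Phi_x(\mu)e^{-\gamma(\mu)}=\Psi_x(\mu)\,e^{-r(\mu)}$, where $\Psi_x$ is the Lam\'e function \eqref{Phi} and $\mu\in\{k_i,l_j\}$. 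Since $e^{-r(\mu)}$ does not depend on $x$, it pulls out of the corresponding row of the Wronskian matrix; by multilinearity in the rows,
\[
f_J=\Bigl(\prod_{i\in J}e^{-r(k_i)}\prod_{j\in S\setminus J}e^{-r(l_j)}\Bigr)\,W\bigl(\Psi_x(\mu_1),\dots,\Psi_x(\mu_N)\bigr),
\]
so that the whole problem collapses to evaluating the Wronskian of $N$ Lam\'e functions with arbitrary distinct parameters $\mu_1,\dots,\mu_N$.

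Next I would extract this \emph{master} Wronskian from Lemma~\ref{L-1}. The entries of $\boldsymbol{\varphi}^-$ there are $\varphi_j^-=\Phi_x(-k_j)e^{\gamma(k_j)}=\Psi_x(-k_j)\,e^{r(k_j)}$, of exactly the same shape, so the same factorisation turns \eqref{W-phi-} into a closed formula for $W(\Psi_x(-k_1),\dots,\Psi_x(-k_N))$. Relabelling $-k_j\mapsto\mu_j$ and using that $\sigma$ is odd to rewrite $\sigma(k_i-k_j)=\sigma(\mu_j-\mu_i)$ and $\sigma^N(-\mu_i)=(-1)^N\sigma^N(\mu_i)$, the accumulated signs collapse to $(-1)^{N(N-1)/2}$ and yield
\[
W\bigl(\Psi_x(\mu_1),\dots,\Psi_x(\mu_N)\bigr)=(-1)^{\frac{N(N-1)}{2}}\frac{\sigma(x+\sum_i\mu_i)}{\sigma(x)}\frac{\prod_{i<j}\sigma(\mu_i-\mu_j)}{\prod_i\sigma^N(\mu_i)}\,e^{-\sum_i\zeta(\mu_i)x}.
\]
Combining the $x$-independent prefactor of the previous step with the factor $e^{-\sum_i\zeta(\mu_i)x}$ reconstitutes $\exp\bigl(-\sum_i\gamma(\mu_i)\bigr)$, which is already the exponential appearing in the statement.

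It then remains to specialise $\mu_i=k_i$ for $i\in J$ and $\mu_j=l_j$ for $j\in S\setminus J$ and to split the antisymmetric product $\prod_{i<j}\sigma(\mu_i-\mu_j)$ into its three types of pairs: pairs inside $J$ give $\prod_{i<j\in J}\sigma(k_i-k_j)$, pairs inside $S\setminus J$ give $\prod_{i<j\in S\setminus J}\sigma(l_i-l_j)$, and the remaining mixed pairs give the last factor. For a mixed pair I would use oddness of $\sigma$ to always present it with the $J$-index first, $\sigma(k_i-l_j)$; the sign incurred when the original order had the $(S\setminus J)$-index first is precisely $\mathrm{sgn}[j-i]$, which reproduces $\prod_{i\in J,\,j\in S\setminus J}\sigma(k_i-l_j)\,\mathrm{sgn}[j-i]$. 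Together with the corresponding splittings of $\prod_i\sigma^N(\mu_i)$ and of $\sum_i\gamma(\mu_i)$, this gives the displayed expression for $f_J$, and putting $J=\varnothing$ returns \eqref{g-KP} at once.

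I expect the only delicate point to be this sign accounting: both the collapse of the odd-$\sigma$ signs to $(-1)^{N(N-1)/2}$ in the master formula and, above all, the verification that the single factor $\mathrm{sgn}[j-i]$ correctly records the reordering of every mixed pair. I would confirm these by checking $N=2,3$ against a direct expansion before trusting the general combinatorial argument.
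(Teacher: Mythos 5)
Your proposal is correct and follows essentially the same route as the paper: the paper's entire proof of Lemma \ref{L-5} is the remark that it follows ``in light of Lemma \ref{L-1}'', i.e.\ by substituting the mixed parameter set $\{k_i\}_{i\in J}\cup\{l_j\}_{j\in S\backslash J}$ into that Wronskian formula, which is precisely the relabeling $-k_j\mapsto\mu_j$ you carry out. Your extra care with the $x$-independent factors $e^{-r(\mu)}$ (needed since the KP-case $\gamma$ contains a $\wp(k)y$ term absent from the KdV setting in which Lemma \ref{L-1} is stated) and with the $\mathrm{sgn}[j-i]$ sign bookkeeping for the mixed pairs simply makes explicit what the paper leaves to the reader.
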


Next, for a function $f(x)$, setting notation $f^{\natural}=f(x-\sum_{j=1}^N l_j)$,
and similar to the KdV case, we have the following.
\begin{theorem}\label{T-7}
For the function $f$ in Wronskian form \eqref{slt:KPWronskian} and $f_{\varnothing}$  given by \eqref{g-KP},
\begin{equation}\label{f-tg-KP}
\tau=\frac{f^{\natural}}{f_{\varnothing}^{\natural}}
\end{equation}
is a solution to the bilinear KP equation \eqref{bilinear-b}, i.e.
\begin{equation}\label{bilinear-bf}
(D_x^4- 4D_xD_t-12\wp(x)D_x^2+3D_y^2)\tau \cdot \tau=0,
\end{equation}
and $\tau$ is written in Hirota's form as
\begin{equation}\label{f-Hirota-KP}
\tau=\sum_{\mu=0,1} \frac{\sigma(x+\sum_{i=1}^N \mu_i (k_i-l_i))}{\sigma(x)\prod^N_{i=1}\sigma^{\mu_i}(k_i-l_i)}
\mathrm{exp}\left(\sum^{N}_{j=1} \mu_j \theta_j+\sum^N_{1\leq i<j}\mu_i\mu_j a_{ij}\right),
\end{equation}
where
\begin{subequations}\label{theta-A-KP}
\begin{align}
& \theta_i=-(\zeta(k_i)-\zeta(l_i)) x -(\wp(k_i)-\wp(l_i)) y +\frac{1}{2}(\wp'(k_i)-\wp'(l_i))t +\theta_{i}^{(0)}, ~
\theta_{i}^{(0)}\in \mathbb{C},\label{theta-KP}\\
& e^{a_{ij}}=A_{ij}=\frac{\sigma(k_i-k_j)\sigma(l_i-l_j)}{\sigma(k_i-l_j)\sigma(l_i-k_j)},\label{A-ij-KP}
\end{align}
\end{subequations}
and the summation of $\mu$ means to take all possible $\mu_i=\{0,1\}$  for $ i=1,2,\cdots, N$.
\end{theorem}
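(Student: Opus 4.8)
The plan is to mirror the KdV argument given in the Remark following Theorem~\ref{T-2}, treating the two assertions separately: first that $\tau=f^{\natural}/f_{\varnothing}^{\natural}$ solves \eqref{bilinear-bf}, and then that it coincides with the explicit Hirota expansion \eqref{f-Hirota-KP}. For the first assertion I would start from Theorem~\ref{T-6}, which guarantees that the Wronskian $f$ satisfies the bilinear KP equation \eqref{bilinear-b}. Since \eqref{bilinear-b} carries the non-autonomous potential $\wp(x)$, translating $x\mapsto x-\sum_{j=1}^{N}l_j$ (the $\natural$ operation) shows that $f^{\natural}$ satisfies the same equation with $\wp(x)$ replaced by $\wp(x-\sum_{j=1}^{N} l_j)$. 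I would then write $f^{\natural}=f_{\varnothing}^{\natural}\,\tau$ and observe, from \eqref{g-KP}, that $f_{\varnothing}^{\natural}$ is a Lam\'e-type PWF: as a function of $x$ its logarithm is $\ln\sigma(x)-\ln\sigma(x-\sum_{j=1}^{N} l_j)$ up to a linear term, while its $y$- and $t$-dependence is purely exponential through the phases $\gamma(l_j)$.

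The core of the first assertion is then the quasi-gauge property \eqref{quasi-gauge}. For the $x$-derivatives $D_x^2$ and $D_x^4$ it produces correction terms proportional to $D_x^2 f_{\varnothing}^{\natural}\cdot f_{\varnothing}^{\natural}$ and $D_x^4 f_{\varnothing}^{\natural}\cdot f_{\varnothing}^{\natural}$, which I would evaluate in closed form exactly as in the KdV case. From $(\ln f_{\varnothing}^{\natural})_{xx}=\wp(x-\sum_{j=1}^{N} l_j)-\wp(x)$ and $\zeta'=-\wp$ one gets $D_x^2 f_{\varnothing}^{\natural}\cdot f_{\varnothing}^{\natural}=2\bigl(\wp(x-\sum_{j=1}^{N} l_j)-\wp(x)\bigr)(f_{\varnothing}^{\natural})^2$, and using the elliptic curve relation \eqref{ell-cur-2} in the form $\wp''=6\wp^2-\tfrac{g_2}{2}$ one checks the companion identity $D_x^4 f_{\varnothing}^{\natural}\cdot f_{\varnothing}^{\natural}=12\,\wp(x-\sum_{j=1}^{N} l_j)\,D_x^2 f_{\varnothing}^{\natural}\cdot f_{\varnothing}^{\natural}$. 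For the mixed term $D_xD_t$ the sum in \eqref{quasi-gauge} is empty, so it reduces to pure gauge; crucially, since $f_{\varnothing}^{\natural}$ depends on $y$ only exponentially, the new $D_y^2$ term obeys the ordinary gauge property of Lemma~\ref{L-gauge}, giving $D_y^2 f^{\natural}\cdot f^{\natural}=(f_{\varnothing}^{\natural})^2 D_y^2\tau\cdot\tau$ with no correction. Substituting all of these into the $\wp(x-\sum_{j=1}^{N} l_j)$-shifted equation, the $\tau^2$ contributions cancel by the $D_x^4$--$D_x^2$ relation, and the surviving coefficient of $D_x^2\tau\cdot\tau$ collapses from $-12\,\wp(x-\sum_{j=1}^{N} l_j)$ to $-12\,\wp(x)$, leaving exactly $(f_{\varnothing}^{\natural})^2$ times the bilinear KP equation \eqref{bilinear-bf} for $\tau$.

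For the Hirota form I would expand $\tau=\sum_{J\subset S}f_{J}^{\natural}/f_{\varnothing}^{\natural}$ via Lemma~\ref{L-5}, matching each subset $J$ with the choice $\mu_i=1$ for $i\in J$ and $\mu_i=0$ otherwise. In the ratio $f_{J}^{\natural}/f_{\varnothing}^{\natural}$ the $\sigma^{N}$ normalisations and most $\sigma$ products cancel, the exponentials $-\sum\gamma$ combine into the linear phases $\theta_i$ of \eqref{theta-KP}, and the shift $x\mapsto x-\sum_{j=1}^{N} l_j$ turns the sigma prefactor into $\sigma(x+\sum_{i\in J}(k_i-l_i))/\sigma(x)$. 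The remaining cross terms $\prod\sigma(k_i-l_j)$ must be reorganised into the pairwise factors $A_{ij}=\sigma(k_i-k_j)\sigma(l_i-l_j)/\bigl(\sigma(k_i-l_j)\sigma(l_i-k_j)\bigr)$ of \eqref{A-ij-KP}. This is where I expect the main obstacle: the bookkeeping of the $\mathrm{sgn}[j-i]$ signs together with the repeated use of the sigma identity \eqref{eq:add-5} to convert the global products over $J$ and $S\setminus J$ into the two-index quantities $A_{ij}$. Once this combinatorial reduction is carried through term by term, it reproduces \eqref{f-Hirota-KP} and, together with the gauge computation above, completes the proof.
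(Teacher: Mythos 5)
Your proposal is correct and takes essentially the same route as the paper: the paper's proof of Theorem \ref{T-7} is precisely the combination of Lemma \ref{L-5} (the ratios $f_J^{\natural}/f_{\varnothing}^{\natural}$ yielding the Hirota form \eqref{f-Hirota-KP}) with the quasi-gauge argument of the Remark after Theorem \ref{T-2}, the only new ingredient being that the $D_y^2$ term obeys the ordinary gauge property since $f_{\varnothing}^{\natural}$ depends on $y$ only exponentially --- exactly as you observe, and your closed-form evaluations of $D_x^2 f_{\varnothing}^{\natural}\cdot f_{\varnothing}^{\natural}$ and $D_x^4 f_{\varnothing}^{\natural}\cdot f_{\varnothing}^{\natural}$ are right. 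One minor simplification: the reduction of the cross products $\prod\sigma(k_i-l_j)\,\mathrm{sgn}[j-i]$ to the pairwise factors $A_{ij}$ is pure multiplicative regrouping, with the leftover single-index $\sigma$-factors absorbed into the constants $\theta_i^{(0)}$, so the addition identity \eqref{eq:add-5} is not actually needed at that step.
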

We now present a vertex operator
\begin{equation}\label{eq:vertexkp}
X(k,l)=\Phi_{t_1}(k-l)e^{\xi_{[e]}(\mathbf{t},k)-\xi_{[e]}(\mathbf{t},l)}
e^{\xi(\widetilde \partial,k)-\xi(\widetilde \partial,l)},
\end{equation}
where $\widetilde{\partial}$, $\xi$ and $\xi_{[e]}$ are defined in \eqref{xi-theta}.
 The $\tau$ functions for
elliptic soliton solutions of the KP hierarchy can be generated as following.
\begin{theorem}\label{T-8}
For the KP hierarchy, its $\tau$ function of elliptic $N$-soliton solution,
\begin{equation}\label{tau-KP}
\tau_N(\mathbf{t})=\sum_{J\subset S}\left(\prod_{i\in J}c_i\right)
\left(\mathop{\rm{\prod}}_{i<j \in J}A_{ij}\right)
\frac{\sigma(t_1+\sum_{i\in J}(k_i-l_i))}{\sigma(t_1)\prod_{i\in J}\sigma(k_i-l_i)}
e^{\sum_{i\in J}(\xi_{[e]}(\mathbf{t}, k_i)-\xi_{[e]}(\mathbf{t},l_i))},
\end{equation}
is generated by the vertex operator \eqref{eq:vertexkp} via
\begin{equation}
\tau_N(\mathbf{t})=e^{c_{N}X(k_N,l_N)}\cdots e^{c_{2}X(k_2,l_2)}e^{c_{1}X(k_1,l_1)}\,  1,
\end{equation}
or via transformation
\begin{equation}
\tau_{N}(\mathbf{t})=e^{c_{N}X(k_{N}, l_{N})}\,  \tau_{N-1}(\mathbf{t}),~ ~ \tau_{0}(\mathbf{t})=1.
\end{equation}
In addition, $\tau_N(\mathbf{t})$ is a doubly periodic function with respect to any $k_i$ and $l_j$
for $i,j\!=1,2, \cdots, N$.
\end{theorem}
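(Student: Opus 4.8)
The plan is to follow exactly the strategy indicated for the KdV hierarchy in Theorem~\ref{T-3}: reduce the statement to the vertex-operator algebra for $X(k,l)$ in \eqref{eq:vertexkp}, establish the normal-ordering (product) formula together with nilpotency, and then expand over subsets to recover \eqref{tau-KP}. The whole computation rests on a single exchange relation between the differential part and the $\mathbf{t}$-dependent part of the operators.

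First I would record how the differential exponential acts. Since $\xi(\widetilde\partial,k)=\sum_{n\ge1}\frac{k^n}{n}\partial_{t_n}$, the operator $e^{\xi(\widetilde\partial,k)-\xi(\widetilde\partial,l)}$ translates $\mathbf{t}\mapsto\mathbf{t}+\varepsilon(k)-\varepsilon(l)$ with $\varepsilon(q)=(q,\tfrac{q^2}{2},\tfrac{q^3}{3},\cdots)$; in particular $t_1\mapsto t_1+(k-l)$. Hence it sends $\Phi_{t_1}(k'-l')$ to $\Phi_{t_1+(k-l)}(k'-l')$, and by linearity of $\xi_{[e]}$ in $\mathbf{t}$ it sends $e^{\xi_{[e]}(\mathbf{t},k')-\xi_{[e]}(\mathbf{t},l')}$ to the same exponential times $e^{C}$, where $C=\xi_{[e]}(\varepsilon(k)-\varepsilon(l),k')-\xi_{[e]}(\varepsilon(k)-\varepsilon(l),l')$. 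The key input is the KP-analogue of \eqref{Aij-pq}, namely $\xi_{[e]}(\varepsilon(q),p)=\ln\frac{\sigma(p-q)}{\sigma(p)}$, which follows from Taylor expanding $\ln\sigma(p-q)$ around $q=0$ and using $(\ln\sigma)^{(n)}=\zeta^{(n-1)}$. Substituting yields $e^{C}=\frac{\sigma(k'-k)\sigma(l'-l)}{\sigma(k'-l)\sigma(l'-k)}$, i.e.\ exactly $A_{ij}$ of \eqref{A-ij-KP} when $(k',l')=(k_i,l_i)$ and $(k,l)=(k_j,l_j)$ with $i<j$.

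Next I would compute the ordered product $X(k_s,l_s)\cdots X(k_1,l_1)$ by induction, moving each differential exponential to the far right. Each passage of $e^{\xi(\widetilde\partial,k_j)-\xi(\widetilde\partial,l_j)}$ past the factors carrying an index $i<j$ contributes one factor $A_{ij}$, while the telescoping chain of shifted $\Phi$'s collapses, via $\Phi_x(u)=\sigma(x+u)/(\sigma(x)\sigma(u))$, into the single prefactor $\sigma(t_1+\sum_{i\le s}(k_i-l_i))/\bigl(\sigma(t_1)\prod_{i\le s}\sigma(k_i-l_i)\bigr)$. Setting the two indices equal gives $A_{ii}=0$ because $\sigma(0)=0$, so $X(k,l)^2=0$ and $e^{cX(k,l)}=1+cX(k,l)$. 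Expanding $\prod_i\bigl(1+c_iX(k_i,l_i)\bigr)$ applied to $1$ then produces a sum over subsets $J\subset S$; on the constant $1$ every differential exponential acts as the identity, so only the $\sigma$-ratio and the $\xi_{[e]}$-exponential survive, reproducing \eqref{tau-KP} term by term. Commutativity of the $X(k_i,l_i)$ (hence well-definedness of the ordered product) follows from $A_{ij}=A_{ji}$, which holds by oddness of $\sigma$ together with the symmetry of the collapsed prefactor.

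For double periodicity I would argue as in the KdV remark. The building block $\tau_1=1+c_1\Phi_{t_1}(k_1-l_1)e^{\xi_{[e]}(\mathbf{t},k_1)-\xi_{[e]}(\mathbf{t},l_1)}$ is doubly periodic in $k_1,l_1$: under $k_1\mapsto k_1+2w_i$ the quasi-periodicity factor $e^{2\zeta(w_i)t_1}$ picked up by $\Phi_{t_1}(k_1-l_1)$ through \eqref{periodicity} cancels the factor $e^{-2\zeta(w_i)t_1}$ coming from $\zeta(k_1)\mapsto\zeta(k_1)+2\zeta(w_i)$ in $\xi_{[e]}$, all higher derivatives $\zeta^{(n)}$ ($n\ge1$) being elliptic. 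For a general subset term of \eqref{tau-KP} the same shift also acts on the $A_{ij}$ with $j$ or $i$ equal to the varied index; a short bookkeeping shows these $A$-contributions produce precisely the factor $e^{-2\zeta(w_i)\sum_{r\in J,\,r\ne m}(k_r-l_r)}$ which cancels the companion factor $e^{+2\zeta(w_i)\sum_{r\in J,\,r\ne m}(k_r-l_r)}$ from the $\sigma$-ratio and the $\xi_{[e]}$-exponential, so every term is separately invariant. The step I expect to be the main obstacle is precisely this bookkeeping in the normal-ordering and in the periodicity check: confirming that moving all $s$ differential exponentials rightward yields exactly $\prod_{i<j}A_{ij}$ with no spurious cross terms, that the shifted $\Phi$-chain telescopes into the single claimed $\sigma$-ratio, and that the quasi-periodicity factors cancel in full generality. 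Everything else reduces to the Taylor identity $\xi_{[e]}(\varepsilon(q),p)=\ln\frac{\sigma(p-q)}{\sigma(p)}$ and the elementary relations \eqref{periodicity}.
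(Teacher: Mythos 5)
Your proposal is correct and follows essentially the same route as the paper: the generation claim is verified through the vertex-operator algebra — the KP analogue of the Taylor identity \eqref{Aij-pq} giving the exchange factor $A_{ij}$, the telescoping ordered-product formula, nilpotency $X(k,l)^2=0$, and expansion of $\prod_i\bigl(1+c_iX(k_i,l_i)\bigr)$ over subsets — exactly as the paper does for the KdV case (Theorem \ref{T-3}) and indicates for KP. The only cosmetic difference is the double-periodicity step, where the paper invokes commutativity of the $X(k_i,l_i)$ together with periodicity of $\tau_1$, while you check the quasi-periodicity cancellations term by term in the Hirota form; both arguments are sound.
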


\section{Elliptic \texorpdfstring{$\tau$}{} function of lattice potential KdV equation}\label{lkdv}

In this section we come to the discrete case and derive elliptic soliton solution
for the lattice potential KdV (lpKdV) equation.
The solutions will be  presented in  Hirota's form and proved via bilinear lpKdV equations.
The usual soliton solution  of this equation  has been obtained in \cite{09HZ}.

We employ the conventional shorthand notations for the shifts of a function $u(n,m,h)$,  $n,m,h\in \mathbb{Z}$:
\begin{equation}
 u:=u(n,m,h),~~ \tilde{u}:=u(n+1,m,h),~~  \hat{u}:=u(n,m+1,h),~~ \bar{u}:=u(n,m,h+1),
\end{equation}
and the shifts in the opposite direction are denoted by under-tildes, under-hats and under-bars, like $\undertilde{u}:=u(n-1,m,h)$.

The lpKdV equation \cite{83-NQC}
\begin{equation}\label{eq:lpkdv}
(w-\widehat{\widetilde{w}})(\widehat{w}-\widetilde{w})
=a-b,
\end{equation}
is a well-known integrable discrete equation with 3-dimensional consistency,
which is also the H1 equation in the Adler-Bobenko-Suris classification
for affine quadrilateral equation that are consistent around the cube \cite{ABS-2003}.
The lpKdV equation has a Lax pair \cite{HJN-book}
\begin{subequations}\label{eq:lax-h1}
\begin{align}
&\widetilde{\widetilde{\varphi}}+(\widetilde{\widetilde{w}}-w)\widetilde{\varphi}=(\lambda-a)\varphi,\\
& \widehat{\varphi}=\widetilde{\varphi}-(\widehat{w}-\widetilde{w})\varphi.
\end{align}
\end{subequations}
With new  parametrizations
\begin{equation}\label{pq-para}
a=\wp(\delta)-e_1,~~ b=\wp(\varepsilon)-e_1,
\end{equation}
 the lpKdV  equation  allows a simple solution
 \begin{equation}\label{slt:w0}
w_0 =\zeta(\xi)-n\zeta(\delta)-m\zeta(\varepsilon)-\zeta(\xi_0), \qquad \xi=n\delta + m \varepsilon,
\end{equation}
where $e_1,   \xi_0 \in \mathbb{C}$,
and $\delta, \varepsilon$ are lattice parameters.

To obtain the $\tau$-function in Hirota's form, we derive solution in Casoratian form using bilinear approach.
We introduce the elementary column  vector
$\boldsymbol{\varphi}=(\varphi_{1}(n,m,h),\cdots, \varphi_{N}(n,m,h))^T$
in  Casoratians, where
\begin{subequations}\label{phi-h1}
  \begin{equation}
  \begin{split}
\varphi_{i}&=\varphi_i^+ +(-1)^i\varphi_i^-\\
&=\rho_{n,m,h}^-(k_i)\Phi_{\xi}(k_i)+(-1)^i\rho_{n,m,h}^+(k_i)\Phi_{\xi}(-k_i),
\end{split}
\end{equation}
with
\begin{equation}\label{rho+-}
\rho_{n,m,h}^{\pm}(z)=\Big(\Phi_{\delta}(\pm z)\Big)^n\Big(\Phi_{\varepsilon}(\pm z)\Big)^m
\Big(\Phi_{\gamma}(\pm z)\Big)^h,
\end{equation}
\end{subequations}
and $\Phi_z(u)$ defined in \eqref{Phi}. Here the last index $h$ indicates the shift in Casoratian. Each $\varphi_i$ is the solution of Eq.\eqref{eq:lax-h1} with $\lambda=\wp(k_i)-e_1$ and $w=w_0$.
Due to the relations
\begin{subequations}\label{eq:varphi-shift-1}
\begin{align}
\widehat{\boldsymbol{\varphi}}&
                 =\widetilde{\boldsymbol{\varphi}}+\chi_{\varepsilon,-\delta}(\xi+\delta)\boldsymbol{\varphi},\\
\overline{\boldsymbol{\varphi}}&
                  =\widetilde{\boldsymbol{\varphi}}+\chi_{\gamma,-\delta}(\xi+\delta)\boldsymbol{\varphi},\\
\overline{\boldsymbol{\varphi}}&
                  =\widehat{\boldsymbol{\varphi}}+\chi_{\gamma,-\varepsilon}(\xi+\varepsilon)\boldsymbol{\varphi},
\end{align}
\end{subequations}
we can easily verify that
\begin{equation}
|\widehat{N-1}|_{[1]}=|\widehat{N-1}|_{[2]}=|\widehat{N-1}|_{[3]}.
\end{equation}
For the elliptic $N$-soliton solution of the lpKdV equation, we have the following.

\begin{lemma}\label{lemma-1}
 The lpKdV equation \eqref{eq:lpkdv} with parametrizations \eqref{pq-para}
admits elliptic $N$-soliton solution in Casoratian form:
 \begin{equation}\label{seed-H1}
\begin{split}
 w&=\zeta(\xi+N\gamma)-N\zeta(\gamma)-n\zeta(\delta)-
 m\zeta(\varepsilon)-h\zeta(\gamma)-\zeta(\xi_0)+\frac{g}{f}
\end{split}
\end{equation}
 where $\xi=n\delta+m\varepsilon+h\gamma$,
\begin{align}\label{h1-fg}
f=f(\varphi)=\sigma(\xi)|\widehat{N-1}|, \qquad  g=g(\varphi)=\sigma(\xi)|\widehat{N-2},~N| ,
\end{align}
in which the basic elements $\varphi_i $ are defined as in \eqref{phi-h1}.
 \end{lemma}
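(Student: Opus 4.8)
The plan is to verify Lemma \ref{lemma-1} by the Casoratian technique, splitting the scalar lpKdV equation \eqref{eq:lpkdv} into a pair of bilinear identities for the Casoratians $f$ and $g$ of \eqref{h1-fg} and settling each by the Pl\"ucker relation of Theorem \ref{P-C1}. Write $w = W + g/f$ with the explicit seed $W = \zeta(\xi+N\gamma)-N\zeta(\gamma)-n\zeta(\delta)-m\zeta(\varepsilon)-h\zeta(\gamma)-\zeta(\xi_0)$. First I would compute the two lattice differences appearing in \eqref{eq:lpkdv}. For the seed part the $\zeta$-addition rules \eqref{eq:add-2} and \eqref{eq:add-4} give $W - \widehat{\widetilde W} = \chi_{\delta,\varepsilon}(\xi+N\gamma)$, already a $\sigma$-quotient by \eqref{eq:add-4}, and $\widehat W - \widetilde W = \eta_\varepsilon(\xi+N\gamma)-\eta_\delta(\xi+N\gamma)$, which is elliptic in its argument and so, by comparing zeros and poles on the torus $\mathbb{C}/\mathbb{D}$, also reduces to a single $\sigma$-quotient. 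Putting the $g/f$ contributions over the common denominators $f\widehat{\widetilde f}$ and $\widetilde f\widehat f$, the product $(w-\widehat{\widetilde w})(\widehat w-\widetilde w)$ acquires the denominator $f\widehat{\widetilde f}\,\widetilde f\widehat f$, and the lemma becomes a single numerator identity to be shown equal to $(a-b)\,f\widehat{\widetilde f}\,\widetilde f\widehat f$.

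The decisive point is that this identity factorizes. I expect each factor to collapse to a clean tau-quotient,
\begin{align*}
w - \widehat{\widetilde w} &= c_1\,\frac{\widetilde f\,\widehat f}{f\,\widehat{\widetilde f}}, & \widehat w - \widetilde w &= c_2\,\frac{f\,\widehat{\widetilde f}}{\widetilde f\,\widehat f},
\end{align*}
with $\xi$-independent constants $c_1 = \sigma(\delta+\varepsilon)/(\sigma(\delta)\sigma(\varepsilon))$ and $c_2 = -\sigma(\delta-\varepsilon)/(\sigma(\delta)\sigma(\varepsilon))$, whose product is exactly $c_1c_2 = -\sigma(\delta+\varepsilon)\sigma(\delta-\varepsilon)/(\sigma^2(\delta)\sigma^2(\varepsilon)) = \wp(\delta)-\wp(\varepsilon) = a-b$ by \eqref{eq:add-1} and \eqref{pq-para}. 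The constants can be pinned down by matching residues in the soliton-free case $N=0$, where $f=\sigma(\xi)$ and $g=0$. Clearing denominators, the two displayed relations are equivalent to the bilinear equations
\begin{align*}
\chi_{\delta,\varepsilon}(\xi+N\gamma)\,f\,\widehat{\widetilde f} + g\,\widehat{\widetilde f} - \widehat{\widetilde g}\,f &= c_1\,\widetilde f\,\widehat f,\\
\big(\eta_\varepsilon(\xi+N\gamma)-\eta_\delta(\xi+N\gamma)\big)\,\widetilde f\,\widehat f + \widehat g\,\widetilde f - \widetilde g\,\widehat f &= c_2\,f\,\widehat{\widetilde f},
\end{align*}
each bilinear in the pair $(f,g)$ and their elementary shifts.

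To settle these two equations I would return to $f = \sigma(\xi)|\widehat{N-1}|$ and $g = \sigma(\xi)|\widehat{N-2},N|$ built on the vector $\boldsymbol{\varphi}$ of \eqref{phi-h1}. The shift relations \eqref{eq:varphi-shift-1} express $\widehat{\boldsymbol{\varphi}}$ and $\overline{\boldsymbol{\varphi}}$ as $\chi$-weighted combinations of $\widetilde{\boldsymbol{\varphi}}$ and $\boldsymbol{\varphi}$, while the direction independence $|\widehat{N-1}|_{[1]}=|\widehat{N-1}|_{[2]}=|\widehat{N-1}|_{[3]}$ lets me realize all of $f,\widetilde f,\widehat f,\widehat{\widetilde f}$ (and the matching $g$'s) as Casoratians sharing one common $N\times(N-2)$ column block, with only the last two columns varying. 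Identifying that block as $M$ and the four varying columns as $\mathbf a,\mathbf b,\mathbf c,\mathbf d$, each bilinear equation collapses to the quadratic determinant identity of Theorem \ref{P-C1}, once the $\sigma$-prefactors produced by \eqref{eq:varphi-shift-1} are reconciled with $c_1$, $c_2$ and with $\chi_{\delta,\varepsilon}(\xi+N\gamma)$.

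The main obstacle is the genuinely elliptic bookkeeping of the shifts. In the usual exponential soliton case a lattice shift merely rescales each Casoratian column by a scalar, so the passage to a Pl\"ucker relation is immediate; here the shift acts on $\varphi_i$ through the factors $\rho_{n,m,h}^{\pm}(k_i)$, which depend on $k_i$, so the columns truly recombine via \eqref{eq:varphi-shift-1} and must be disentangled with repeated use of the three-term $\sigma$-identity \eqref{eq:add-5} and the quasi-periodicity \eqref{periodicity}. The companion difficulty is matching the $\sigma$-prefactors: the seed supplies $\sigma(\xi+N\gamma)$ whereas $f$ carries only $\sigma(\xi)$, so the two sides agree only after the Frobenius--Stickelberger evaluation \eqref{eq:FS-1}, together with Lemma \ref{L-1}, is used to expose the hidden $\sigma(\xi+N\gamma)$-type factors inside $|\widehat{N-1}|$ and to confirm $c_1$ and $c_2$ exactly. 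Once these prefactor identities are in place, the two bilinear equations follow from Theorem \ref{P-C1} and the representation \eqref{seed-H1}--\eqref{h1-fg} is verified.
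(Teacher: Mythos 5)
Your first half---the factorization of the lpKdV equation into two bilinear equations---is exactly the paper's bilinearization \eqref{4.11}: your coefficient $\eta_\varepsilon(\xi+N\gamma)-\eta_\delta(\xi+N\gamma)$ equals $\chi_{\delta,-\varepsilon}(\widehat\xi+N\gamma)$, your constants are $c_1=\Phi_\delta(\varepsilon)$ and $c_2=\Phi_\delta(-\varepsilon)$, and $c_1c_2=\wp(\delta)-\wp(\varepsilon)=a-b$ by \eqref{eq:add-1}, precisely as in the paper's proof. Up to that point the two arguments coincide.

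The gap is in the second half, where you assert that the shift relations \eqref{eq:varphi-shift-1}, together with direction independence of the Casoratian, let you realize all of $f,\widetilde f,\widehat f,\widehat{\widetilde f}$ and the matching $g$'s as determinants sharing one common $N\times(N-2)$ block, so that Theorem \ref{P-C1} applies to both equations symmetrically. This collapse works only for shifts of one orientation. The usable recursions are the down-shift ones \eqref{eq:varphi-shift-2}, e.g.\ $\undertilde{\boldsymbol\varphi}(j+1)=\boldsymbol\varphi(j)+\chi_{\gamma,-\delta}(\xi+j\gamma)\undertilde{\boldsymbol\varphi}(j)$: two-term relations by which a down-shifted Casoratian reduces, after column elimination, to a single determinant sharing columns with $f$. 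An up-shift runs the wrong way: from \eqref{eq:varphi-shift-1} one gets $\widehat{\boldsymbol\varphi}(j)=\boldsymbol\varphi(j+1)-\chi_{\gamma,-\varepsilon}(\widehat\xi+j\gamma)\boldsymbol\varphi(j)$, and expanding $|\widehat{\boldsymbol\varphi}(0),\dots,\widehat{\boldsymbol\varphi}(N-1)|$ multilinearly produces a sum of $N+1$ gapped Casoratians, not a single one. This asymmetry is why the paper proves $\mathcal{H}_1=0$ in its doubly down-shifted version \eqref{bilinear-2-1}, where only down-shifts occur, but cannot do the same for $\mathcal{H}_2=0$: shifting $\mathcal{H}_2$ down in both directions drags in the doubly shifted $\undertilde{\underset{\widehat{}}{g}}$, whose gapped column structure does not reduce to the required common-block form by these recursions, while the down-tilde-shifted version \eqref{bilinear-3-1} actually used retains the up-shifts $\widehat f,\widehat g,\undertilde{\widehat f}$. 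At this point the paper needs a genuinely new device absent from your proposal: the gauged vector $\psi_i=\varphi_i\,(\wp(k_i)-\wp(\varepsilon))^{-m}$, which exploits $\Phi_\varepsilon(k_i)\Phi_\varepsilon(-k_i)=\wp(\varepsilon)-\wp(k_i)$ (again \eqref{eq:add-1}) to reverse the $m$-dependence of the two terms of $\varphi_i$, so that the up-hat shift of $\psi_i$ once more satisfies a two-term recursion, $\widehat{\boldsymbol\psi}(j+1)=\boldsymbol\psi(j)+\chi_{\gamma,\varepsilon}(\xi+j\gamma)\widehat{\boldsymbol\psi}(j)$, and only then does the Pl\"ucker argument go through. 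Without this (or an equivalent trick) your ``collapse to Theorem \ref{P-C1}'' step fails for one of the two equations; note that your stated plan to verify the equations in unshifted form makes matters worse, since then $\widetilde f$, $\widehat f$ and $\widehat{\widetilde f}$ are all up-shifts. A secondary inaccuracy: the prefactor matching in the bilinear verification is done with the $\chi$- and $\sigma$-identities \eqref{eq:add-4} and \eqref{eq:add-8}, not with Frobenius--Stickelberger \eqref{eq:FS-1} or Lemma \ref{L-1}, which is a Wronskian (not Casoratian) statement and enters only later, when the Casoratian solution is converted to Hirota's form.
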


\begin{proof}
First, the lpKdV equation can be bilinearized as the following
\begin{subequations}\label{4.11}
\begin{align}
\mathcal{H}_1&\equiv \chi_{\delta,-\varepsilon}(\widehat{\xi}+N\gamma)\widetilde{f}\widehat{f}+\widetilde{f}\widehat{g}-\widetilde{g}\widehat{f}
-\Phi_{\delta}(-\varepsilon)f\widetilde{\widehat{f}}=0,
\label{bilinear-2}\\
\mathcal{H}_2&\equiv
\chi_{\delta,\varepsilon}(\xi+N\gamma)f\widetilde{\widehat{f}}+\widehat{\widetilde{f}}g-\widehat{\widetilde{g}}f
-\Phi_{\delta}(\varepsilon)\widetilde{f}\widehat{f}=0,\label{bilinear-3}
\end{align}
\end{subequations}
where $\Phi_z(u)$ and $\chi_{u,v}(z)$ are defined as in \eqref{Phi} and \eqref{eq:add-4}, respectively.
Viewing \eqref{seed-H1} as a  transformation, it is easy to check that
\begin{equation}
\begin{split}
&  (w-\widehat{\widetilde{w}})(\widehat{w}-\widetilde{w})-\wp(\delta)+\wp(\varepsilon)\\
 =  &\frac{1}{f\widetilde{f}\widehat{f}\widetilde{\widehat{f}}}
 \Bigl(\mathcal{H}_1+\Phi_{\delta}(-\varepsilon)f\widetilde{\widehat{f}}\Bigr)
 \left(\mathcal{H}_2+\Phi_{\delta}(\varepsilon)\widetilde{f}\widehat{f}\right)
-\wp(\delta)+\wp(\varepsilon),
\end{split}
\end{equation}
where we have made use of the formula \eqref{eq:add-1}.

Next, we prove the bilinear equations \eqref{4.11} allow Casoratian solution given in \eqref{h1-fg}.
We prove \eqref{bilinear-2} in its down-tilde-hat-shifted version
\begin{equation}\label{bilinear-2-1}
\chi_{\delta,-\varepsilon}(\undertilde{\xi}+N\gamma)\undertilde{f}\underset{\widehat{}}{f}
+\undertilde{g}\underset{\widehat{}}{f}
-\undertilde{f}\underset{\widehat{}}{g}
-\Phi_{\delta}(-\varepsilon)
f\underset{\widehat{}}{\undertilde{f}}=0.
\end{equation}
In light of the relation \eqref{eq:varphi-shift-1}, and let $(E^3)^j\boldsymbol{\varphi}=\boldsymbol{\varphi}(n,m,h+j\gamma)=\boldsymbol{\varphi}(j)$, we immediately get
\begin{equation}\label{eq:varphi-shift-2}
    \undertilde{\boldsymbol{\varphi}}(j+1)=\boldsymbol{\varphi}(j)+\chi_{\gamma,-\delta}(\xi+j\gamma)\undertilde{\boldsymbol{\varphi}}(j), \quad
    \underset{\widehat{}}{\boldsymbol{\varphi}}(j+1)
    =\boldsymbol{\varphi}(j)+\chi_{\gamma,-\varepsilon}(\xi+j\gamma)\underset{\widehat{}}{\boldsymbol{\varphi}}(j). \\
\end{equation}
Therefore $f$ and $g$ satisfy the following relations,
\begin{subequations}
\begin{align*}
\undertilde{f}&
             =(-1)^{N-1}\sigma(\undertilde{\xi})\frac{\left|\widehat{N-2},  ~ \undertilde{\boldsymbol{\varphi}} {(N-2)}\right|}{X(\gamma,-\delta,N-3)},\\
\underset{\widehat{}}{f}&
           =(-1)^{N-1}\sigma(\underset{\widehat{}}{\xi})\frac{\left|\widehat{N-2}, ~ \underset{\widehat{}}{\boldsymbol{\varphi}}(N-2)\right|}
           {X(\gamma,-\varepsilon,N-3)},\\
 \underset{\widehat{}}{\undertilde{f}}&=\sigma( \underset{\widehat{}}{\undertilde{\xi}})\frac{\left|\widehat{N-3}, ~
           \underset{\widehat{}}{\boldsymbol{\varphi}}(N-2) ,  ~ \underset{\widehat{}}{\undertilde{\boldsymbol{\varphi}}}(N-2)\right|}
           {-\underset{\widehat{}}{X}(\gamma,-\delta,N-3) X(\gamma,-\varepsilon,N-3)},\\
\undertilde{g}&=\sigma(\undertilde{\xi})\frac{\left|\widehat{N-3}, ~ \undertilde{\boldsymbol{\varphi}}(N-2), ~ \boldsymbol{\varphi}(N-1)\right|
            -\chi_{\gamma,-\delta}(\xi+(N-1)\gamma)\left|\widehat{N-2}, ~ \undertilde{\boldsymbol{\varphi}}(N-2)\right|}
           {(-1)^{N-2}X(\gamma,-\delta,N-3)},\\
\underset{\widehat{}}{g}&=\sigma(\underset{\widehat{}}{\xi})\frac{\left|\widehat{N-3}, ~ \underset{\widehat{}}{\boldsymbol{\varphi}}(N-2), ~
            \boldsymbol{\varphi}(N-1)\right|-\chi_{\gamma,-\varepsilon}(\xi+(N-1)\gamma)\left|\widehat{N-2},  ~ \underset{\widehat{}}{\boldsymbol{\varphi}}(N-2)\right|} {(-1)^{N-2}X(\gamma,-\varepsilon,N-3)},
\end{align*}
\end{subequations}
where
\begin{equation}\label{X}
X(a,b,N)=\prod_{j=0}^{N}\chi_{a,b}(\xi+j a).
\end{equation}
Substituting these relations into \eqref{bilinear-2-1} and multiplying  $X(\gamma,-\delta,N-3)X(\gamma,-\varepsilon,N-3)$,
the left-hand side of \eqref{bilinear-2-1} turns out to be
\begin{equation*}
\begin{split}
&|\widehat{N-3},  ~ \underset{\widehat{}}{\boldsymbol{\varphi}}(N-2), ~ \boldsymbol{\varphi}(N-1)|   |\widehat{N-2},  ~ \undertilde{\boldsymbol{\varphi}}(N-2)| \\
- &|\widehat{N-3}, ~ \undertilde{\boldsymbol{\varphi}}(N-2), ~ \boldsymbol{\varphi}(N-1)|  |\widehat{N-2},  ~ \underset{\widehat{}}{\boldsymbol{\varphi}}(N-2)|\\
+&\frac{\sigma(\xi)\sigma(\xi-\delta-\varepsilon)}{\sigma(\xi-\delta)\sigma(\xi-\varepsilon)}
\frac{X(\gamma,-\delta,N-3)}{ \underset{\widehat{}}{X}(\gamma,-\delta,N-3)} \Phi_{\delta}(-\varepsilon)
|\widehat{N-1}| |\widehat{N-3},  ~\underset{\widehat{}}{\boldsymbol{\varphi}}(N-2), ~ \underset{\widehat{}}{\undertilde{\boldsymbol{\varphi}}}(N-2)|.
\end{split}
\end{equation*}
Using the definition of $\chi_{u,v}(z)$ \eqref{eq:add-4} and the following relation
\begin{equation*}
\left(\chi_{\gamma,-\delta}(\underset{\widehat{}}{\xi}+(N-2)\gamma)-\chi_{\gamma,-\varepsilon}(\undertilde{\xi}+(N-2)\gamma)\right)
\underset{\widehat{}}{\undertilde{\boldsymbol{\varphi}}}(N-2)=\undertilde{\boldsymbol{\varphi}}(N-2)-\underset{\widehat{}}{\boldsymbol{\varphi}}(N-2),
\end{equation*}
we can get
\begin{equation*}
\begin{split}
&|\widehat{N-3}, ~\underset{\widehat{}}{\boldsymbol{\varphi}}(N-2), ~\boldsymbol{\varphi}(N-1)| |\widehat{N-2}, ~ \undertilde{\boldsymbol{\varphi}}(N-2)|\\
-&|\widehat{N-3}, ~\undertilde{\boldsymbol{\varphi}}(N-2), ~\boldsymbol{\varphi}(N-1)| |\widehat{N-2},  ~\underset{\widehat{}}{\boldsymbol{\varphi}}(N-2)|
 \\
-&|\widehat{N-1} | |\widehat{N-3}, ~ \underset{\widehat{}}{\boldsymbol{\varphi}}(N-2), ~ \undertilde{\boldsymbol{\varphi}}(N-2)|,
\end{split}
\end{equation*}
which is zero in light of Theorem \ref{P-C1}.
Thus \eqref{bilinear-2} is proved.

Next we prove \eqref{bilinear-3} in its down-tilde-shifted version
\begin{equation}\label{bilinear-3-1}
\chi_{\delta,\varepsilon}(\undertilde{\xi}+N\gamma)\undertilde{f}\widehat{f}+\widehat{f}\undertilde{g}-\widehat{g}\undertilde{f}
-\Phi_{\delta}(\varepsilon)f\undertilde{\widehat{f}}=0.
\end{equation}
To achieve that, we introduce the auxiliary function
\begin{equation}
\begin{split}
\psi_i&=\varphi_i\times\Big(\wp(k_i)-\wp(\varepsilon)\Big)^{-m}\\
&=\Big(\Phi_{\delta}(k_i)\Big)^n\Big(-\Phi_{\varepsilon}(-k_i)\Big)^{-m}
\Big(\Phi_{\gamma}(k_i)\Big)^h\Phi_{\xi}(-k_i)\\
&+\Big(\Phi_{\delta}(-k_i)\Big)^n\Big(-\Phi_{\varepsilon}(k_i)\Big)^{-m}
\Big(\Phi_{\gamma}(-k_i)\Big)^h\Phi_{\xi}(k_i),
\end{split}
\end{equation}
which satisfies
\begin{equation}
\begin{split}
\widehat{\overline{\psi}}_i=\psi_i+\chi_{\gamma,\varepsilon}(\xi)\widehat{\psi_i}, \qquad
\overline{\undertilde{\psi}}{}_i=\psi_i+\chi_{\gamma,-\delta}(\xi)\undertilde{\psi_i}.
\end{split}
\end{equation}
We denote the Casoratians composed by $\boldsymbol{\psi}$ and its shifts as
$\bigl|\widehat{N-1}\bigr|_{\boldsymbol{\psi}},~\bigl|\widehat{N-2},N\bigr|_{\boldsymbol{\psi}}$, etc,
which are distinguished by the index $\boldsymbol{\psi}$.
Such Casoratians are connected with $f(\boldsymbol{\varphi})$ and $g(\boldsymbol{\varphi})$ defined in \eqref{h1-fg}
as the following.
\begin{subequations}
\begin{align*}
f(\boldsymbol{\varphi}) &=\Gamma^m\sigma(\xi)\bigl|\widehat{N-1}\bigr|_{\boldsymbol{\psi}},\\
\undertilde{f}(\boldsymbol{\varphi})&=\Gamma^m\sigma(\undertilde{\xi})
       \frac{\bigl|\widehat{N-2},  ~\undertilde{\boldsymbol{\psi}}(N-2)\bigr|}{(-1)^{N-1}X(\gamma,-\delta,N-3)},\\
\widehat{f}(\boldsymbol{\varphi})&=\Gamma^{m+1}\sigma(\widehat{\xi})
       \frac{\big|\widehat{N-2}, ~ \widehat{\boldsymbol{\psi}}(N-2)\bigr|}
       {(-1)^{N-1}X(\gamma,\varepsilon,N-3)},\\
\undertilde{\widehat{f}}(\boldsymbol{\varphi})&=\Gamma^{m+1}\sigma(\undertilde{\widehat{\xi}})
       \frac{\left| \widehat{N-3}, ~  \widehat{\boldsymbol{\psi}}(N-2),  ~ \undertilde{\widehat{\boldsymbol{\psi}}}(N-1)\right|}  {-X(\gamma,\varepsilon,N-3)X(\gamma,-\delta,N-2)},\\
\undertilde{g}(\boldsymbol{\varphi})&=\Gamma^{m}\sigma(\undertilde{\xi})\\
       &\times \frac{\Bigl|\widehat{N-3}, ~ \undertilde{\boldsymbol{\psi}}(N-2), ~ \boldsymbol{\psi}(N-1)\Bigr|-\chi_{\gamma,-\delta}(\xi+(N-1)\gamma)
       \left|\widehat{N-2}, ~ \undertilde{\boldsymbol{\psi}}(N-2)\right|}
       {(-1)^{N-2}X(\gamma,-\delta,N-3)},\\
\widehat{g}(\boldsymbol{\varphi})&=\Gamma^{m+1}\sigma(\widehat{\xi})\\
       &\times\frac{\bigl| \widehat{N-3}, ~ \widehat{\boldsymbol{\psi}}(N-2), ~ \boldsymbol{\psi}(N-1)\bigr| - \chi_{\gamma,\varepsilon}(\xi+(N-1)\gamma)\left|\widehat{N-2}, ~ \widehat{\boldsymbol{\psi}}(N-2)\right|}
       {(-1)^{N-2}X(\gamma,\varepsilon,N-3)},
\end{align*}
\end{subequations}
where $\Gamma=\prod_{i=1}^N\big(\wp(k_i)-\wp(\varepsilon)\big)$.
Substituting these expressions into \eqref{bilinear-3-1} and multiplying
$\Gamma^{2m+1}X(\gamma,\varepsilon,N-3)X(\gamma,-\delta,N-3)$,
one  obtains
\begin{equation*}
\begin{split}
&| \widehat{N-3}, ~ \widehat{\boldsymbol{\psi}}(N-2), ~ \boldsymbol{\psi}(N-1)| |\widehat{N-2},  ~ \undertilde{\boldsymbol{\psi}}(N-2)|\\
-&| \widehat{N-3}, ~ \undertilde{\boldsymbol{\psi}}(N-2), ~ \boldsymbol{\psi}(N-1)| |\widehat{N-2},  ~ \widehat{\boldsymbol{\psi}}(N-2)|\\
+&\frac{\sigma(\xi)\sigma(\xi+\varepsilon-\delta)}{\sigma(\xi-\delta)\sigma(\xi+\varepsilon)}
\frac{X(\gamma,-\delta,N-3)}{\widehat{X}(\gamma,-\delta,N-3)} \Phi_{\delta}(\varepsilon)
|\widehat{N-1}|_{\boldsymbol{\psi}}\frac{\left| \widehat{N-3}, ~ \widehat{\boldsymbol{\psi}}(N-2), ~ \undertilde{\widehat{\boldsymbol{\psi}}}(N-1)\right|}
{\chi_{\gamma,-\delta}(\widehat{\xi}+(N-2)\gamma)}.
\end{split}
\end{equation*}
With the help of
\begin{equation*}
\begin{split}
&\frac{\chi_{\gamma,-\delta}(\widehat{\xi}+(N-2)\gamma)-\chi_{\gamma,\varepsilon}(\undertilde{\xi}+(N-2)\gamma)}
{\chi_{\gamma,-\delta}(\widehat{\xi}+(N-2)\gamma)}\widehat{\undertilde{\boldsymbol{\psi}}}(N-1)\\
=&~\undertilde{\boldsymbol{\psi}}(N-2)-\frac{\chi_{\gamma,\varepsilon}(\undertilde{\xi}+(N-2)\gamma)}{\chi_{\gamma,-\delta}(\widehat{\xi}+(N-2)\gamma)}
\widehat{\boldsymbol{\psi}}(N-2),
\end{split}
\end{equation*}
we can get
\begin{equation*}
\begin{split}
&| \widehat{N-3}, ~ \widehat{\boldsymbol{\psi}}(N-2), ~\boldsymbol{\psi}(N-1)| |\widehat{N-2},  ~ \undertilde{\boldsymbol{\psi}}(N-2)|\\
-&| \widehat{N-3}, ~ \undertilde{\boldsymbol{\psi}}(N-2), ~ \boldsymbol{\psi}(N-1)| |\widehat{N-2}, ~  \widehat{\boldsymbol{\psi}}(N-2)|\\
-&|\widehat{N-1}|_{\boldsymbol{\psi}}| \widehat{N-3}, ~ \widehat{\boldsymbol{\psi}}(N-2), ~ \undertilde{\boldsymbol{\psi}}(N-2)|,
\end{split}
\end{equation*}
which is zero in light of Theorem \ref{P-C1}.
\end{proof}

We need to convert $\tau$ function from Casoratian to Hirota's form.
To achieve that, we investigate  $|\widehat{N-1}|$ generated by the basic column vector $\boldsymbol{\varphi}^-=(\varphi_1^-,\varphi_2^-,$
$\cdots, \varphi_N^-)^T$ with $\varphi_j^-$ defined in \eqref{phi-h1}.
\begin{lemma}\label{H1-L-2}
For the  basic column vector $\boldsymbol{\varphi}^-$, we have
\begin{align}
 |\widehat{N-1}|_{\boldsymbol{\varphi}^-}= &|\boldsymbol{\varphi}^-,~ E^3\boldsymbol{\varphi}^- ,~(E^3)^2 \boldsymbol{\varphi}^-,~ \cdots  ,~(E^3)^{N-1}\boldsymbol{\varphi}^- |\nonumber \\
=\,&(-1)^{N}\frac{\sigma(\xi-\sum_{i=1}^Nk_i)}{\sigma(\xi)}\cdot
\frac{\prod_{1\leq i<j\leq N}\sigma(k_i-k_j)}{\sigma^{N}(k_1)\cdots\sigma^{N}(k_N)}\prod_{i=1}^N\rho_{n,m,h}^+(k_i),\\
 |\widehat{N-2},N|_{\boldsymbol{\varphi}^-}=&|\boldsymbol{\varphi}^-,~ E^3\boldsymbol{\varphi}^- ,~(E^3)^2 \boldsymbol{\varphi}^-,~ \cdots  ,~(E^3)^{N-2}\boldsymbol{\varphi}^-,~(E^3)^{N}\boldsymbol{\varphi}^- |\nonumber \\
=&\left(\zeta(\xi-\sum_{i=1}^Nk_i)+\sum_{i=1}^N\zeta(k_i)+N\zeta(\gamma)-\zeta(\xi+N\gamma)\right)|\widehat{N-1}|_{\boldsymbol{\varphi^-}},\label{g-phi-}
\end{align}
where $\rho_{n,m,h}^+(k_i)$ is defined in \eqref{rho+-}.
\end{lemma}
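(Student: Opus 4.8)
The plan is to evaluate both Casoratians in closed form by the mechanism already used for the continuous Wronskian in Lemma \ref{L-1}, with the $x$-derivatives there replaced by the shift $E^3$. First I would record that, from \eqref{phi-h1}--\eqref{rho+-}, applying $E^3$ leaves the row factor intact,
\[
(E^3)^s\varphi_j^- = \rho_{n,m,h}^+(k_j)\,\bigl(\Phi_\gamma(k_j)\bigr)^s\,\Phi_{\xi+s\gamma}(-k_j),
\]
so $\prod_{i=1}^N\rho_{n,m,h}^+(k_i)$ factors out of the determinant immediately, producing the prefactor on the right-hand side. Scaling column $s$ by $\sigma(\xi+s\gamma)$ and pulling the resulting $-1/\sigma(k_j)$ out of each row (using $\Phi_\xi(-k)=-\sigma(\xi-k)/(\sigma(\xi)\sigma(k))$ from \eqref{Phi}) then reduces the first identity to evaluating the determinant
\[
E(\xi)=\det\bigl[(\Phi_\gamma(k_j))^{s}\,\sigma(\xi+s\gamma-k_j)\bigr]_{1\le j\le N,\,0\le s\le N-1},
\]
which is entire in $\xi$.

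For $E$ I would induct on $N$, the inductive step being precisely the quartic $\sigma$-identity \eqref{eq:add-5}: at $N=2$, clearing $\Phi_\gamma$ through \eqref{Phi} turns the two-term expansion into $\sigma(k_1)\sigma(\xi-k_1)\sigma(\gamma+k_2)\sigma(\xi+\gamma-k_2)-(1\!\leftrightarrow\!2)$, which \eqref{eq:add-5} collapses to $\sigma(\gamma)\sigma(\xi+\gamma)\sigma(\xi-k_1-k_2)\sigma(k_1-k_2)$, the claimed answer. Equivalently one may run a Liouville argument: $E$ is entire in $\xi$; its quasi-periodicity under $\xi\mapsto\xi+2w_i$, computed from \eqref{periodicity} by collecting the common row and column exponential factors, coincides with that of $\sigma(\xi-\sum_i k_i)\prod_{s=1}^{N-1}\sigma(\xi+s\gamma)$, and its $N$ zeros sit at $\xi=\sum_i k_i$ and $\xi=-s\gamma$ ($s=1,\dots,N-1$). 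The remaining $\mathbf{k}$-dependent prefactor is totally antisymmetric in the $k_j$ and is pinned to $\prod_{i<j}\sigma(k_i-k_j)/\prod_i\sigma^{N}(k_i)$ by comparison with the Frobenius--Stickelberger determinant \eqref{eq:FS-1}, the overall sign being fixed in a degenerate limit. Reassembling the row factors, the column scalings and the sign $(-1)^N$ reproduces the first identity, entirely parallel to \eqref{W-phi-}.

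For the second identity \eqref{g-phi-} I would \emph{not} expand $|\widehat{N-2},N|_{\boldsymbol{\varphi}^-}$ into a single $\sigma$-product (it carries a moving zero and is not of that clean form), but study the ratio $R(\xi)=|\widehat{N-2},N|_{\boldsymbol{\varphi}^-}/|\widehat{N-1}|_{\boldsymbol{\varphi}^-}$ directly. Since the two Casoratians differ only in their last column, $(E^3)^{N}\boldsymbol{\varphi}^-$ against $(E^3)^{N-1}\boldsymbol{\varphi}^-$, the same reduction gives $R=\tfrac{E^\ast}{E}\cdot\tfrac{\sigma(\xi+(N-1)\gamma)}{\sigma(\xi+N\gamma)}$, where $E^\ast$ is the analogue of $E$ for the column set $\{0,\dots,N-2,N\}$. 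A short check of the exponential factors shows the quasi-periodicities of numerator and denominator cancel, so $R$ is \emph{elliptic} in $\xi$; its only poles are simple, at $\xi=\sum_i k_i$ (the zero of the denominator, uncancelled) and at $\xi=-N\gamma$ (the extra pole of $\Phi_{\xi+N\gamma}(-k_j)$ in the bumped column), with residues $+1$ and $-1$. An elliptic function with exactly these two simple poles and residues must equal $\zeta(\xi-\sum_i k_i)-\zeta(\xi+N\gamma)+C$, and the additive constant $C=\sum_i\zeta(k_i)+N\zeta(\gamma)$ is fixed by evaluating at one convenient $\xi$, via $\zeta=\sigma'/\sigma$ together with \eqref{eq:add-2} and \eqref{eq:add-4}. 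As a guide, the continuous counterpart is $\partial_x\ln|\widehat{N-1}|=\zeta(x-\sum_i k_i)-\zeta(x)+\sum_i\zeta(k_i)$, obtained by differentiating \eqref{W-phi-}.

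The hard part will be the bookkeeping at general $N$. For the first identity it is verifying that $E$ vanishes at all of $\xi=-s\gamma$ ($1\le s\le N-1$) and that the $\mathbf{k}$-prefactor is exactly the Frobenius--Stickelberger one, i.e. organizing the repeated use of \eqref{eq:add-5} (or the zero/residue count) so that nothing is lost. For the second identity the delicate point is showing $R$ has \emph{no} poles at the intermediate points $\xi=-s\gamma$ ($1\le s\le N-2$): the zeros of the denominator there must be matched by zeros of the numerator $|\widehat{N-2},N|_{\boldsymbol{\varphi}^-}$, which again rests on the same elliptic identity, after which one confirms that the two surviving residues are precisely $\pm1$ and settles the constant $C$.
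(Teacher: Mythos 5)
Your route is genuinely different from the paper's. The paper proves both formulas algebraically: using the recursion \eqref{eq:add-7} and the expansion \eqref{eq:add-6}, it converts the Casoratian columns into combinations of $\mathbf{1},~\eta_\xi(-\mathbf{k}),~\wp(-\mathbf{k}),~\wp'(-\mathbf{k}),\dots$, and then evaluates everything through \eqref{eq:ND16}, the Frobenius--Stickelberger determinant \eqref{eq:FS-1} and \eqref{eq:FS-2}; the second formula comes from splitting off the bumped column and tracking two leading coefficients. You instead propose Liouville-type arguments in $\xi$. Your structural claims are correct: after removing the row factors $\rho^+_{n,m,h}(k_j)$, each entry picks up the $s$-independent factor $e^{-2\zeta(w_i)k_j}$ under $\xi\mapsto\xi+2w_i$, so $R$ is indeed elliptic; the zero/pole pattern and the target formulas are consistent; and the $N=2$ check via \eqref{eq:add-5} is right.

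However, the proposal defers precisely its own crux, and some deferred steps are asserted without a mechanism that would discharge them. (i) The vanishing of $E$ at $\xi=-s\gamma$, $1\le s\le N-1$, which you flag as the hard part, has a one-line proof you did not find: at $\xi=-s\gamma$ the entries of columns $s$ and $s-1$ have ratio $\Phi_\gamma(k_j)\,\sigma(-k_j)/\sigma(-\gamma-k_j)=1/\sigma(\gamma)$, independent of $j$, so these two columns are proportional; the same observation settles the absence of poles of $R$ at the intermediate points, since both columns $s-1,s$ lie in the column set of $E^{\ast}$ whenever $s\le N-2$. Once these $N-1$ zeros are known, the zero at $\xi=\sum_i k_i$ need not be verified at all: the ratio of $E$ by the conjectured product is elliptic with at most one simple pole, hence constant. (ii) The residues $\pm1$ of $R$ are asserted, not derived. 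The residue at $\xi=-N\gamma$ does come out to $-1$ by a clean computation: the residue of the bumped column is $\rho^+_{n,m,h}(k_j)\bigl(\Phi_\gamma(k_j)\bigr)^N$, and since $\Phi_{-\gamma}(-k_j)=-\Phi_\gamma(k_j)$ this equals $(-1)$ times the last column of $|\widehat{N-1}|_{\boldsymbol{\varphi}^-}$ evaluated at that point; the residue $+1$ at $\xi=\sum_i k_i$ then follows from the residue theorem rather than inspection. (iii) Two steps remain genuinely open in your plan: pinning the $\xi$-independent prefactor $C(\mathbf{k})$ (antisymmetry plus ``comparison with \eqref{eq:FS-1}'' is not an argument --- you would need a further Liouville argument in each $k_j$, the determinant-condensation induction carried out in full, or a controlled degeneration), and fixing the additive constant in the second identity, where ``evaluate at one convenient $\xi$'' conceals the fact that at every candidate point ($\xi\equiv 0$ or $\xi\equiv-s\gamma$) the two Casoratians are singular or vanishing, so one must compare Laurent coefficients. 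These are exactly the computations that the paper's reduction via \eqref{eq:ND16} and \eqref{eq:FS-2} performs automatically.
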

\begin{proof}
For convenience, we first prove the following relation
\begin{equation*}
\begin{split}
&\left|\Phi_{\xi}(-\mathbf{k}) ,~ \Phi_{\xi+\gamma}(-\mathbf{k})\Phi_{\gamma}(\mathbf{k}),~
\cdots ,\Phi_{\xi+(N-1)\gamma}(-\mathbf{k})\Big(\Phi_{\gamma}(\mathbf{k})\Big)^{N-1}  \right| \\
=\,&(-1)^N\frac{\sigma(\xi-\sum_{i=1}^Nk_i)}{\sigma(\xi)}\cdot
\frac{\prod_{1\leq i<j\leq N}\sigma(k_i-k_j)}{\sigma^{N}(k_1)\cdots\sigma^{N}(k_N)},
\end{split}
\end{equation*}
where $\phi(\mathbf{k})$ denotes the column vector $(\phi(k_1), \cdots, \phi(k_N))^T$.
For the basic entries  $\Phi_{\xi+j\gamma}(-k_i)$ $\Big(\Phi_{\gamma}(k_i)\Big)^{j}$ in column, based on the definition of $\chi_{u,v}(z)$ \eqref{eq:add-4}
and the addition formula of Weierstrass $\sigma$ function \eqref{eq:add-5},
we can get
\begin{equation}\label{eq:add-7}
\begin{split}
\Phi_{\xi+j\gamma}(-k_i)\Big(\Phi_{\gamma}(k_i)\Big)^{j}&=\chi_{\gamma,(j-1)\gamma}(\xi)
\Phi_{\xi+(j-1)\gamma}(-k_i)\Big(\Phi_{\gamma}(k_i)\Big)^{j-1}\\
&-
\Phi_{(j-1)\gamma}(-k_i)\Phi_{\xi}(-k_i)\Big(\Phi_{\gamma}(k_i)\Big)^{j-1}, \quad j\geq2,
\end{split}
\end{equation}
and
\begin{equation*}
\Phi_{\xi+\gamma}(-k_i)\Phi_{\gamma}(k_i)
=\left(-\eta_{\xi}(\gamma)+\eta_{\xi}(-k_i)\right)\Phi_{\xi}(-k_i),
\end{equation*}
where $\eta_u(z)$ is defined as in  \eqref{eq:add-2}.
With this relations, it turns out that
\begin{equation*}
\begin{split}
   &\left|\Phi_{\xi}(-\mathbf{k}) ,~ \Phi_{\xi+\gamma}(-\mathbf{k})\Phi_{\gamma}(\mathbf{k}),
   ~
\cdots ,~\Phi_{\xi+(N-1)\gamma}(-\mathbf{k})\Big(\Phi_{\gamma}(\mathbf{k})\Big)^{N-1}  \right| \\
=&\prod_{i=1}^N\Phi_{\xi}(-k_i)\left|\mathbf{1}, ~\eta_{\xi}(-\mathbf{k}),~ -\Phi_{\gamma}(\mathbf{k})\Phi_{\gamma}(-\mathbf{k}),~
\cdots ,~ -\Big(\Phi_{\gamma}(\mathbf{k})\Big)^{N-2}\Phi_{(N-2)\gamma}(-\mathbf{k})  \right|.
\end{split}
\end{equation*}
For the column $-\Big(\Phi_{\gamma}(\mathbf{k})\Big)^{j}
\Phi_{j\gamma}(-\mathbf{k})$,
by setting $\mu=-k_i$, $\nu=\gamma$ in Eq.\eqref{eq:add-6},
 $ -\Big(\Phi_{\gamma}(k_i)\Big)^{j}$ $\Phi_{j\gamma}(-k_i)$ is further written into the linear combination of $\{\wp^{(s)}(-k_i)\}$ with $s=j-1, j-2, $ $j-3, \cdots, 2,1,0$,
where the coefficient of $\wp^{(s)}(-k_i)$  is a function of $\gamma$, and can be calculated with the formula \eqref{eq:FS-2}. In particular, the coefficient of $\wp^{(j-1)}(-k_i)$ is $\frac{1}{j!}$.
We can replace all such columns  from left to right,
and eliminate the linearly dependent columns with the front columns.
As a result, we have
\begin{align*}
&
\left(\prod_{j=1}^N\Phi_{\xi}(-k_j)\right)
\frac{1}{1! 2! \cdots (N-2)!}\\
&\times \left|\mathbf{1} ,  ~\eta_{\xi}(-\mathbf{k}), ~ \wp(-\mathbf{k}), ~ \wp'(-\mathbf{k}),~
  \wp''(-\mathbf{k}),~  \wp'''(-\mathbf{k}),~ \cdots ,~
  \wp^{(N-3)}(-\mathbf{k}) \right|\\
  =&(-1)^N\frac{\sigma(\xi-\sum_{i=1}^Nk_i)}{\sigma(\xi)}
\frac{\prod_{1\leq i<j\leq N}\sigma(k_i-k_j)}{\sigma^{N}(k_1)\cdots\sigma^{N}(k_N)},
\end{align*}
where we  have employed the  relation \eqref{eq:ND16} and the elliptic van der Monde determinant formula \eqref{eq:FS-1}.
The expression for $|\widehat{N-1}|_{\boldsymbol{\varphi}^-}$ is then obtained.

Next, based on \eqref{eq:add-7} we separate $|\widehat{N-2},N|_{\boldsymbol{\varphi}^-} $ into two terms,
\begin{equation*}
\begin{split}
   &|\widehat{N-2},N|_{\boldsymbol{\varphi}^-} =\chi_{\gamma,(N-1)\gamma}(\xi)|\widehat{N-1}|_{\boldsymbol{\varphi}^-}+\prod_{i=1}^N\rho^+_{n,m,h}(k_i)\prod_{i=1}^N\Phi_{\xi}(-k_i)\\
\times&\left|\mathbf{1} , ~   \eta_{\xi}(-\mathbf{k}),  ~ \wp(-\mathbf{k}),  ~\cdots,   \frac{1}{(N-3)!}\wp^{(N-4)}(-\mathbf{k}),~
-\Big(\Phi_{\gamma}(\mathbf{k})\Big)^{N-1}
\Phi_{(N-1)\gamma}(-\mathbf{k})  \right|.
\end{split}
\end{equation*}
Substituting the expression
\begin{equation*}
\begin{split}
 & -\Big(\Phi_{\gamma}(\mathbf{k})\Big)^{N-1}\Phi_{(N-1)\gamma}(-\mathbf{k})\\
  =& \frac{1}{(N-1)!}\wp^{(N-2)}(\mathbf{-k})
  -\frac{1}{(N-2)!}\wp^{(N-3)}(\mathbf{-k})\Big(\zeta((N-1)\gamma)-(N-1)\zeta(\gamma)\Big)+\cdots
  \end{split}
\end{equation*}
into the above equation,
making use of
\begin{equation*}
  \begin{split}
  &\left(\prod_{j=1}^N\Phi_{\xi}(-k_j)\right)
\frac{1}{1! 2! \cdots (N-3)!(N-1)!}\\
&\times |\mathbf{1},  ~\eta_{\xi}(-\mathbf{k}), ~ \wp(-\mathbf{k}), ~ \wp'(-\mathbf{k}),~
  \wp''(-\mathbf{k}),~   \cdots ,~
  \wp^{(N-4)}(-\mathbf{k}),~ \wp^{(N-2)}(-\mathbf{k}) |\\
  =& \left(\zeta\Big(\xi-\!\sum_{i=1}^Nk_i\Big)-\zeta(\xi)+\!\!\sum_{i=1}^N\zeta(k_i)\right)(-1)^N\frac{\sigma(\xi-\sum_{i=1}^Nk_i)}{\sigma(\xi)}
\frac{\prod_{1\leq i<j\leq N}\sigma(k_i-k_j)}{\sigma^{N}(k_1)\cdots\sigma^{N}(k_N)},
  \end{split}
\end{equation*}
we reach to
\begin{align*}
 |\widehat{N-2},N|_{\boldsymbol{\varphi}^-}
=\left(\zeta(\xi-\sum_{i=1}^Nk_i)
+\sum_{i=1}^N\zeta(k_i)+N\zeta(\gamma)-\zeta(\xi+N\gamma)\right)|\widehat{N-1}|_{\boldsymbol{\varphi}^-}.
\end{align*}
Thus Lemma \ref{H1-L-2} is proved.
\end{proof}

To proceed, we introduce $S=\{1,2,\cdots N\}$, $J$ being a subset of $S$,  and by $f^{\sharp}$ we denote
 $f^{\sharp}(\xi)=f(\xi+\sum_{i=1}^N k_i)$.
Hence we are led by the above lemma to the following.

\begin{lemma}
The $\tau$ function in Hirota's form   is defined by
\begin{subequations}
\begin{equation}\label{H1-f-Hirota}
\begin{split}
\tau&=\frac{f^{\sharp}}{f_0^{\sharp}}=\sum_{J\subset S}\tau_{J}\\
&=\sum_{J\subset S}\frac{\sigma(\xi+2\sum_{i \in J}k_i)}{\sigma(\xi)\prod_{i\in J}\sigma(2k_i)}\left(\prod_{i,j\in J \atop i<j}A_{ij}\right)
  \prod_{i\in J}\rho_{n,m,h}(k_i),
  \end{split}
  \end{equation}
where  $A_{ij}$ is defined as in \eqref{A-ij} and
\begin{equation}\label{H1-f-Hirota-a}
  \rho_{n,m,h}(k_i)=\left(\frac{\sigma(k_i-\delta)}{\sigma(k_i+\delta)}\right)^n\left(\frac{\sigma(k_i-\varepsilon)}{\sigma(k_i+\varepsilon)}\right)^m
\left(\frac{\sigma(k_i-\gamma)}{\sigma(k_i+\gamma)}\right)^h\rho_{0,0,0}(k_i).
\end{equation}
\end{subequations}
\end{lemma}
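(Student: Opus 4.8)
The plan is to follow closely the route used for the KdV case in Theorem \ref{T-2}, with Lemma \ref{H1-L-2} playing the role that Lemma \ref{L-1} played there, and with the continuous plane wave exponentials replaced by the discrete factors $\rho^{\pm}_{n,m,h}(k_i)$ of \eqref{rho+-}. Since the bilinear equations were already disposed of in Lemma \ref{lemma-1}, the content here is purely the conversion of the Casoratian $f=\sigma(\xi)|\widehat{N-1}|$ into Hirota's form.

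First I would expand $f$, generated by $\boldsymbol{\varphi}$ with $\varphi_i=\varphi_i^++(-1)^i\varphi_i^-$ (see \eqref{phi-h1}), into $2^N$ sub-Casoratians. Because a Casoratian is multilinear in its rows and each shift $E^3$ acts simultaneously on $\varphi_i^+$ and $\varphi_i^-$, the determinant splits as $f=\sum_{J\subset S}f_J$, where in row $i$ one keeps the branch $\varphi_i^+=\rho^-_{n,m,h}(k_i)\Phi_{\xi}(k_i)$ when $i\in J$ and the branch $(-1)^i\varphi_i^-=(-1)^i\rho^+_{n,m,h}(k_i)\Phi_{\xi}(-k_i)$ when $i\notin J$. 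The empty-set term is exactly $f_0=f_{\varnothing}$, generated by $\boldsymbol{\varphi}^-$ up to the row sign $\prod_i(-1)^i$, whose closed form is supplied by Lemma \ref{H1-L-2}. I would then obtain the closed form of a general $f_J$ by the formal substitution already exploited in Theorem \ref{T-2}: because $\varphi_i^+$ arises from $\varphi_i^-$ under the replacement $k_i\mapsto -k_i$ (using $\rho^+_{n,m,h}(-k_i)=\rho^-_{n,m,h}(k_i)$ and $\Phi_{\xi}(-k_i)\mapsto\Phi_{\xi}(k_i)$), replacing $k_i$ by $-k_i$ for every $i\in J$ in the first formula of Lemma \ref{H1-L-2} yields $f_J$, once the signs $(-1)^i$ and the parity $\sigma(-z)=-\sigma(z)$ are tracked. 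Under this substitution the numerator $\sigma(\xi-\sum_i k_i)$ becomes $\sigma(\xi+\sum_{i\in J}k_i-\sum_{i\notin J}k_i)$, the product $\prod_{i<j}\sigma(k_i-k_j)$ transforms by the sign rules, the denominator $\prod_i\sigma^N(k_i)$ changes only by an overall sign, and $\prod_i\rho^+_{n,m,h}(k_i)$ becomes $\prod_{i\in J}\rho^-_{n,m,h}(k_i)\prod_{i\notin J}\rho^+_{n,m,h}(k_i)$.

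Next I would apply the shift $\sharp$, $\xi\mapsto\xi+\sum_i k_i$. This normalising step sends the numerator of $f_0$ to $\sigma(\xi)$ and that of $f_J$ to $\sigma(\xi+2\sum_{i\in J}k_i)$, reproducing the $\sigma$-part of \eqref{H1-f-Hirota}. Forming $\tau_J=f_J^{\sharp}/f_0^{\sharp}$ then cancels the common factor $\prod_i\sigma^N(k_i)$ and leaves the $\sigma$-difference interaction factors together with, for each $i\in J$, the ratio $\rho^-_{n,m,h}(k_i)/\rho^+_{n,m,h}(k_i)$, which by \eqref{Phi} equals $(\sigma(k_i-\delta)/\sigma(k_i+\delta))^n(\sigma(k_i-\varepsilon)/\sigma(k_i+\varepsilon))^m(\sigma(k_i-\gamma)/\sigma(k_i+\gamma))^h$, i.e. the dynamical part of $\rho_{n,m,h}(k_i)$ in \eqref{H1-f-Hirota-a}. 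Summing over $J$ then gives the asserted expansion.

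The main obstacle is this last algebraic step. After the substitution, the transformed $\sigma$-difference product generates factors $\sigma(k_i+k_j)$ only for pairs straddling $J$ and its complement, whereas the target \eqref{H1-f-Hirota} displays the within-$J$ coefficients $A_{ij}=(\sigma(k_i-k_j)/\sigma(k_i+k_j))^2$ and the single factors $1/\sigma(2k_i)$. The reconciliation proceeds exactly as in Theorem \ref{T-2}: the straddling factors and constants attached to a single index $i$ are absorbed into the free phase $\rho_{0,0,0}(k_i)$, fixing the clean one-soliton shape $\tau_{\{i\}}=\Phi_{\xi}(2k_i)\rho_{n,m,h}(k_i)$; the remaining pairwise contributions then combine, by the oddness of $\sigma$ and the four-term identity \eqref{eq:add-5}, into precisely $\prod_{i<j\in J}A_{ij}$, and one checks that $A_{ij}$ is independent of this calibration. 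Verifying that all residual signs cancel and that the pairwise factor is exactly $A_{ij}$, rather than its negative or inverse, is the only genuinely delicate bookkeeping, and it mirrors the continuous computation behind Theorem \ref{T-2} step for step.
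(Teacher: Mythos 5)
Your proposal is correct and takes essentially the same route as the paper: split the Casoratian into $2^N$ sub-Casoratians indexed by sign choices, evaluate each one via Lemma \ref{H1-L-2} under the substitution $k_i\mapsto -k_i$ for $i\in J$, apply the shift $\sharp$, form $\tau_J=f_J^{\sharp}/f_0^{\sharp}$, and absorb the straddling factors $\sigma(k_i+k_j)/\sigma(k_i-k_j)$ (for $i\in J$, $j\notin J$) into the calibration constant $\rho_{0,0,0}(k_i)$ so that the within-$J$ factors combine into $\prod_{i<j\in J}A_{ij}$. One small remark: that final combination is pure product rearrangement (the absorbed products over all $j\neq i$ cancel the straddling factors and double-count each within-$J$ pair, which is what produces the square in $A_{ij}$), so the four-term identity \eqref{eq:add-5} you invoke is not actually needed at that step.
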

\begin{proof}
Similar to the KdV case, $f=\sigma(\xi)|\widehat{N-1}|$ can be split and then written as a sum of $2^N$ terms in Casoratian form, the basic entries of each Casoratian are $\phi_j\!=\!(\nu_j)^j\Phi_{\xi}(\nu_jk_j)\rho_{n,m,h}^-(\nu_jk_j)$, in which $\nu_j$ run over $\{1,-1\}$. $\phi_j$ can be obtained from $\varphi_j^-$ by multiplying  $(\nu_j)^j$
and replacing $k_j$ with $-\nu_j k_j$,  $\boldsymbol{\nu}$ denotes cluster
$\boldsymbol{\nu}=\{\nu_1, \nu_2, \cdots, \nu_N\}$ and
$|\boldsymbol{\nu}|$  denotes the length of $\boldsymbol{\nu}$, that is  the number of
positive $\nu_j$'s in the cluster $\boldsymbol{\nu}$. Thus we have
\begin{equation}
f=\sum^{N}_{l=0} \sum_{|\boldsymbol{\nu}|=l}f_{\boldsymbol{\nu}}=\sum^{N}_{l=0}f_l,
\end{equation}
where $f_{\boldsymbol{\nu}}$, in light of Lemma \ref{H1-L-2},  can be expressed as
\begin{equation}
\begin{split}
f_{\boldsymbol{\nu}}
= &(-1)^N\prod^{N}_{j=1}(\nu_j)^j\cdot
\sigma(\xi+\sum_{i=1}^N \nu_i k_i)\cdot
\frac{\prod_{1\leq i<j\leq N}\sigma(-\nu_i k_i+\nu_j k_j)}
{\sigma^{N}(-\nu_1 k_1)\cdots\sigma^{N}(-\nu_N k_N)}\prod_{i=1}^N \rho^-_{n,m,h}(\nu_ik_i)\\
=& (-1)^\frac{N(N-1)}{2}\cdot
\sigma(\xi+\sum_{i=1}^N \nu_i k_i)\cdot
\frac{\prod_{1\leq i<j\leq N}\nu_i\,\sigma(\nu_i k_i-\nu_j k_j)}
{\sigma^{N}(k_1)\cdots\sigma^{N}(k_N)}\prod_{i=1}^N \rho^-_{n,m,h}(\nu_ik_i),\label{W-phi-eps}
\end{split}
\end{equation}
and in particular, $f_0$ is
\begin{align}\label{h1-f-}
 f_0=&(-1)^{\frac{N(N-1)}{2}}\sigma(\xi-\sum_{i=1}^Nk_i)\cdot
\frac{\prod_{1\leq i<j\leq N}\sigma(k_i-k_j)}{\sigma^{N}(k_1)\cdots\sigma^{N}(k_N)}\prod_{i=1}^N\rho^+_{n,m,h}(k_i),
\end{align}
where $\rho^{\pm}_{n,m,h}(k_i)$ are defined as \eqref{rho+-}.
Consider such $J$ where all $\nu_i=1$,  and consequently in  $S\backslash J$ all $\nu_i=-1$.
Then we have
\begin{equation*}
\begin{split}
  \frac{f_{\boldsymbol{\nu}}^{\sharp}}{f_0^{\sharp}}=   \tau_J &=\frac{\sigma(\xi+2\sum_{i\in J}k_i)}{\sigma(\xi)}\prod_{i\in J \atop j\in S\backslash J}\frac{\sigma(k_i+k_j)}{\sigma(k_i-k_j)\cdot \rm{sgn}[j-i]}\\
  &\times\prod_{i\in J}\left(\frac{\sigma(k_i-\delta)}{\sigma(k_i+\delta)}\right)^n\left(\frac{\sigma(k_i-\varepsilon)}{\sigma(k_i+\varepsilon)}\right)^m
\left(\frac{\sigma(k_i-\gamma)}{\sigma(k_i+\gamma)}\right)^h.
\end{split}
\end{equation*}
By defining
\begin{equation*}
  \rho_{0,0,0}(k_i)=\sigma(2k_i)\prod_{1\leq j\leq N \atop j\neq i }\frac{\sigma(k_i+k_j)}{\sigma(k_i-k_j)\rm{sgn[j-i]}},
\end{equation*}
and noting that
 \begin{equation*}
 \begin{split}
  &\prod_{i\in J \atop j\in S\backslash J}\frac{\sigma(k_i+k_j)}{\sigma(k_i-k_j)\cdot \rm{sgn}[j-i]} \cdot
   \prod_{i\in J}\prod_{1\leq j\leq N \atop j\neq i }\frac{\sigma(k_i-k_j)\rm{sgn[j-i]}}{\sigma(k_i+k_j)}\\
  =&\prod_{i,j\in J\atop  i<j}
\left(\frac{\sigma(k_{i}- k_j)}{\sigma(k_{i}+ k_j)}\right)^2
=\prod_{i,j\in J\atop  i<j} A_{ij},
  \end{split}
 \end{equation*}
we have
 \begin{equation*}
 \tau_J=\frac{\sigma(\xi+2\sum_{i \in J}k_i)}{\sigma(\xi)\prod_{i\in J}\sigma(2k_i)}\left(\prod_{i,j\in J \atop i<j}A_{ij}\right)
  \prod_{i\in J}\rho_{n,m,h}(k_i),
 \end{equation*}
which gives rise to the $\tau$ function defined by \eqref{H1-f-Hirota} coupled with \eqref{H1-f-Hirota-a}.

\end{proof}

Next, we introduce a function
  \begin{equation}\label{H1-p-Hirota}
  \begin{split}
p&=\frac{\left(\zeta(\xi)+N\zeta(\gamma)-\zeta(\xi+N\gamma+\sum_{i=1}^N k_i)\right)f^{\sharp}-g^{\sharp}}{f_0^{\sharp}}.
\end{split}
\end{equation}
Performing a similar computation as we have done for $\tau$ in the above lemma,  we can express
$p$ in terms of $\tau_J$:
\begin{equation*}
\begin{split}
p&=\left(\zeta(\xi)+N\zeta(\gamma)-\zeta(\xi+N\gamma+\sum_{i=1}^N k_i)\right)\frac{f^{\sharp}}{f_0^{\sharp}}\\
&-\sum_{J\subset S}\left(\zeta(\xi+2\sum_{i\in J}k_i)- \! \sum_{i\in J}\zeta(k_i)+  \!\!\! \sum_{j\in  S\backslash J}\zeta(k_j)+N\zeta(\gamma)-\zeta(\xi+N\gamma+ \! \!\sum_{i=1}^N k_i)\right)\tau_{J}\\
&=\sum_{J\subset S}\left(\zeta(\xi)-\zeta(\xi+2\sum_{i \in J}k_i)+\sum_{i\in J}\zeta(k_i)-\sum_{j\in  S\backslash J}\zeta(k_j)\right)\tau_{J}.
\end{split}
\end{equation*}
Then we have the following bilinear system w.r.t. $\tau$ and $p$.

\begin{theorem}\label{th-h1}
The bilinear lpKdV system \eqref{4.11}  can be represented in terms of $\tau$ and $p$:
\begin{subequations}\label{eq:bilinear-h1}
\begin{align}
\chi_{-\varepsilon,\delta}(\widehat{\xi})\left(\widetilde{\tau}\widehat{\tau}-\tau\widehat{\widetilde{\tau}}\right)+\widetilde{p}\widehat{\tau}-\widehat{p}\widetilde{\tau}
&=0,\\
\chi_{\varepsilon,\delta}(\xi)\left(\tau\widehat{\widetilde{\tau}}-\widetilde{\tau}\widehat{\tau}\right)+\widehat{\widetilde{p}}\tau
-p\widehat{\widetilde{\tau}}&=0,
\end{align}
\end{subequations}
where $\tau$ and $p$ are defined in \eqref{H1-f-Hirota} and \eqref{H1-p-Hirota}.
\end{theorem}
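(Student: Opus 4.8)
The plan is to transport the Casoratian bilinear identities \eqref{4.11}, already established in Lemma \ref{lemma-1}, onto the pair $(\tau,p)$ through the change of dependent variables $f^{\sharp}=f_0^{\sharp}\tau$ and the definition \eqref{H1-p-Hirota} of $p$. Since $\mathcal{H}_1=0$ and $\mathcal{H}_2=0$ hold for all $n,m,h$, their $\sharp$-shifted forms $\mathcal{H}_1^{\sharp}=0$ and $\mathcal{H}_2^{\sharp}=0$ hold as well, and the continuous $\sharp$-shift commutes with the lattice shifts in $n$ and $m$. Reading \eqref{H1-p-Hirota} as $g^{\sharp}=f_0^{\sharp}(\alpha\tau-p)$ with $\alpha=\zeta(\xi)+N\zeta(\gamma)-\zeta(\xi+N\gamma+\sum_{i=1}^N k_i)$, I would substitute $f^{\sharp}=f_0^{\sharp}\tau$, $g^{\sharp}=f_0^{\sharp}(\alpha\tau-p)$ and all their tilde/hat shifts into $\mathcal{H}_1^{\sharp}$ and $\mathcal{H}_2^{\sharp}$.

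The next step is to extract the gauge factors. From the explicit form \eqref{h1-f-} of $f_0$, the quasi-periodicity \eqref{periodicity} of $\sigma$, and $\widetilde{\rho^+_{n,m,h}}(k_i)=\Phi_{\delta}(k_i)\rho^+_{n,m,h}(k_i)$, $\widehat{\rho^+_{n,m,h}}(k_i)=\Phi_{\varepsilon}(k_i)\rho^+_{n,m,h}(k_i)$, the $\Phi$-products cancel in the cross ratio and one obtains
\[
\frac{f_0^{\sharp}\,\widehat{\widetilde{f_0^{\sharp}}}}{\widetilde{f_0^{\sharp}}\,\widehat{f_0^{\sharp}}}=\frac{\sigma(\xi)\,\sigma(\xi+\delta+\varepsilon)}{\sigma(\xi+\delta)\,\sigma(\xi+\varepsilon)}
\]
and its reciprocal. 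Factoring $\widetilde{f_0^{\sharp}}\widehat{f_0^{\sharp}}$ out of the first three terms of $\mathcal{H}_1^{\sharp}$ (and $f_0^{\sharp}\widehat{\widetilde{f_0^{\sharp}}}$ out of $\mathcal{H}_2^{\sharp}$) and dividing by it, the $p$-linear contributions collapse at once to $\widetilde{p}\widehat{\tau}-\widehat{p}\widetilde{\tau}$ (respectively $\widehat{\widetilde{p}}\tau-p\widehat{\widetilde{\tau}}$), which already reproduces the inhomogeneous part of \eqref{eq:bilinear-h1}.

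It then remains to check, for each equation, two scalar identities. For $\mathcal{H}_1^{\sharp}$ these are
\[
\chi_{\delta,-\varepsilon}\Bigl(\widehat{\xi}+N\gamma+\textstyle\sum_{i=1}^N k_i\Bigr)+\widehat{\alpha}-\widetilde{\alpha}=\chi_{-\varepsilon,\delta}(\widehat{\xi}),\qquad \Phi_{\delta}(-\varepsilon)\,\frac{\sigma(\xi)\,\sigma(\xi+\delta+\varepsilon)}{\sigma(\xi+\delta)\,\sigma(\xi+\varepsilon)}=\chi_{-\varepsilon,\delta}(\widehat{\xi}),
\]
with the parallel pair for $\mathcal{H}_2^{\sharp}$ involving $\alpha-\widehat{\widetilde{\alpha}}$ and $\chi_{\varepsilon,\delta}(\xi)$. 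The first identity in each pair follows by writing every $\chi$ in its $\zeta$-form \eqref{eq:add-4}: the long-argument $\zeta$-terms carrying $N\gamma+\sum_{i=1}^N k_i$ cancel telescopically against $\widehat{\alpha}-\widetilde{\alpha}$ (resp. $\alpha-\widehat{\widetilde{\alpha}}$), leaving precisely $\zeta(\delta)-\zeta(\varepsilon)+\zeta(\xi+\varepsilon)-\zeta(\xi+\delta)=\chi_{-\varepsilon,\delta}(\widehat{\xi})$. The second identity in each pair is the $\sigma$-form \eqref{eq:add-4} of $\chi$ together with the definition \eqref{Phi} of $\Phi$. Combining the two turns the leftover quadratic terms into $\chi_{-\varepsilon,\delta}(\widehat{\xi})(\widetilde{\tau}\widehat{\tau}-\tau\widehat{\widetilde{\tau}})$ and $\chi_{\varepsilon,\delta}(\xi)(\tau\widehat{\widetilde{\tau}}-\widetilde{\tau}\widehat{\tau})$, which is exactly \eqref{eq:bilinear-h1}.

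I expect the only real difficulty to be organizational: correctly tracking the $\sharp$-, tilde-, and hat-shifts of the inhomogeneous factor $\alpha$, whose $\zeta$-values move nontrivially under the lattice shifts, and keeping the gauge term $f_0^{\sharp}$ consistent across the four terms of each $\mathcal{H}$. Once the telescoping in the first scalar identity is arranged, everything else reduces to routine applications of \eqref{eq:add-4}, \eqref{periodicity}, and \eqref{Phi}, with no new structural input required.
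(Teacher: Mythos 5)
Your proposal is correct and follows essentially the same route as the paper's proof: substituting $f^{\sharp}=f_0^{\sharp}\tau$ and $g^{\sharp}=f_0^{\sharp}(\alpha\tau-p)$ into the $\sharp$-shifted equations \eqref{4.11}, factoring out the products of shifted $f_0^{\sharp}$, and reducing the coefficients via the $\zeta$-telescoping and the identities \eqref{eq:add-8}. The only difference is presentational: the paper compresses the $\zeta$-telescoping step implicitly into writing \eqref{xx}, whereas you spell it out as an explicit scalar identity, which is a correct and welcome clarification.
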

\begin{proof}
After the transformation $\xi\to\xi+\sum_{i=1}^N k_i$,
the elliptic $N$-soliton solution \eqref{seed-H1} is rewritten  as
\begin{equation*}
\begin{split}
w^{\sharp}
&=w_0-v,
\end{split}
\end{equation*}
where $w_0$ is given in \eqref{slt:w0}
and
\begin{equation*}
\begin{split}
v&=\zeta(\xi)+N\zeta(\gamma)-\zeta(\xi+N\gamma+\sum_{i=1}^N k_i)-\frac{g^{\sharp}}{f^{\sharp}}.
\end{split}
\end{equation*}
Note that  $f^{\sharp}=\tau f_0^{\sharp}$ and $\left(\zeta(\xi)+N\zeta(\gamma)-\zeta(\xi+N\gamma+\sum_{i=1}^N k_i)\right)f^{\sharp}-g^{\sharp}=p f_0^{\sharp}$, the bilinear equations \eqref{4.11} can be rewritten as
\begin{subequations}\label{xx}
\begin{align}
\widetilde{f_0^{\sharp}}\widehat{f_0^{\sharp}}\left(\chi_{-\varepsilon,\delta}(\widehat{\xi})\widetilde{\tau}\widehat{\tau}+\widetilde{p}\widehat{\tau}
-\widehat{p}\widetilde{\tau}\right)
&=f_0^{\sharp}\widehat{\widetilde{f_0^{\sharp}}}\Phi_{\delta}(-\varepsilon)\tau\widehat{\widetilde{\tau}},\\
f_0^{\sharp}\widehat{\widetilde{f_0^{\sharp}}}\left(\chi_{\varepsilon,\delta}(\xi)\tau\widehat{\widetilde{\tau}}+\widehat{\widetilde{p}}\tau
-p\widehat{\widetilde{\tau}}\right)
&=\widetilde{f_0^{\sharp}}\widehat{f_0^{\sharp}}\Phi_{\delta}(\varepsilon)\widetilde{\tau}\widehat{\tau}.
\end{align}
\end{subequations}
Recalling the expression \eqref{h1-f-} of $f_0$ and making use of the following relations
\begin{equation}\label{eq:add-8}
  \Phi_{\delta}(-\varepsilon)\frac{\sigma(\xi)\sigma(\xi+\delta+\varepsilon)}{\sigma(\xi+\delta)\sigma(\xi+\varepsilon)}
  =\chi_{-\varepsilon,\delta}(\xi+\varepsilon), ~~
  \Phi_{\delta}(\varepsilon)\frac{\sigma(\xi+\delta)\sigma(\xi+\varepsilon)}{\sigma(\xi)\sigma(\xi+\delta+\varepsilon)}
  =\chi_{\varepsilon,\delta}(\xi),
\end{equation}
\eqref{xx} yields the bilinear equations \eqref{eq:bilinear-h1}.

\end{proof}

\begin{remark}
The $\tau$ function \eqref{H1-f-Hirota} can be generated by the vertex operator \eqref{eq:vertexkdv}.
In fact, redefining $c_i=\left(\frac{\sigma(k_i-\gamma)}{\sigma(k_i+\gamma)}\right)^h\rho_{0,0,0}(k_i)$
and introducing Miwa's coordinates \cite{Miwa82}
\begin{equation}
t_{2j+1}=\frac{\delta^{2j+1}n+\varepsilon^{2j+1}m}{2j+1},
\end{equation}
the $\tau$ function \eqref{H1-f-Hirota} exactly has an expression \eqref{tau-ver},
which can be generated by the vertex operator \eqref{eq:vertexkdv} as in Theorem \ref{T-3}.
\end{remark}

Finally, let us write down the explicit form of the elliptic 1-soliton solution. After a shift $\xi\to\xi+k_1$
\begin{equation*}
\begin{split}
  w^{\sharp}&=w_0-v\\
  &=w_0-\left(\chi_{\gamma,k_1}(\xi)-\frac{g^{\sharp}}{f^{\sharp}}\right),
  \end{split}
\end{equation*}
where $w_0=\zeta(\xi)-n\zeta(\delta)-m\zeta(\varepsilon)-h\zeta(\gamma)+\zeta(k_1)-\zeta(\xi_0)$.
Here we deal with the $p$ function by adding $\zeta(k_1)\tau$  to $p$  in \eqref{H1-p-Hirota}, which does not change the bilinear equations \eqref{eq:bilinear-h1}.
Thus, by using \eqref{H1-f-Hirota}, we can get
\begin{align*}
\tau&=1+\rho_{n,m,h}(k_1)\Phi_{\xi}(2k_1),\\
p&=\Phi_{\xi}(k_1)^2\rho_{n,m,h}(k_1),
\end{align*}
in which we redefine $  \rho_{n,m,h}(k_1)$ by absorbing $\rho_{0,0,0}(k_1)=\sigma(2k_1)$, i.e.
\begin{equation*}
  \rho_{n,m,h}(k_1)=\left(\frac{\sigma(k_1-\delta)}{\sigma(k_1+\delta)}\right)^n\left(\frac{\sigma(k_1-\varepsilon)}{\sigma(k_1+\varepsilon)}\right)^m
\left(\frac{\sigma(k_1-\gamma)}{\sigma(k_1+\gamma)}\right)^h\rho_{0,0,0}(k_1).
\end{equation*}
The expression of $v=p/\tau$  is formally  same as the elliptic 1-soliton solution
obtained from the Cauchy matrix approach (see  Eq.(4.33) in \cite{IMRN2010}).

\section{Elliptic \texorpdfstring{$\tau$}{} function of lattice  potential KP equation}\label{lkp}

In this section, we investigate the lattice  potential KP  (lpKP) equation
\begin{equation}\label{eq:lkp}
(\widehat{\overline{w}}-\widehat{\widetilde{w}})(\overline{w}-\widehat{w})=(\widetilde{\overline{w}}-\widehat{\widetilde{w}})(\overline{w}-\widetilde{w}),
\end{equation}
which is one of the  most important $3$-dimensional lattice equations \cite{NCWQ84}.
In \cite{YN-JMP-2013}, the elliptic  soliton solution of lpKP equation was derived
using the Cauchy matrix method, and later, an elliptic  scheme of direct linearisation was established in \cite{22NSZ}.
As discussed  in the previous sections, we will give the bilinear form and $\tau$ function of the elliptic solitons
of the lpKP equation. The connection between the $\tau$ function and  vertex operator will be addressed as well.
\begin{lemma}
	The elliptic N-soliton solution in Casoratian form  of the lpKP equation is given by
\begin{equation}\label{5.2}
\begin{split}
w&=\zeta(\xi+N\gamma)-N\zeta(\gamma)
-n\zeta(\delta)-m\zeta(\varepsilon)-h\zeta(\gamma)-\zeta(\xi_0)+\frac{g}{f},
\end{split}
\end{equation}
 where
\begin{equation}\label{slt:LKPCas}
\begin{split}
f=\sigma(\xi)|\widehat{N-1}|, \qquad g=\sigma(\xi)|\widehat{N-2},N|,
\end{split}
\end{equation}
and the generating  column vector $\boldsymbol{\phi}=(\phi_1,\cdots,\phi_N)^T$ is composed by
\begin{equation}\label{5.4}
\begin{split}
\phi_{i}=\rho_{n,m,h}^{-}(k_i)\Phi_{\xi}(k_i)+\rho_{n,m,h}^-(l_i)\Phi_{\xi}(l_i),
\end{split}
\end{equation}
with $\rho_{n,m,l}^{-}(z)$ is defined in \eqref{rho+-}.
\end{lemma}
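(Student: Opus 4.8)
The plan is to follow closely the proof of Lemma~\ref{lemma-1} for the lpKdV equation, the only genuinely new ingredient being the KP-type structure of the generating vector \eqref{5.4}, whose columns now carry two independent spectral parameters $k_i$ and $l_i$ in place of the $\pm k_i$ pair of the KdV case. First I would bilinearise the lpKP equation \eqref{eq:lkp}. Treating \eqref{5.2} as a transformation and substituting the resulting expression for $w$ in terms of $f$ and $g$, the rational three-dimensional equation \eqref{eq:lkp} should collapse, after clearing the common denominator built from $f$ and its single shifts, into a bilinear system among $f$, $g$ and their shifts in the three lattice directions, exactly as the two identities $\mathcal{H}_1,\mathcal{H}_2$ in \eqref{4.11} arise for lpKdV. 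The addition rules \eqref{eq:add-1}, \eqref{eq:add-4} and \eqref{eq:add-5} convert each factor such as $\widehat{\overline{w}}-\widehat{\widetilde{w}}$ into a $\chi$-coefficient times a bilinear product of $f$ and $g$, so that the content of lpKP becomes a short list of Hirota-type equations; by the symmetry of \eqref{eq:lkp} in the three directions I expect three such equations, one attached to each pair of directions.

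The second and main step is to verify that the Casoratian data \eqref{slt:LKPCas} satisfy this bilinear system. Here I would first derive the shift relations for the generating vector $\boldsymbol{\phi}$, the KP analogues of \eqref{eq:varphi-shift-1}: since each summand $\rho^-_{n,m,h}(z)\Phi_\xi(z)$ in \eqref{5.4} behaves under a tilde, hat or bar shift as a discrete plane-wave factor, the four-term identity \eqref{eq:add-5} together with the definition \eqref{eq:add-4} of $\chi$ yields linear relations expressing $\widehat{\boldsymbol{\phi}}$ and $\overline{\boldsymbol{\phi}}$ through $\widetilde{\boldsymbol{\phi}}$ and $\boldsymbol{\phi}$ with $\chi$-coefficients. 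As in the lpKdV argument, these relations let me rewrite every shifted Casoratian $\widetilde f,\widehat f,\overline f,\widetilde g,\ldots$, after passing to a convenient down-shifted version of each bilinear equation, as a single Casoratian with one modified column, the spurious prefactors collecting into products of $\chi$ along the Casoratian $h$-direction of the type $X(a,b,N)$ introduced in \eqref{X}.

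Substituting these Casoratian expressions into each bilinear equation and clearing the common $\chi$-prefactors, the left-hand side should collapse to a three-term combination of products of $N\times N$ determinants sharing a fixed $N\times(N-2)$ block, which vanishes identically by the special Pl\"ucker relation of Theorem~\ref{P-C1}; the matching into Pl\"ucker form requires a single linear identity among the relevant $\chi$-functions, the counterpart of the two auxiliary $\chi$-relations used in the proof of Lemma~\ref{lemma-1}. Finally, the splitting of $f=\sigma(\xi)|\widehat{N-1}|$ into $2^N$ elementary Casoratians, computed for the continuous KP case in Lemma~\ref{L-5}, confirms that \eqref{5.2} is indeed the elliptic $N$-soliton and identifies $f$ and $g$ with the claimed $\tau$-data. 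I expect the main obstacle to be organisational rather than conceptual: fixing the precise bilinear system for the genuinely three-dimensional equation and then tracking the $\chi$-prefactors and column shifts for the two-parameter KP vector, since three lattice directions and two parameters per column sharply increase the bookkeeping compared with lpKdV, even though the underlying algebraic engine remains the single Pl\"ucker identity.
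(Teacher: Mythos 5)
Your proposal is correct and takes essentially the same route as the paper: substituting \eqref{5.2} into \eqref{eq:lkp} factorizes the lattice equation into differences of $w$, each difference yielding a Hirota-type bilinear equation with $\chi$-coefficients, and these are then verified in suitably down-shifted versions via the Casoratian shift relations and the Pl\"ucker identity of Theorem~\ref{P-C1}. One small correction of detail: because \eqref{eq:lkp} involves only the two differences $\overline{w}-\widehat{w}$ and $\overline{w}-\widetilde{w}$ together with their shifts, the paper needs just the two bilinear equations $\mathcal{H}_1$, $\mathcal{H}_2$ attached to the bar-hat and bar-tilde pairs rather than the three you predict (the third, $\mathcal{H}_3$, holds but is a consequence recorded only in the subsequent remark), and your final step invoking the $2^N$-term splitting of Lemma~\ref{L-5} is not needed for this lemma — it belongs to the following theorem on Hirota's form.
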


\begin{proof}
Introduce the bilinear equations
\begin{subequations}\label{bilinear-LKP}
\begin{align}
\mathcal{H}_1 &\equiv \chi_{\varepsilon,-\gamma}(\xi+(N+1)\gamma)\overline{f}\widehat{f}+\overline{g}\widehat{f}-\widehat{g}\overline{f}
-\Phi_{\varepsilon}(-\gamma)f\widehat{\overline{f}}=0,\\
\mathcal{H}_2 &\equiv\chi_{\delta,-\gamma}(\xi+(N+1)\gamma)\widetilde{f}\overline{f}+\overline{g}\widetilde{f}-\widetilde{g}\overline{f}
-\Phi_{\delta}(-\gamma)f\widetilde{\overline{f}}=0.
\end{align}
\end{subequations}
Using \eqref{5.2},
 the lpKP equation can be rewritten as
\begin{equation}
\left(\frac{\mathcal{H}_1+\Phi_{\varepsilon}(-\gamma)f\widehat{\overline{f}}}{\overline{f}\widehat{f}}
\right)^{\widetilde{\,}} \!\!\!
\left(\frac{\mathcal{H}_2+\Phi_{\delta}(-\gamma)f\widetilde{\overline{f}}}{\overline{f}\widetilde{f}}\right)
\!\!=\!\!\left(\frac{\mathcal{H}_1+\Phi_{\varepsilon}(-\gamma)f\widehat{\overline{f}}}{\overline{f}\widehat{f}}\right) \!\!\!
\left(\frac{\mathcal{H}_2+\Phi_{\delta}(-\gamma)f\widetilde{\overline{f}}}
{\overline{f}\widetilde{f}}\right)^{\widehat{\,}}.
\end{equation}
Thus we only need to prove that $f$ and $g$ satisfy $\mathcal{H}_1=0$, and $\mathcal{H}_2=0$.
The functions  $\phi_i$ defined in \eqref{5.4}
obey the same shift relations as in Eqs.\eqref{eq:varphi-shift-1}
which means the shift relations of $f, g$ in tilde and hat directions are the same as those of the lpKdV case.
Below we list out necessary shift relations in bar direction
\begin{align*}
\underline{f}&=\sigma(\underline{\xi})|\underline{\boldsymbol{\phi}}, ~ \widehat{N-2}|,  &\,
&\underline{g}=\sigma(\underline{\xi})|\underline{\boldsymbol{\phi}},~ \widehat{N-3}, ~ \boldsymbol{\phi}(N-1)|,\\
\undertilde{\underline{f}}&=\sigma(\undertilde{\underline{\xi}})\frac{\Big|\underline{\boldsymbol{\phi}},~ \widehat{N-3},~ \undertilde{\boldsymbol{\phi}}(N-2)\Big|}
{(-1)^{N-1}\prod_{j=-1}^{N-3}\chi_{\gamma,-\delta}(\xi+j\gamma)}, &\,
&\underset{\widehat{}}{\underline{f}}=\sigma(\underset{\widehat{}}{\underline{\xi}})
\frac{\Big|\underline{\boldsymbol{\phi}},~ \widehat{N-3},~ \underset{\widehat{}}{\boldsymbol{\phi}}(N-2)\Big|}
{(-1)^{N-1}\prod_{j=-1}^{N-3}\chi_{\gamma,-\varepsilon}(\xi+j\gamma)}.
\end{align*}
Similar to the proof of Theorem \ref{th-h1}, $\mathcal{H}_1=0$ and $\mathcal{H}_2=0$ can be proved  in down-hat-bar-shifted and down-tilde-bar-shifted version, respectively.
We skip the details.
\end{proof}

\begin{sloppypar}
\begin{remark}
Since the lpKP equation \eqref{eq:lkp}  is 4-dimensionally consistent \cite{ABS-2012},
apart from the bilinear equations $\mathcal{H}_1$ and $\mathcal{H}_2$, the following holds,
\begin{equation}
\mathcal{H}_3\equiv \chi_{\delta,-\varepsilon}(\widehat{\xi}+N\gamma)\widetilde{f}\widehat{f}+\widehat{g}\widetilde{f}-\widetilde{g}\widehat{f}
-\Phi_{\delta}(-\varepsilon)f\widetilde{\widehat{f}}.
\end{equation}
It is easy to verify that $\mathcal{H}_1/\left(\overline{f}\widehat{f}\right)-\mathcal{H}_2\left(\overline{f}\widetilde{f}\right)+\mathcal{H}_3/\left(\widehat{f}\widetilde{f}\right) =0$
and this yields the Hirota-Miwa equation
\begin{equation}\label{eq:HM}
\Phi_{\varepsilon}(-\gamma)\widetilde{f} ~\widehat{\overline{f}}+\Phi_{\gamma}(-\delta)\widehat{f} ~\widetilde{\overline{f}}
+\Phi_{\delta}(-\varepsilon)\overline{f}\widehat{\widetilde{f}}=0.
\end{equation}
From the identity \eqref{eq:add-5} one can immediately find that $\sigma(\xi)$ is a solution of the above equation.
\end{remark}
\end{sloppypar}

Let $J$ be a subset of $S=\{1,2,\cdots,N\}$, $\psi_j=\Phi_{\xi}(k_j)\rho_{n,m,h}^-(k_j)$  when $j\in J$
and $\psi_j=\Phi_{\xi}(l_j) \rho_{n,m,h}^-(l_j)$ when $j\in S\backslash J$, by $f_J$ we denote  $\sigma(\xi)|\widehat{N-1}|$ generated by
\begin{equation}\label{vphi-J-LKP}
\boldsymbol{\psi}=(\psi_1, \psi_2, \cdots, \psi_N)^T.
\end{equation}
 According to Lemma \ref{H1-L-2}, one can get
\begin{align}
f_{J}^{}=&(-1)^{\frac{N(N-1)}{2}}
\sigma(\xi+\sum_{i\in J }k_{i}+\sum_{j\in S\backslash J}l_j)
\frac{\prod_{i\in J \atop j\in S\backslash J}\sigma(k_i-l_j)\mathrm{sgn}[j-i]}
{\left(\prod_{i\in J}\sigma^N(k_{i})\right)\left(\prod_{ j\in S\backslash J}\sigma^N(l_{j})\right)} \nonumber\\
 & \times \left(\prod_{i<j\in J}\sigma(k_i-k_j) \right)
 \left(\prod_{i<j\in S\backslash J}\sigma(l_{i}-l_j)\right)
\prod_{i\in J}\rho^-_{n,m,h}(k_i)\prod_{j\in S\backslash J}\rho^-_{n,m,h}(l_j),
\end{align}
and
\begin{equation}\label{g-LKP}
f_{\varnothing}
=(-1)^{\frac{N(N-1)}{2}}
\sigma(\xi+\sum_{j\in S}l_j)
\frac{\prod_{i<j \in S}\sigma(l_i-l_j)}
{\prod_{ j\in S}\sigma^N(l_{j})} \prod_{j=1}^N\rho^-_{n,m,h}(l_j).
\end{equation}

After a shift  $f^{\natural}(\xi)=f(\xi-\sum_{j=1}^N l_j)$, we can get the $\tau$ function in Hirota's form.

\begin{theorem}
For the function $f$ in Casoratian form \eqref{slt:LKPCas} and $f_{\varnothing}$ ,
\begin{equation}\label{taup-LKP}
\tau=\frac{f^{\natural}}{f_{\varnothing}^{\natural}}, \qquad
 p= \frac{\left(\zeta(\xi)+N\zeta(\gamma)-\zeta(\xi+N\gamma-\sum_{j=1}^N l_j)\right)f^{\natural}-g^{\natural}}{f_{\varnothing}^{\natural}}
\end{equation}
solve the bilinear equations
\begin{subequations}\label{bilinear-LKP-2}
\begin{align}
\mathcal{H}'_1 \equiv &\chi_{\varepsilon,-\gamma}(\overline{\xi})\left(\overline{\tau}\widehat{\tau}-\tau\widehat{\overline{\tau}}\right)
+\widehat{p}\overline{\tau}-\overline{p}\widehat{\tau}
=0,\label{bilinear-LKP-a}\\
\mathcal{H}'_2 \equiv&\chi_{\delta,-\gamma}(\overline{\xi})\left(\widetilde{\tau}\overline{\tau}-\tau\widetilde{\overline{\tau}}\right)
+\widetilde{p}\overline{\tau}-\overline{p}\widetilde{\tau}
=0,\label{bilinear-LKP-b}
\end{align}
\end{subequations}
and $\tau$ is written in Hirota's form as
\begin{equation}\label{Hirota-LKP}
\begin{split}
\tau=&\sum_{J \subset S}\tau_J\\
=&\sum_{J \subset S} \frac{\sigma(\xi+\sum_{i\in J}(k_i-l_i))}{\sigma(\xi)\prod_{i\in J}\sigma(k_i-l_i)}
\left(\prod_{i<j\in J}\frac{\sigma(k_i-k_j)\sigma(l_i- l_j)}{\sigma(k_i- l_j)\sigma(l_i- k_j)}\right) \prod_{i\in J}\rho_{n,m,h}(k_i),
\end{split}
\end{equation}
and
\begin{equation}\label{p-LKP}
\begin{split}
p=&\sum_{J\subset S}\left(\zeta(\xi)-\zeta\Big(\xi+\sum_{i\in J}(k_i-l_i)\Big)+\sum_{i\in J}\zeta(k_i)+\sum_{j\in S\backslash J}\zeta(l_j)\right)\tau_J,
\end{split}
\end{equation}
where
\begin{equation*}
\rho_{n,m,h}(k_i)=\left(\frac{\sigma(k_i-\delta)\sigma(l_i)}{\sigma(k_i)\sigma(l_i-\delta)}\right)^n\left(\frac{\sigma(k_i-\varepsilon)\sigma(l_i)}{\sigma(k_i)\sigma(l_i-\varepsilon)}\right)^m
\left(\frac{\sigma(k_i-\gamma)\sigma(l_i)}{\sigma(k_i)\sigma(l_i-\gamma)}\right)^h\rho_{0,0,0}(k_i).
\end{equation*}
\end{theorem}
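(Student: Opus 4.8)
The plan is to follow the route of Theorem~\ref{th-h1} for the lpKdV case, promoting the Casoratian bilinear identities $\mathcal{H}_1=0$ and $\mathcal{H}_2=0$ of \eqref{bilinear-LKP}, already established in the preceding lemma, into the $\tau$--$p$ system \eqref{bilinear-LKP-2}. The bridge is the shift $\xi\to\xi-\sum_{j=1}^N l_j$ (the $\natural$ operation) together with the factorizations
\begin{equation*}
f^{\natural}=\tau\, f_{\varnothing}^{\natural},\qquad
\Bigl(\zeta(\xi)+N\zeta(\gamma)-\zeta(\xi+N\gamma-\sum_{j=1}^N l_j)\Bigr)f^{\natural}-g^{\natural}=p\, f_{\varnothing}^{\natural},
\end{equation*}
which are the definitions \eqref{taup-LKP}. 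First I would substitute these into the $\natural$-shifted $\mathcal{H}_1$ and $\mathcal{H}_2$ and divide through by $\overline{f_{\varnothing}^{\natural}}\,\widehat{f_{\varnothing}^{\natural}}$ and $\overline{f_{\varnothing}^{\natural}}\,\widetilde{f_{\varnothing}^{\natural}}$ respectively, so that each becomes the left-hand side of \eqref{bilinear-LKP-a} or \eqref{bilinear-LKP-b} up to the inhomogeneous $\Phi$-terms.

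Second, the $\Phi_{\varepsilon}(-\gamma)$ and $\Phi_{\delta}(-\gamma)$ coefficients must be absorbed. Because $f_{\varnothing}^{\natural}$ is, by \eqref{g-LKP}, a product of $\sigma(\xi)$ with the plane-wave factors $\rho^-_{n,m,h}(l_j)$ of \eqref{rho+-}, its ratios of shifts in the $\widetilde{\phantom{f}}$, $\widehat{\phantom{f}}$ and $\overline{\phantom{f}}$ directions reduce to products of $\Phi$ functions times ratios of the $\sigma(\xi)$ factor via \eqref{Phi} and \eqref{eq:add-5}; this is the same mechanism that makes $\sigma(\xi)$ a solution of the Hirota--Miwa equation \eqref{eq:HM}. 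Using the lpKP analogues of the addition identities \eqref{eq:add-8}, the $\Phi$-terms recombine with the $\chi$-coefficients to produce exactly $\chi_{\varepsilon,-\gamma}(\overline{\xi})$ and $\chi_{\delta,-\gamma}(\overline{\xi})$, completing the passage to \eqref{bilinear-LKP-2}.

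Third, for the Hirota form I would expand $f=\sigma(\xi)|\widehat{N-1}|$ into its $2^N$ constituents by splitting each entry \eqref{5.4} into its $k_i$- and $l_i$-branches, indexed by subsets $J\subset S$, so that $f=\sum_{J\subset S}f_J$; the $f_J$ and $f_{\varnothing}$ of \eqref{g-LKP} are the ones recorded just above the theorem. Forming $\tau_J=f_J^{\natural}/f_{\varnothing}^{\natural}$ and cancelling the common $\sigma^N$ denominators against the $\prod\rho^-$ factors yields the coefficient $\sigma(k_i-k_j)\sigma(l_i-l_j)/\bigl(\sigma(k_i-l_j)\sigma(l_i-k_j)\bigr)$ of \eqref{Hirota-LKP}, the $\mathrm{sgn}[j-i]$ factors cancelling in pairs exactly as in the lpKdV computation. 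The expression \eqref{p-LKP} for $p$ then follows by applying the second identity of Lemma~\ref{H1-L-2}, which writes the $|\widehat{N-2},N|$-type determinant $g_J$ as a $\zeta$-prefactor times $f_J$, and taking the same $\natural$-ratio.

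The main obstacle I expect is the second step: tracking the shift-ratios of $f_{\varnothing}^{\natural}$ across three independent lattice directions and verifying that the $\Phi$- and $\chi$-factors reorganize into the clean coefficients of \eqref{bilinear-LKP-2}. The quasi-periodicity signs of $\sigma$ from \eqref{periodicity} must be matched carefully, and one must confirm that the $(N+1)\gamma$ arguments in \eqref{bilinear-LKP} collapse to the $\overline{\xi}=\xi+\gamma$ arguments after the $\natural$-shift absorbs the $N\gamma$ coming from the $(E^3)^j$ structure of the Casoratian; everything else is a determinant expansion already licensed by Lemma~\ref{H1-L-2} and the Pl\"ucker relation of Theorem~\ref{P-C1}.
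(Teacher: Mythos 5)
Your proposal is correct and takes essentially the same route as the paper's own (sketched) proof: shift the already-established Casoratian system \eqref{bilinear-LKP} by $\xi\to\xi-\sum_{j=1}^N l_j$, substitute the factorizations $f^{\natural}=\tau f_{\varnothing}^{\natural}$ and $p\,f_{\varnothing}^{\natural}=\bigl(\zeta(\xi)+N\zeta(\gamma)-\zeta(\xi+N\gamma-\sum_{j}l_j)\bigr)f^{\natural}-g^{\natural}$, absorb the $\Phi_{\varepsilon}(-\gamma)$ and $\Phi_{\delta}(-\gamma)$ terms through the shift-ratios of $f_{\varnothing}^{\natural}$ and the identity \eqref{eq:add-8}, and read off the Hirota form from the subset expansion $\tau_J=f_J^{\natural}/f_{\varnothing}^{\natural}$ after redefining $\rho_{0,0,0}(k_i)$, with $p$ via the second identity of Lemma \ref{H1-L-2}. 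One minor correction of emphasis: the collapse of the $\chi_{\cdot,\cdot}(\xi+(N+1)\gamma)$ arguments to $\chi_{\cdot,\cdot}(\overline{\xi})$ is produced not by the $\natural$-shift alone but by cancellation against the $\zeta$-prefactors built into the definition of $p$, exactly as in the lpKdV computation of Theorem \ref{th-h1}.
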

\begin{proof}
The proof  is similar to the  lpKdV and KP cases \cite{22LxZhang}, so we just briefly sketch the proof.
We implement a shift $\xi \to \xi-\sum_{i=1}^{N}l_i$ in Eqs.\eqref{bilinear-LKP}
and express it in terms of $\tau$ and $p$ defined in \eqref{taup-LKP}.
Then,
making use of identity \eqref{eq:add-8}, we can obtain  the bilinear equations \eqref{bilinear-LKP-2}.

Next, the general term $\tau_J$ in $\tau$ is
  \begin{equation*}
\begin{split}
\frac{f_{J}^{\natural}}{f_{\varnothing}^{\natural}}=& \frac{\sigma(\xi+\sum_{i\in J}(k_i-l_i))}{\sigma(\xi)}
\left(\prod_{i<j\in J}\frac{\sigma(k_i-k_j)}{\sigma(l_i- l_j)}\right)\left(\prod_{i\in J}\frac{\sigma^N(l_i)}{\sigma^N(k_i)}
\prod_{j\in S\setminus J}\frac{\sigma(k_{i}- l_j)}{\sigma(l_{i}-l_j)}\right)\\
\times &\prod_{i\in J}\left(\frac{\sigma(k_i-\delta)\sigma(l_i)}{\sigma(k_i)\sigma(l_i-\delta)}\right)^n\left(\frac{\sigma(k_i-\varepsilon)\sigma(l_i)}{\sigma(k_i)\sigma(l_i-\varepsilon)}\right)^m
\left(\frac{\sigma(k_i-\gamma)\sigma(l_i)}{\sigma(k_i)\sigma(l_i-\gamma)}\right)^h.
\end{split}
\end{equation*}
After redefining
\begin{equation*}
\rho_{0,0,0}(k_i)=\sigma(k_i-l_i)\frac{\sigma^N(l_{i})}{\sigma^N(k_{i})}
\prod_{j\in S\atop j\neq i}\frac{\sigma(k_{i}- l_j)}{\sigma(l_{i}-l_j)},
\end{equation*}
we find that $\tau$ takes the explicit Hirota's form \eqref{Hirota-LKP}.
$p$ can be obtained in a similar way.

\end{proof}

\begin{remark}
The $\tau$ functions \eqref{Hirota-LKP} and \eqref{tau-KP} share the same vertex operator
\eqref{eq:vertexkp} in light of Miwa's coordinates
\begin{equation}
t_{j}=\frac{\delta^{j}n+\varepsilon^{j}m+\gamma^{j}h}{j}.
\end{equation}
In addition, due to the 4-dimensional consistency of the lpKP equation, apart from the bilinear equations
\eqref{bilinear-LKP-2}, we also have
\begin{equation}
  \mathcal{H}'_3\equiv \chi_{\delta,-\varepsilon}(\widehat{\xi})\left(\widetilde{\tau}\widehat{\tau}-\tau\widetilde{\widehat{\tau}}\right)
  +\widetilde{p}\widehat{\tau}-\widehat{p}\widetilde{\tau}=0.
\end{equation}
It follows from the identity $\mathcal{H}'_1/\left(\overline{\tau}\widehat{\tau}\right)-\mathcal{H}'_2\left(\overline{\tau}\widetilde{\tau}\right)+\mathcal{H}'_3/\left(\widehat{\tau}\widetilde{\tau}\right) =0$
that
\begin{equation}
\chi_{\varepsilon,-\gamma}(\overline{\xi})\widetilde{\tau}\widehat{\overline{\tau}}
+\chi_{\gamma,-\delta}(\widetilde{\xi})\widehat{\tau}\widetilde{\overline{\tau}}
+\chi_{\delta,-\varepsilon}(\widehat{\xi})\overline{\tau}\widehat{\widetilde{\tau}}=0,
\end{equation}
which indicates that $\sigma(\xi)\tau$ is the solution to the Hirota-Miwa equation \eqref{eq:HM}.
\end{remark}

\section{Conclusions}\label{sec-6}

In this chapter we have  reviewed recent development in the Hirota bilinear method on elliptic solitons
of the KdV equation and KP equation, including bilinear calculations involved with  the Lam\'e type PWFs,
expressions of $\tau$ functions and the generating vertex operators.
More details can be found in \cite{22LxZhang}.
Elliptic solitons composed of discrete analogue of the Lam\'e type PWFs
can be constructed using Cauchy matrix approach \cite{IMRN2010} and
an elliptic direct linearisation scheme \cite{22NSZ}.
In this chapter, we developed the Hirota bilinear method on the elliptic solitons for
discrete integrable systems.
For the lpKdV equation and lpKP equation,
we presented their bilinear forms, derived $\tau$ functions of elliptic solitons in Casoratian form
and Hirota's form.
In light of Miwa's correspondence between discrete and continuous independent variables,
we can show that the elliptic $\tau$ functions of the lpKdV equation
and the lpKP equation share the same vertex operators with the continuous
KdV hierarchy and the KP hierarchy, respectively.
This fact will lead us to the further research of the elliptic $\tau$ functions along the line of Date-Jimbo-Miwa's work \cite{DJM-JPSJ-82,DJM-JPSJ-83}.

\vskip 30pt

\subsection*{Acknowledgments}
X. Li was supported  by the Foundation of Jiangsu Normal University grant no. 22XFRS031.
D.J. Zhang was supported by the NSFC grant (nos. 12271334, 12126352).

\end{document}